\newcommand{\commentout}[1]{}
\newcommand{\ba}{\boldsymbol{a}}
\newcommand{\bA}{\boldsymbol{A}}
\newcommand{\bB}{\boldsymbol{B}}
\newcommand{\bC}{\boldsymbol{C}}
\newcommand{\be}{\boldsymbol{e}}
\newcommand{\bg}{\boldsymbol{g}}
\newcommand{\bM}{\boldsymbol{M}}
\newcommand{\bq}{\boldsymbol{q}}
\newcommand{\bQ}{\boldsymbol{Q}}
\newcommand{\bR}{\boldsymbol{R}}
\newcommand{\bS}{\boldsymbol{S}}
\newcommand{\bu}{\boldsymbol{u}}
\newcommand{\bU}{\boldsymbol{U}}
\newcommand{\bv}{\boldsymbol{v}}
\newcommand{\bV}{\boldsymbol{V}}
\newcommand{\bw}{\boldsymbol{w}}
\newcommand{\bW}{\boldsymbol{W}}
\newcommand{\bx}{\boldsymbol{x}}
\newcommand{\bX}{\boldsymbol{X}}
\newcommand{\by}{\boldsymbol{y}}
\newcommand{\bY}{\boldsymbol{Y}}
\newcommand{\bz}{\boldsymbol{z}}
\newcommand{\bZ}{\boldsymbol{Z}}
\newcommand{\bgamma}{\boldsymbol{\gamma}}
\newcommand{\btheta}{\boldsymbol{\theta}}
\newcommand{\bxi}{\boldsymbol{\xi}}
\newcommand{\bPhi}{\boldsymbol{\Phi}}
\newcommand{\bPsi}{\boldsymbol{\Psi}}
\newcommand{\bXi}{\boldsymbol{\Xi}}
\newcommand{\bSigma}{\boldsymbol{\Sigma}}
\newcommand{\Cop}{C_{\rm op}}
\def\tauv{\boldsymbol{\tau}}
\newcommand{\SigmaX}{\boldsymbol{\Sigma}_{\boldsymbol{x}}}
\newcommand{\SigmaE}{\boldsymbol{\Sigma}_{\boldsymbol{e}}}
\newcommand{\SigmaY}{\boldsymbol{\Sigma}_{\boldsymbol{y}}}
\newcommand{\SigmaA}{\boldsymbol{\Sigma}_{\boldsymbol{a}}}
\newcommand{\0}{\boldsymbol{0}}
\newcommand{\id}{\boldsymbol{I}}
\renewcommand{\i}{\mathsf{i}}
\newcommand{\calC}{\mathcal{C}}
\newcommand{\calE}{\mathcal{E}}
\newcommand{\md}{{\rm md}}
\renewcommand{\subset}{\subseteq}
\renewcommand{\hat}{\widehat}
\renewcommand{\tilde}{\widetilde}
\renewcommand{\epsilon}{\varepsilon}
\renewcommand{\Re}{{\rm Re}}
\renewcommand{\Im}{{\rm Im}}
\def\sign{\mathrm{sign}}
\def\csign{\mathrm{sign}_\C}
\def\<{\big\langle}
\def\>{\big\rangle}
\def\({\Big(}
\def\){\Big)}
\def\calA{\mathcal{A}}
\def\C{\mathbb{C}}
\def\E{\mathbb{E}}
\def\F{\mathbb{F}}
\def\N{\mathbb{N}}
\def\calN{\mathcal{N}} 
\def\O{\mathbb{O}}
\def\P{\mathbb{P}}
\def\R{\mathbb{R}}
\def\calU{\mathcal{U}}
\def\T{\mathbb{T}}
\def\Z{\mathbb{Z}}
\newcommand{\tr}{\mathrm{tr}}
\newcommand{\SIGMA}{\boldsymbol{\Sigma}}
\newcommand{\SIGMADITHuniform}{\hat{\bSigma}^\square}
\newcommand{\SIGMADITHtriangular}{\hat{\bSigma}^\triangle}
\newcommand{\THETA}{\boldsymbol{\Theta}}
\newcommand{\andspace}{\quad{\rm and}\quad}
\newcommand{\pnorm}[2]{\left\| #1 \right\|_{#2}}
\newcommand{\curly}[1]{\left\{ #1 \right\} }
\DeclareMathOperator{\dist}{dist}
\DeclareMathOperator{\diag}{diag}
\DeclareMathOperator{\rank}{rank}
\newtheorem{theorem}{Theorem}[section]
\newtheorem{lemma}[theorem]{Lemma}
\newtheorem{corollary}[theorem]{Corollary}
\theoremstyle{definition}
\newtheorem{remark}[theorem]{Remark}
\newtheorem{assumption}[theorem]{Assumption}
\numberwithin{equation}{section}
\begin{document}

\title{Subspace and DOA estimation under coarse quantization}

\author[$\star$]{Sjoerd Dirksen}
\author[$\circ$]{Weilin Li}
\author[$\dagger$,$\ddagger$]{Johannes Maly}
\affil[$\star$]{\normalsize Mathematical Institute, Utrecht University, The Netherlands}
\affil[$\circ$]{Department of Mathematics, City College of New York, USA}
\affil[$\dagger$]{Department of Mathematics, LMU Munich, Germany}
\affil[$\ddagger$]{Munich Center for Machine Learning (MCML)}

\date{}
\maketitle

\begin{abstract}
\noindent We study direction-of-arrival (DOA) estimation from coarsely quantized data. {We focus on a two-step approach which first estimates the signal subspace via covariance estimation and then extracts DOA angles by the ESPRIT algorithm. In particular,} we analyze two stochastic quantization schemes which use dithering: a one-bit quantizer combined with rectangular dither and a multi-bit quantizer with triangular dither. For each quantizer, we derive rigorous high probability bounds for the distances between the true and estimated signal subspaces and DOA angles. Using our analysis, we identify scenarios in which subspace and DOA estimation via {triangular dithering} qualitatively outperforms rectangular dithering. We verify in numerical simulations that our estimates are optimal in their dependence on the smallest non-zero eigenvalue of the target matrix. {The resulting subspace estimation guarantees are equally applicable} in the analysis of other spectral estimation algorithms and related problems.
\end{abstract}

\section{Introduction}
\label{sec:Introduction}

The question of how to estimate the direction-of-arrivals (DOAs) {of incoming signals} is decades old \cite{paulraj1993subspace}, but still of great interest to modern applications, e.g., wireless telecommunication \cite{ruan2022doa}. Roughly speaking, it is a nonlinear inverse problem of detecting the angles $\btheta=(\theta_1,\dots,\theta_s)$ between a receiver and a collection of $s$ objects, given access to $p$ many Fourier samples collected per time, over a period of $n$ times. MUSIC \cite{schmidt1986multiple} and ESPRIT \cite{kailath1989esprit} are two foundational algorithms that are widely used for DOA (and other spectral estimation problems), and are still considered state of the art \cite{li2022stability}.

The first step in these algorithms is to approximate the underlying signal subspace: Given unknown signals $\bx_1,\dots,\bx_n \in \C^p$ that are concentrated around an $s$-dimensional subspace $\bU \subset \C^p$, e.g., complex $p$-dimensional Gaussians with covariance matrix of rank $s$ and leading eigenspace $\bU$, 
estimate $\bU$ from noisy observations
\begin{align}
\label{eq:NoisySamples}
    \by_k = \bx_k + \be_k, \qquad k \in [n],
\end{align}
where the vectors $\be_k \in \C^p$ model random noise. In this paper, we concentrate on the setting where the entries of $\be_k$ are i.i.d. samples from a fixed subgaussian distribution with mean zero and variance $\nu^2$, e.g., a complex Gaussian distribution.  

Defining the observation matrix $\bY_n = [\by_1 \cdots \by_n] \in \C^{p\times n}$, a natural estimator $\widehat\bU$ of $\bU$ is obtained by computing the left singular vectors of a truncated SVD of $\bY_n$, or equivalently, the leading eigenvectors of $\frac 1n\bY_n\bY_n^*$\footnote{Note that $\frac{1}{n} \bY_n\bY_n^*$ is the sample covariance matrix of the samples $\by_1,\dots,\by_n$.}. The estimator $\hat\bU$ is consistent because the calculation
$$
\E (\by_k\by_k^*) = \E (\bx_k\bx_k^*) + \nu^2 \id_p
$$
shows that in expectation, the noise shifts the eigenvalues of $\frac 1n\bY_n\bY_n^*$, but not {their eigenspaces}.\footnote{For heterogeneous noise, e.g., if the entries of $\be_k$ are drawn from different distributions, $\hat\bU$ would no longer be a consistent estimator for $\bU$ and a different estimator is required. We stick with homogeneous noise since it is a natural starting point for quantized DOA estimation. \label{footnote:HeteroNoise}} In the finite sample setting, the performance of $\widehat\bU$ can be measured in terms of a suitable subspace angle distance $\dist(\widehat\bU,\bU)$, see \eqref{eqn:distSubONB}. Combining well-known results on covariance estimation \cite{koltchinskii2017concentration} with the Davis-Kahan theorem \cite{davis1970rotation} then leads to non-asymptotic error bounds for $\dist(\widehat\bU,\bU)$ in terms of the oversampling ratio $n/s$. 
Interestingly, if the noise $\be_k$ is drawn from certain distributions, a superior error bound for $\dist(\widehat\bU,\bU)$ was derived in \cite{cai2018rate}.\footnote{One intuitive and non-rigorous explanation for this phenomenon is that it is prohibitively unlikely for {generic} random noise to adversarially perturb a low dimensional subspace embedded in a high dimensional ambient space.} This improved rate was exploited in \cite{li2022stability}, which also showed how to relate $\dist(\widehat\bU,\bU)$ to the error between the true arrival angles $\btheta$ versus the estimated ones $\hat\btheta$ computed from ESPRIT or MUSIC.

Since applications of DOA estimation normally lie at the interface of analog and digital domain, an additional challenge is to take into account the impact of analog-to-digital conversion, i.e., quantization \cite{gray1998quantization} of observations. In this case one has only access to quantized observations $Q(\by_k)$ for $k\in [n]$, instead of the analog $\by_k$. {Note however that one is in general allowed} to design the {\it quantizer} $Q$, {taking into account potential hardware limitations}. We restrict ourselves here to \emph{memoryless scalar quantization}, i.e., $Q \colon \C \to \calA$ is a univariate function from $\C$ to a finite alphabet $\calA \subset \C$ that is applied entry-wise to vectors in $\C^p$. Especially in view of modern large-scale applications, there has been growing interest in \emph{coarse quantization schemes}, where the number of bits per scalar, $\log_2(|\calA|)$, is small.

A notable line of works (\cite{BaW02,yu2016doa,HuL19,LSWW19,wei2020gridless,zhou2020direction,sedighi2021performance,li2024rocs}) examines DOA estimation from samples that are quantized by the simplest coarse quantizer $Q(\cdot) = \sign(\cdot)$ or its complex-valued two-bit counterpart, often relying on the so-called arcsine law \cite{van1966spectrum}. {On the one hand, the $\mathrm{sign}$-quantizer is cheap to implement. On the other hand, it loses any scaling information and thus allows reliable recovery only under restrictive assumptions on the signals and the noise.} A second, more recent line of works \cite{PZXD24,YEES24} mitigates the shortcomings of this simple quantizing scheme by considering dithered sign-quantizers of the shape $Q(\cdot) = \sign(\cdot + \tau)$ where the dither $\tau$ is intentionally added, well-designed noise. Whereas \cite{PZXD24} examines Gaussian noise as dither, the authors of \cite{YEES24} follow the ideas in \cite{dirksen2022covariance} and use uniform dithering noise together with two-bit observations per sample. {One aspect of the present work is that quantizers with triangular dither seem to be natural for DOA estimation because they are} able to create favorable statistical properties for the {\it quantization noise} $\bx_k-Q(\by_k)$ \cite{gray1993dithered} that are necessary to access improved bounds for subspace estimation \cite{cai2018rate} and MUSIC/ESPRIT \cite{li2022stability}.
A more detailed review of related work on quantized DOA estimation and dithered quantizers is provided in Section \ref{sec:RelatedWork}.

\subsection{Contribution and Outline}

In this work, we analyze subspace identification and the performance of ESPRIT for spectral estimation from samples that are collected via a dithered quantizer. In particular,
\begin{enumerate}[(i)]
    \item we provide high-probability, non-asymptotic bounds for $\dist(\widehat\bU,\bU)$ when $\widehat\bU$ is obtained from quantized observations following the quantization model with uniformly distributed dither as proposed in \cite{dirksen2022covariance,YEES24}, see Theorem \ref{thm:subspacerectangle}. {Due to the characteristic shape of the dither density, we refer to this setting as \emph{rectangular dithering}, cf.\ Figure \ref{fig:Dither}.}
    \item we show that, for low measurement noise and higher bit-rates, \emph{triangular dithering} \cite{gray1993dithered} yields superior error bounds on $\dist(\widehat\bU,\bU)$, see Theorem \ref{thm:subspacerectangletriangle}.
    Here, the dither $\tau$ is a sum of two independent copies of a uniform random variable, and the name triangular dithering again stems from the characteristic shape of the dither density, cf.\ Figure \ref{fig:Dither}.
    {\textit{The superior performance of triangular dithering in subspace estimation is particularly surprising since it stands in stark contrast to recently derived covariance estimation bounds.} Indeed, for covariance estimation the operator norm error bounds under rectangular dithering \cite{dirksen2022covariance} and triangular dithering \cite{chen2022quantizing} are comparable.} {As we demonstrate here, the latter method generates quantization noise that has favorable statistical properties for subspace estimation.} {For the herein considered subspace estimation this highlights the necessity of following two different strategies in the two dithering regimes in order to obtain optimal bounds.}
    \item we combine the obtained subspace identification bounds with the results in \cite{li2022stability} to derive novel performance bounds for ESPRIT, see \cref{thm:espritrectangular,cor:espritrectangular,thm:esprittriangular,cor:esprittriangular}.
    \item we provide extensive numerical simulations which verify the theoretical error bounds and demonstrate that they are sharp, see Section \ref{sec:Numerics}.
    \item in deriving the above results, we strengthen the stochastic subspace perturbation estimate from \cite[Theorem 3]{cai2018rate} {to a high-probability bound under relaxed assumptions, see \cref{thm:CZvarHP}.}
\end{enumerate}

To the best of our knowledge this work is the first to consider triangular dithering in the context of DOA estimation. It furthermore provides the first non-asymptotic error bounds on subspace/DOA estimation from dithered quantized samples that detail the influence of the conditioning number of the signal covariance matrix, cf.\ our discussion in Section \ref{sec:RelatedWork}.
Note that the results in (i), (ii), and (v) are of independent interest. Other applications that include subspace identification as a core component encompass latent factor analysis, matrix denoising, and spectral clustering, see \cite{chen2020spectral}. Furthermore, we emphasize that we restrict ourselves to ESPRIT for the sake of conciseness. Combining the results in (i) and (ii) with \cite{li2022stability} also provides performance guarantees for MUSIC when applied to spectral estimation from quantized samples. Table \ref{table:Results} provides an overview over the core results.

\begin{table}[]
\begin{center}
\caption{Overview over Results}
\label{table:Results}
\begin{tabular}{||c | c || c | c||} 
 \hline
 Result & Reference & Result & Reference \\ [0.5ex] 
 \hline\hline
 Subspace estimation (rectangular) & Thm.\ \ref{thm:subspacerectangle} & DOA estimation (rectangular) & Thm.\ \ref{thm:espritrectangular} \\ 
 \hline
 Subspace estimation (triangular) & Thm.\ \ref{thm:subspacerectangletriangle} & DOA estimation (triangular) & Thm.\ \ref{thm:esprittriangular} \\
 \hline
 \begin{tabular}{@{}c@{}} Statistical properties \\ of quantization noise \end{tabular} & Lem.\ \ref{lem:triquannoise} & \begin{tabular}{@{}c@{}}DOA estimation (rectangular) \\ well-separated case \end{tabular} & Cor.\ \ref{cor:espritrectangular} \\
 \hline
 \begin{tabular}{@{}c@{}} Strengthened version \\ of \cite[Theorem 3]{cai2018rate} \end{tabular} & Thm.\ \ref{thm:CZvarHP} & \begin{tabular}{@{}c@{}}DOA estimation (triangular) \\ well-separated case \end{tabular} & Cor.\ \ref{cor:esprittriangular} \\
 \hline
\end{tabular}
\end{center}
\end{table}

The paper is organized as follows. In Sections \ref{sec:RelatedWork} and \ref{sec:Notation} we discuss related work in more detail and introduce the notation that is used in the remaining paper.
In Section \ref{sec:MainResults} we then present our main results on subspace identification from quantized samples. The developed theory is applied to ESPRIT and DOA estimation in Section \ref{sec:ESPRITforDOA} and numerical experiments supporting our claims are provided in Section \ref{sec:Numerics}. {The proofs can be found in Section \ref{sec:Proofs}.}

\subsection{Related work}
\label{sec:RelatedWork}

While a thorough review of the by now rich literature on DOA estimation from quantized samples (\cite{BaW02,yu2016doa,HuL19,LSWW19,wei2020gridless,zhou2020direction,sedighi2021performance,li2024rocs}) is beyond the scope of this paper, let us describe some existing approaches. In \cite{BaW02}, DOA estimation from one-bit samples is performed by one-bit covariance estimation via the arcsine law. Due to near-linearity of the arcsine function close to the origin, in \cite{HuL19} it is argued that for high SNRs the sample covariance matrix of the one-bit samples can be used without further adaption. The authors of \cite{huang2019direction} propose to first recover the unquantized measurements and then apply MUSIC. In \cite{yoffe2019direction}, a maximum likelihood-based approach is examined.
On the one hand, the simplicity of the sign-quantizer is appealing from a hardware perspective. On the other hand, it comes with strong limitations like scaling invariance. A second, more recent line of works (\cite{PZXD24,YEES24}) mitigates these shortcomings by considering dithered sign-quantizers of the form \eqref{eq:QuantizedSnapShots} below. Finally, in \cite{sedighi2021doa} the authors of \cite{sedighi2021performance} extend their ideas from one-bit to multi-bit quantizers.

Dithering, i.e., adding well-designed noise to a signal before quantization, has a long history in signal processing \cite{Rob62}. In particular, the last decade showed substantial progress in deriving rigorous non-asymptotic performance guarantees for signal reconstruction from one-bit measurements, see e.g. \cite{BJK15,Dir19} and the references therein. Concrete examples encompass one-bit matrix completion (e.g., \cite{CaZ13,DPB14,chen2022high,EYS23}), reconstructing a signal in an unlimited sampling framework with one-bit quantization (e.g., \cite{EMY22,EMY23}), and one-bit quadratic sensing problems such as phase retrieval (e.g., \cite{EYN23,EYS22}). Most relevant to this work is that only recently the first non-asymptotic (and near-minimax optimal) guarantees for estimating covariance matrices from one-bit samples with uniform dither have been derived \cite{dirksen2022covariance}. In \cite{yang2023plug}, the results of \cite{dirksen2022covariance} have been generalized to the complex domain and applied to massive MIMO; in \cite{dirksen2024tuning} a data-adaptive variant of the estimator in \cite{dirksen2022covariance} has been developed. A recent work \cite{chen2022high} modified the strategy of \cite{dirksen2022covariance} to cover heavy-tailed distributions by using truncation before quantizing. The work \cite{chen2023parameter} introduced the idea of using triangular rather than uniform dithering in the context of covariance estimation. 

Almost all previously mentioned studies on DOA estimation from coarsely quantized samples are of empirical nature, proposing algorithmic approaches to the problem and evaluating their performance in simulations. The only exceptions are \cite{sedighi2021performance}, \cite{YEES24}, and \cite{gunturk2022quantization}. The authors of \cite{sedighi2021performance} focus on one-bit DOA estimation via Sparse Linear Arrays. They provide conditions under which the identifiability of source DOAs from unquantized data is equivalent to the one of one-bit data. Furthermore, they provide a Cramér-Rao bound analysis and a MUSIC-based reconstruction approach with asymptotic error guarantees. The authors of \cite{YEES24} build upon the idea of \cite{xiao2020deepfpc} in which DOA estimation from undithered one-bit samples is performed by Learned Iterative Soft-Thresholding (LISTA), i.e., unfolding and training ISTA as a network. By adding a uniform dither to the quantization model and relying on the theoretical analysis of the corresponding quantized covariance estimator in \cite{yang2023plug}, they can derive performance guarantees for the one-bit LISTA approach from the results in \cite{chen2018theoretical}. The authors of \cite{gunturk2022quantization} used ESPRIT for the quantized single-snapshot DOA problem, where information is only collected at one time instance. This setting cannot be treated using statistical methods. Instead, they exploited analytic properties of the Fourier transform by using a two-bit beta-quantization method, which is very different from dithered quantization.

\subsection{Notation}
\label{sec:Notation}

Before presenting our main results, it will be convenient to discuss the notation that is used throughout this work. We write $[n] = \{ 1,...,n \}$ for $n\in \mathbb{N}$ and let $\F \in \{ \R, \C\}$ be a placeholder for any of the two fields. We use the notation $a \lesssim_{\alpha} b$ (resp.\ $\gtrsim_{\alpha}$) to abbreviate $a \le C_{\alpha}b$ (resp.\ $\ge$), for a constant $C_{\alpha} > 0$ depending only on $\alpha$. Similarly, we write $a \lesssim b$ if $a \le Cb$ for an absolute constant $C>0$. We write $a\simeq b$ if both $a\lesssim b$ and $b\lesssim a$ hold (with possibly different implicit constants). Whenever we use absolute constants $c,C > 0$, their values may vary from line to line. 

\textbf{Scalar valued functions.} We let scalar-valued functions act entry-wise on vectors and matrices. In particular, the real-valued sign function is given by
\begin{align*}{}
[\sign_\R(\bx)]_i = \begin{cases}
1 & \text{if } x_i\geq 0 \\
-1& \text{if } x_i<0,
\end{cases}
\end{align*}
for all $\bx\in \R^p$ and $i\in [p]$. We furthermore define the complex-valued sign-function by 
\begin{align*}
    \csign(\bz) = \sign_\R(\Re(\bz)) + \i\, \sign_\R(\Im(\bz)) \in \{ \pm 1 \pm \i \}^p, 
\end{align*}
for any $\bz \in \C^p$. Note that $\sign(\bx)_\R \ne \sign_\C(\bx)$, for $\bx \in \R^p$ (since for instance $\sign_\R(0)\not=\sign_\C(0)$). Whenever we make a statement regarding the space $\F^p$ where $\F \in \{\R,\C\}$, we use $\sign_\F$ to refer to the respective sign-function. We abbreviate the squared modulus of the sign function by $c_\F := |\sign_\F(\cdot)|^2$, i.e., $c_\R = 1$ and $c_\C = 2$.

\textbf{Matrix calculus.} We denote the identity in $\R^{p\times p}$ by $\id_p$, the all ones-matrix by $\boldsymbol{1}_p$, and the zero matrix by $\0_p$. If the dimension is clear from the context, we omit the subscript $p$. For $\bZ \in \F^{p\times p}$, we denote the operator norm by $\| \bZ \| = \sup_{\bu \in \mathbb{S}^{p-1}} \| \bZ\bu \|_2$ (i.e., it is the maximum singular value, where $\mathbb{S}^{p-1}$ is the set of vectors in $\F^p$ of unit norm), the entry-wise max-norm by $\| \bZ \|_{\infty} = \max_{i,j} |Z_{i,j}|$, and the $\ell^1$ entry-wise norm by $\|\bZ\|_1=\sum_{i,j} |Z_{i,j}|$.
The $\diag$-operator, when applied to a matrix, extracts the diagonal as a vector; when applied to a vector, it outputs the corresponding diagonal matrix. 
For $m \ge n$, we let $\O^{m\times n}$ be the set of $m\times n$ matrices over $\F$ whose columns are orthonormal.  
For a Hermitian matrix $\bA\in \C^{m\times m}$, we let $\lambda_1(\bA)\geq \lambda_2(\bA)\geq \cdots \geq \lambda_m(\bA)$ denote its eigenvalues in non-increasing order. We say $\bU\in \O^{m\times r}$ denotes an orthonormal basis for the leading eigenspace of $\bA$ if the $r$ columns of $\bU$ are eigenvectors of $\bA$ that correspond to the largest $r$ eigenvalues of $\bA$. If $\lambda_r(\bA)=\lambda_{r+1}(\bA)$, there is additional ambiguity in the choice $\bU$, and in this case, $\bU$ may correspond to any valid leading eigenspace of $\bA$. For a general matrix $\bB\in \C^{m\times n}$, we let $\sigma_k(\bB):=\sqrt{\lambda_k(\bB^*\bB)}$ be its $k$-th largest singular value. Its {$r$-th condition number is denoted by $\kappa_r(\bB):=\sigma_1(\bB)/\sigma_r(\bB) \in [1,\infty]$.} 
For Hermitian $\bW,\bZ\in \C^{p\times p}$ we write $\bW \preceq \bZ$ if $\bZ-\bW$ is positive semidefinite.

\textbf{Probability.} We denote a real Gaussian random vector with mean $\boldsymbol{\mu} \in \R^p$ and covariance matrix $\bSigma \in \R^{p\times p}$ by $\bx \sim \calN(\boldsymbol{\mu},\bSigma)$ and a circularly symmetric complex Gaussian random vector with covariance matrix $\bSigma \in \C^{p\times p}$ by $\bz \sim \calC\calN(\boldsymbol{0},\bSigma)$. If $\boldsymbol{\mu} = \boldsymbol{0}$ and $\bSigma = \id_p$, we call $\bx$ resp.\ $\bz$ (complex) standard normal. The subexponential ($\psi_1$-) and subgaussian ($\psi_2$-)norms of a random variable $X$ are defined by 
\begin{align*}
\pnorm{X}{\psi_1} = \inf \curly{ t>0 \colon \E \, e^{|X|/t } \le 2 }.
\end{align*}
and
\begin{align*}
\pnorm{X}{\psi_2} = \inf \curly{ t>0 \colon \E \, e^{|X|^2/t^2 } \le 2 }.
\end{align*}
A mean-zero random vector $\by$ in $\R^p$ is called $K$-subgaussian if 
$$\|\langle \by,\bx \rangle\|_{\psi_2} \leq K \| \langle \by,\bx\rangle \|_{L_2} = K (\E \langle \by,\bx\rangle^2)^\frac{1}{2}  \quad \mbox{ for all }  \bx \in \R^p.$$
If $\by$ is isotropic, i.e., $\E(\by\by^*)=\id_p$, then the latter is equivalent to  
$$\|\langle \by,\bx \rangle\|_{\psi_2} \leq K\|\bx\|_2 \quad \mbox{ for all }  \bx \in \R^p.$$
We call a random vector $\by \in \C^p$ $K$-subgaussian if both $\Re(\by)$ and $\Im(\by)$ are subgaussian. 
For $p \ge 1$, we denote by $\mathcal{U}_\R[a,b]^p$ the uniform distribution on the rectangle $[a,b]^p$ and by $\mathcal{U}_\C[a,b]^p$ the uniform distribution on the set $[a,b]^p + \i [a,b]^p$, i.e., $X \sim \mathcal{U}_\C[a,b]^p$ if and only if $\Re(X) \sim \mathcal{U}_\R[a,b]^p$ and $\Im(X) \sim \mathcal{U}_\R[a,b]^p$ are independent.
Finally, we will use that
\begin{align} \label{eq:Kadison}
    \mathbb{E}\bZ^*\mathbb{E}\bZ \preceq \mathbb{E}(\bZ^*\bZ) \quad \text{for any } \bZ\in \C^{p_1\times p_2},
\end{align}
which is immediate from 
$$\0 \preceq \mathbb{E}[(\bZ-\mathbb{E}\bZ)^*(\bZ-\mathbb{E}\bZ)].$$

\section{Subspace identification from quantized observations}
\label{sec:MainResults}

The first step of ESPRIT consists in identifying the dominant subspace of the sample distribution. We will thus initially focus on the subproblem of subspace identification from quantized samples.
Given the observation model in \eqref{eq:NoisySamples}, we will work with the following set of assumptions which is used in various applications.

\begin{assumption} \label{assump:main}
    Fix $n,p\in\N$.
    \begin{enumerate}
        \item[(i)] Let $\bx_1,\dots,\bx_n\in \F^p$ be deterministic (d) or stochastic (s) vectors. We then either 
        \begin{enumerate}
            \item[(d)] set $\SigmaX:=\frac 1 n \sum_{k=1}^n \bx_k \bx_k^*$ or
            \item[(s)] assume $\bx_1,\dots,\bx_n\overset{\text{i.i.d.}}{\sim} \bx$, where $\bx\in\F^p$ is $K$-subgaussian, and set $\SigmaX:=\E (\bx \bx^*)$.
        \end{enumerate}
        \item[(ii)] Let $\be_1,\dots,\be_n\overset{\text{i.i.d.}}{\sim} \be$, where $\be\in\F^p$ is $K$-subgaussian with {uncorrelated} entries that are mean-zero and have variance $\nu^2$. If $\F=\C$, we assume that the real and imaginary parts of the entries of $\be$ are independent.\footnote{Our assumption implies that $\SigmaE=\nu^2 \id$. This is an important property as it implies that the eigenspaces of $\SigmaX$ and $\SigmaY=\SigmaX+\SigmaE$ are identical, see also the remark in Footnote \ref{footnote:HeteroNoise}.}
        \item[(iii)] Let $\bx_1,\dots,\bx_n,\be_1,\dots,\be_n$ be independent. 
    \end{enumerate} 
\end{assumption}

We define $\SigmaY := \SigmaX + \SigmaE$ with $\SigmaE:= \E (\be\be^*)$.
Here, each $\bx_k$ represents the true empirical observation, $\be_k$ plays the role of (unknown) stochastic noise, and $\by_k$ represents the observed data before quantization. Assumption \ref{assump:main}(ii) implies that $\SigmaE=\nu^2 \id$. This is an important property as it implies that the eigenspaces of $\SigmaX$ and $\SigmaY=\SigmaX+\SigmaE$ are identical, see also the remark in Footnote \ref{footnote:HeteroNoise}.

To measure the distance between subspaces, we will rely on the sine-theta distance \cite{davis1970rotation} which is defined as follows. For two subspaces of the same dimension embedded in $\F^p$, let $\bU$ and $\bV$ be matrices whose columns form orthonormal bases for these spaces. We define $\sin(\bU,\bV)$ to be the diagonal matrix consisting of sine of the principle angles between $\bU$ and $\bV$. The sine-theta distance $\dist(\bU,\bV)$ is then defined as the sine of the largest principle angle between these spaces. It can also be expressed as  
\begin{equation}
\label{eqn:distSubONB}
\dist(\bU,\bV)= \|\sin(\bU,\bV)\|= \|\bU \bU^*-\bV\bV^*\|.
\end{equation}

\subsection{Rectangular dithering}
\label{sec:rectangular}

Let us begin our study with the quantization model considered in \cite{dirksen2022covariance,YEES24}, i.e., a sign-quantizer with uniform dithers. Given observations $\by_k\in \F^p$ as in \eqref{eq:NoisySamples}, we thus collect quantized samples $\bq_k^\square$ and $\dot \bq_k^\square$ where  
\begin{align} 
\label{eq:QuantizedSnapShots}
    &\left\{ \bq_k^\square, \dot \bq_k^\square \right\} 
    := \left\{ \sign_\F \big( \by_k + \tauv_k^\square \big), \, 
    \sign_\F \big(\by_k + \dot \tauv_k^\square \big) \right\}
\end{align}
and the {\it dithering vectors} $\tauv_k^\square$ and $\dot \tauv_k^\square$ are independently drawn from $\mathcal{U}_\F[-\lambda,\lambda]^p$, for $\lambda > 0$ to be determined later. In the case of real measurements, each entry of $\tauv_k^\square, \dot \tauv_k^\square$ is uniformly distributed in $[-\lambda,\lambda]$ and $\bq_k^\square, \dot \bq_k^\square \in \{\pm 1\}^p$, i.e., $\calA_\R=\{\pm 1\}$ is a one bit alphabet; in the complex case, the real and complex part of each entry of $\tauv_k^\square, \dot \tauv_k^\square$ are independently drawn from $\mathcal{U}[-\lambda,\lambda]$ and $\bq_k^\square, \dot \bq_k^\square \in \{\pm 1\pm \i \}^p$, i.e., $\calA_\C=\{\pm 1\pm \i\}$ is a two bit alphabet. Due to the shape of the probability density of the dither, we refer to this model as \emph{rectangular dithering} {and highlight related quantities with a $\square$-symbol in the superscript}.

\begin{remark}
\label{rem:ADC}
    {Whereas the quantization model in \eqref{eq:QuantizedSnapShots} requires four bits of information per complex-valued observation, we highlight that it can be implemented in hardware with two one-bit ADC units, see \cite[Figure 2]{yang2023plug}.}
\end{remark}

As mentioned in Section \ref{sec:Introduction}, a generic strategy for subspace identification is to approximate the covariance matrix of the samples $\SigmaY$ by the sample covariance matrix $\frac{1}{n} \bY_n\bY_n^*$ and then to use the following estimate, which is a straightforward consequence of the Davis-Kahan {sin-theta} theorem \cite{davis1970rotation}. 

\begin{lemma}
\label{lem:PSestviaCovEst}
    Let $\hat{\SIGMA}\in \mathbb{C}^{p\times p}$ be Hermitian. Let $\bU,\hat{\bU} \in \O^{p\times s}$ denote orthonormal bases of the leading eigenspaces of $\SigmaX$ and $\hat{\SIGMA}$, respectively. If $\lambda_s(\SigmaX)>\lambda_{s+1}(\SigmaX)$,\footnote{Assuming an eigenvalue gap $\lambda_s(\SigmaX)>\lambda_{s+1}(\SigmaX)$ is necessary to discuss approximation of the leading $s$-dimensional eigenspace, which would otherwise be ill defined.} then
    \begin{equation}
        \label{eq:wedin2}
        \dist(\hat{\bU},\bU)
        \leq \min\left\{ \, 1, \, \frac{(1+\sqrt 2) \, \|\hat{\SIGMA}-\SIGMA_{\by}\|}{\lambda_s(\SigmaX)-\lambda_{s+1}(\SigmaX)} \, \right\}. 
    \end{equation}
\end{lemma}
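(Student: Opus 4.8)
The plan is to deduce this directly from a standard form of the Davis--Kahan $\sin\theta$ theorem, with the only care point being that the two Hermitian matrices $\SigmaX$ and $\hat\SIGMA$ in the statement need not share the same ``bulk'' eigenvalues: we are comparing the leading $s$-eigenspace of $\SigmaX$ with that of $\hat\SIGMA$, but the perturbation is measured against $\SIGMA_{\by}=\SigmaX+\SigmaE=\SigmaX+\nu^2\id$. First I would observe that since $\SigmaE=\nu^2\id$ (Assumption \ref{assump:main}(ii)), adding $\nu^2\id$ only shifts all eigenvalues of $\SigmaX$ by the same constant, so $\SigmaX$ and $\SIGMA_{\by}$ have the \emph{same} eigenvectors, and in particular the same leading $s$-dimensional eigenspace $\bU$; moreover $\lambda_s(\SIGMA_{\by})-\lambda_{s+1}(\SIGMA_{\by})=\lambda_s(\SigmaX)-\lambda_{s+1}(\SigmaX)=:\delta>0$. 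Thus it is equivalent to bound $\dist(\hat\bU,\bU)$ where $\bU$ is the leading eigenspace of $\SIGMA_{\by}$ and $\hat\bU$ that of $\hat\SIGMA$, with gap $\delta$ and perturbation $\|\hat\SIGMA-\SIGMA_{\by}\|$.

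Next I would invoke the version of Davis--Kahan for the leading $s$-eigenspace: if $\lambda_s(\SIGMA_{\by})-\lambda_{s+1}(\SIGMA_{\by})=\delta$ and we write $\bE=\hat\SIGMA-\SIGMA_{\by}$, then
\begin{equation*}
\dist(\hat\bU,\bU)\;\le\;\frac{\|\bE\|}{\delta-\|\bE\|}
\end{equation*}
whenever $\|\bE\|<\delta$ (this follows from the classical $\sin\theta$ theorem together with Weyl's inequality, which guarantees that the relevant spectral gap for $\hat\SIGMA$ is at least $\delta-\|\bE\|$ so the eigenvalue clusters of $\SIGMA_{\by}$ and $\hat\SIGMA$ are genuinely separated). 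A short case analysis then converts this into the stated bound. If $\|\bE\|\ge \delta/(2+\sqrt2)$, then $(1+\sqrt2)\|\bE\|/\delta\ge (1+\sqrt2)/(2+\sqrt2)$; one checks $(1+\sqrt2)/(2+\sqrt2)=1/\sqrt2<1$ is false in the wrong direction, so instead I would argue the cleaner way: if $\|\bE\|\ge \delta/2$ the right-hand side $(1+\sqrt2)\|\bE\|/\delta\ge (1+\sqrt2)/2>1$, so the bound reads $\min\{1,\cdot\}=1$ and holds trivially since $\dist(\hat\bU,\bU)\le 1$ always. If $\|\bE\|<\delta/2$, then $\delta-\|\bE\|>\delta/2$, hence
\begin{equation*}
\dist(\hat\bU,\bU)\;\le\;\frac{\|\bE\|}{\delta-\|\bE\|}\;<\;\frac{\|\bE\|}{\delta/2}\;=\;\frac{2\|\bE\|}{\delta}\;\le\;\frac{(1+\sqrt2)\|\bE\|}{\delta},
\end{equation*}
and combining both cases with the trivial bound $\dist(\hat\bU,\bU)\le1$ yields \eqref{eq:wedin2}.

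The only genuinely delicate point is the interface between ``leading $s$-eigenspace'' as the paper defines it (allowing ties, and allowing $\hat\SIGMA$ to be an arbitrary Hermitian matrix with no assumed gap) and the hypotheses of the $\sin\theta$ theorem, which needs a separation between the top-$s$ eigenvalues of $\hat\SIGMA$ and the bottom eigenvalues of $\SIGMA_{\by}$ (or vice versa). I would handle this by noting that Weyl's inequality gives $\lambda_j(\hat\SIGMA)\ge \lambda_j(\SIGMA_{\by})-\|\bE\|$ and $\lambda_j(\hat\SIGMA)\le\lambda_j(\SIGMA_{\by})+\|\bE\|$ for all $j$, so $\lambda_s(\hat\SIGMA)-\lambda_{s+1}(\SIGMA_{\by})\ge \delta-\|\bE\|>0$ precisely in the nontrivial case $\|\bE\|<\delta/2$, which is exactly the regime where we apply the spectral perturbation estimate; in the complementary regime the bound is vacuous. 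I would also remark that the constant $1+\sqrt2$ (rather than $1$) is what makes the statement robust to the choice of which leading eigenbasis is picked when $\hat\SIGMA$ has multiplicities at the boundary, and that \eqref{eqn:distSubONB} is what lets us phrase everything in terms of $\|\hat\bU\hat\bU^*-\bU\bU^*\|$ symmetrically in the two spaces. No step here is hard; the writeup is essentially bookkeeping around Davis--Kahan plus Weyl.
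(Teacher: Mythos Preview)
Your proposal is correct and follows essentially the same route as the paper: reduce to comparing $\SIGMA_{\by}$ with $\hat\SIGMA$ via the shift $\SIGMA_{\by}=\SigmaX+\nu^2\id$, then combine Weyl's inequality with Davis--Kahan and a case split on the size of $\|\bE\|$. The only cosmetic differences are that the paper cites a Davis--Kahan variant with an extra $\sqrt 2$ in the numerator and accordingly splits at $\|\bE\|\le \delta/(1+\sqrt 2)$ to land exactly on the constant $1+\sqrt 2$, whereas you split at $\delta/2$ and (in fact) obtain the stronger constant $2$ before relaxing to $1+\sqrt 2$.
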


The proof of Lemma \ref{lem:PSestviaCovEst} is provided in Section \ref{sec:PSestviaCovEst}.
Recalling the ideas in \cite{dirksen2022covariance,yang2023plug}, a natural estimator for $\SIGMA_{\by}$ from quantized samples as in \eqref{eq:QuantizedSnapShots} is given by $\SIGMADITHuniform_n$ where
\begin{align} \label{eq:TwoBitEstimator}
    \SIGMADITHuniform_n = \frac{1}{2}\hat{\SIGMA}'_n + \frac{1}{2}(\hat{\SIGMA}'_n)^*
\end{align}
and
\begin{align} \label{eq:AsymmetricEstimator}
    \hat{\SIGMA}'_n = \frac{\lambda^2}{n} \sum_{k=1}^n \bq_k^\square (\dot \bq_k^\square)^*.
\end{align}
By slightly modifying the bounds on $\| \SIGMADITHuniform_n - \SigmaY \|$ derived in \cite{dirksen2022covariance,yang2023plug} and combining the resulting estimates with Lemma \ref{lem:PSestviaCovEst}, we obtain the following result.
Before providing the statement, we define the constants
\begin{equation}
	\label{eq:Cy_defs}
    \begin{split}
    C_{\infty} 
    &= \max_{k \in [n]} \| \E(\by_k\by_k^*) \|_\infty \\
    C_{\text{op}} 
    &= \max_{k \in [n]} \| \E(\by_k\by_k^*) \|.    
    \end{split}
\end{equation} 
 
\begin{theorem}
	\label{thm:subspacerectangle}
	Suppose \cref{assump:main} holds and we have an eigengap $\lambda_s(\SigmaX)>\lambda_{s+1}(\SigmaX)$. Then, there exist constants $B_K,C_K>0$ that depend only on $K$ such that the following hold. For any $\lambda^2 \geq B_K \log(n) C_{\infty}$, let $\bU$, $\bU_n^\square \in \O^{p\times s}$ denote orthonormal bases of the leading eigenspaces of $\SigmaX$ and $\SIGMADITHuniform_n$. We have with probability at least $1-e^{-t}$ that 
	\begin{equation}
    \label{eq:subspace}
    \begin{split}
    &\dist(\bU^\square_n,\bU) \\
	  &\leq \min \left\{1, \, \frac{C_K \, (\Cop^{1/2} + \lambda)}{\lambda_s(\SigmaX)-\lambda_{s+1}(\SigmaX)} \sqrt{\frac{p (\log(p)+t)}{n}}
	    + \frac{C_K \lambda^2}{\lambda_s(\SigmaX)-\lambda_{s+1}(\SigmaX)}  \frac{p (\log(p) + t)}{n} \right\}. 
    \end{split}
	\end{equation}
    {Moreover, if} $\bx$ and $\be$ are bounded and $\lambda^2 \ge C_{\infty}$ a.s., then \eqref{eq:subspace} holds with probability at least $1-e^{-t}$. 
\end{theorem}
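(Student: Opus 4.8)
The plan is to combine Lemma \ref{lem:PSestviaCovEst} with a high-probability concentration bound for the operator-norm error $\|\SIGMADITHuniform_n - \SigmaY\|$ of the dithered two-bit covariance estimator, i.e.\ to show that under the stated dither condition $\lambda^2 \gtrsim_K \log(n)\,C_\infty$, one has, with probability at least $1 - e^{-t}$,
\begin{equation*}
\|\SIGMADITHuniform_n - \SigmaY\|
\;\lesssim_K\; (\Cop^{1/2} + \lambda)\sqrt{\frac{p(\log p + t)}{n}}
\;+\; \lambda^2\,\frac{p(\log p + t)}{n}.
\end{equation*}
Once this is established, plugging it into \eqref{eq:wedin2} with $\hat\SIGMA = \SIGMADITHuniform_n$ and taking the minimum with $1$ yields \eqref{eq:subspace} directly, since $\lambda_s(\SigmaX) - \lambda_{s+1}(\SigmaX)$ is exactly the eigengap appearing there (recall $\SigmaE = \nu^2\id$, so $\SigmaX$ and $\SigmaY$ share eigenspaces and the $s$-th eigengap is the same up to the shift, which cancels in the difference).

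The covariance bound itself follows the strategy of \cite{dirksen2022covariance,yang2023plug}. First I would record the unbiasedness-type identity: conditionally on the $\by_k$, the dithered sign-quantizer satisfies $\E[\lambda^2\,\bq_k^\square(\dot\bq_k^\square)^*\mid \by_k]$ reproduces $\by_k\by_k^*$ on the off-diagonal entries (and a known correction on the diagonal), so that $\E \SIGMADITHuniform_n$ agrees with $\SigmaY$ up to a controlled diagonal term handled by the condition $\lambda^2 \gtrsim \log(n) C_\infty$, which guarantees that the relevant entries of $\by_k$ lie within the no-overload region of the dither with high probability (this is the source of the $\log n$ factor, via a union bound over the $p$ coordinates and $n$ snapshots together with subgaussian tails). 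Then I would split $\SIGMADITHuniform_n - \SigmaY$ into a "bias" term and a mean-zero fluctuation term. The fluctuation term is a sum of $n$ i.i.d.\ centered random matrices each of operator norm $\lesssim \lambda^2 p / n$ (crudely) but with much smaller variance proxy; applying a matrix Bernstein inequality (for the bounded case) or its subexponential/truncated variant gives precisely the two-term bound: the $\sqrt{p(\log p + t)/n}$ term from the variance, scaled by $\Cop^{1/2} + \lambda$ because $\E\|\bq_k^\square(\dot\bq_k^\square)^*\,v\|^2$ is controlled by $\|\SigmaY\| + \lambda^2 \lesssim \Cop + \lambda^2$, and the $p(\log p + t)/n$ term from the uniform bound on each summand. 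The stochastic case (Assumption \ref{assump:main}(i)(s)) additionally requires conditioning on a high-probability event on which the $\bx_k$ are bounded (using $K$-subgaussianity), so that the deterministic-signal analysis applies on that event; this is where the "$\bx,\be$ bounded and $\lambda^2\ge C_\infty$ a.s." refinement in the last sentence comes in — in that case no $\log n$ slack is needed because the no-overload condition holds surely rather than with high probability.

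The main obstacle I anticipate is bookkeeping rather than conceptual: carefully tracking which quantity controls the variance proxy of the matrix Bernstein step so that the leading coefficient comes out as $\Cop^{1/2}+\lambda$ and not something worse (e.g.\ $C_\infty^{1/2}$ scaled by $p$), and ensuring the diagonal-correction term from the dithered estimator is absorbed into one of the two displayed terms rather than producing a spurious extra term. A secondary subtlety is the complex case: one must treat $\bq_k^\square$ and $\dot\bq_k^\square$ as the two independent dithered copies (needed so that the rank-one summands have the right conditional mean), and verify the factor $c_\F$ from the modulus of the complex sign function is correctly accounted for in the scaling $\lambda^2$. None of these steps is deep, but each is a place where constants and $\log$ factors can silently degrade; the cited works \cite{dirksen2022covariance,yang2023plug} provide the template, and the claim is only that their bounds need "slight modification" to reach the form stated here.
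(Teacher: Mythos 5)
Your proposal takes exactly the paper's route: \cref{lem:PSestviaCovEst} combined with an operator-norm concentration bound for $\SIGMADITHuniform_n$ (the paper's \cref{thm:CE_ErrorBound}), which is itself obtained from the conditional unbiasedness of the two-dither estimator (\cref{lem:linftyBiasEstBounded}) together with a matrix Bernstein/moment inequality using the uniform bound $\|\THETA_k\|\lesssim \lambda^2 p/n$ and the variance proxy $\sqrt{p/n}\,(\lambda\Cop^{1/2}+\lambda^2)$ — note the extra factor of $\lambda$ coming from the $\lambda^2\bq_k^\square(\dot\bq_k^\square)^*$ normalization, which your bookkeeping drops (though the theorem as displayed also states $(\Cop^{1/2}+\lambda)$). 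One small correction to your bias discussion: since $\bq_k^\square$ and $\dot\bq_k^\square$ use \emph{independent} dithers, $\E\big[\lambda^2\bq_k^\square(\dot\bq_k^\square)^*\mid\by_k\big]=\by_k\by_k^*$ on \emph{all} entries under the no-overload condition, so there is no diagonal correction to absorb — the term $-\diag(\cdot)+c_\F\lambda^2\id$ arises only for the single-dither estimator $\lambda^2\bq_k^\square(\bq_k^\square)^*$, and the condition $\lambda^2\gtrsim_K\log(n)C_\infty$ serves solely to guarantee no-overload with high probability, exactly as you say.
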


Theorem \ref{thm:subspacerectangle} follows from combining Lemma~\ref{lem:PSestviaCovEst} with Theorem~\ref{thm:CE_ErrorBound} in Section \ref{sec:subspace}.

\begin{remark}
\label{rem:Cop}
	Note that {in Theorem \ref{thm:subspacerectangle}} the dependence on the noise level is hidden in the parameters $C_{\infty}$ and $C_{\text{op}}$. Indeed, given Assumption \ref{assump:main} we have in the case of random samples, 
	\begin{align*}
	C_{\infty} &= \| \SigmaY \|_\infty = \|\SigmaX\|_{\infty}+\nu^2, \\
	C_{\text{op}} &= \| \SigmaY \| = \|\SigmaX\|+\nu^2,
	\end{align*}
	whereas for deterministic samples,
	\begin{align*}
	C_{\infty} &= \max_{k\in [n]} \|\bx_k\|_{\infty}^2 + \nu^2, \\
	C_{\text{op}}  &= \max_{k\in [n]} \|\bx_k\|_2^2 + \nu^2. 
	\end{align*}
\end{remark}

On the one hand, the generic approach of Theorem \ref{thm:subspacerectangle} is appealing due to its straightforward derivation from existing insights on the minimax-optimal estimator in \eqref{eq:TwoBitEstimator}. On the other hand, the associated sample complexity estimate produced by Lemma~\ref{lem:PSestviaCovEst} {may be suboptimal}. Let us make this more precise. While $\SIGMADITHuniform_n$ will achieve the minimax-optimal spectral norm estimation error of order $1/\sqrt{n}$, the resulting subspace identification bound in Theorem \ref{thm:subspacerectangle} requires $n\sim \lambda_s^{-2}(\SigmaX)$ samples to yield a non-trivial eigenspace estimate. In the following, we will see that a better error dependency on the condition number of $\SigmaX$ can be achieved by combining a uniform scalar (multi-bit) quantizer with triangular dithering. As we verify in our numerical experiments in \cref{sec:Numerics} below, {both Theorem \ref{thm:subspacerectangle} and the improved scaling for triangular dithering are sharp in their dependence on $\lambda_s(\SigmaX)$}.

\subsection{Triangular dithering}

\label{sec:triangular}

In recent works on covariance estimation from quantized samples (\cite{chen2022quantizing,chen2023parameter}) it is suggested to combine a uniform infinite-range quantizer 
\begin{equation}
    \begin{split}
	Q_\mu &\colon \R \to \calA_\mu, \\
    \calA_\mu&:= 2\mu \mathbb Z + \mu = \{ \dots, -3\mu, -\mu, \mu, 3\mu, \dots \} \\
	Q_\mu(x) &:= 2\mu \left( \left\lfloor \frac{x}{2\mu} \right\rfloor + \frac{1}{2} \right)
    \end{split} \label{eq:Qmudef}
\end{equation}
of resolution $\mu > 0$ with triangular dithering. To be more precise, given samples $\by_1,\dots,\by_n \in \R^p$ they apply $Q_\mu$ entry-wise and collect quantized samples
\begin{align*}
	\bq_k^\triangle = Q_\mu(\by_k + \tauv_k^\triangle) \in \calA_\mu^p,
\end{align*}
where the dithers $\tauv_k^\triangle$ are independent and drawn from the \emph{triangular distribution} 
\begin{equation}
	\label{eq:tridither}
	\calU_\R(-\mu,\mu)^p * \calU_\R(-\mu,\mu)^p,
\end{equation}
which is the convolution of two uniform distributions, see Figure \ref{fig:Dither}. {To cleanly distinguish the settings, we highlight related quantities with a $\triangle$-symbol in the superscript.}
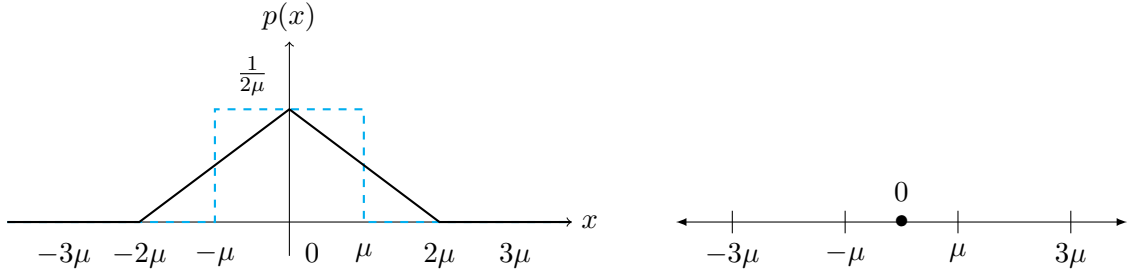
\begin{figure}[t]
	\centering
	\begin{tikzpicture}[scale=1.5]
		\draw[->] (-2.5,0) -- (2.5,0) node[right] {$x$};
		\draw[->] (0,-0.3) -- (0,1.6) node[above] {$p(x)$};
		
		\draw[thick, cyan, dashed] (-2.5,0) -- (-0.66,0) -- (-0.66,1) -- (0.66,1) -- (0.66,0) -- (2.5,0);
		
		\draw[thick, black] (-2.5,0) -- (-1.33,0) -- (0,1) -- (1.33,0) -- (2.5,0);
		
        \node at (-2,-0.1) [below] {$-3\mu$};
        \node at (2,-0.1) [below] {$3\mu$};
		\node at (-1.33,-0.1) [below] {$-2\mu$};
		\node at (1.33,-0.1) [below] {$2\mu$};
		\node at (-0.66,-0.1) [below] {$-\mu$};
		\node at (0.66,-0.1) [below] {$\mu$};
		\node at (0.2,-0.1) [below] {$0$};
		\node at (-0.1,1.3) [left] {$\tfrac{1}{2\mu}$};
	\end{tikzpicture}
	\qquad 
	\begin{tikzpicture}[scale=1.5]
		\draw[latex-latex] (-2,0) -- (2,0);
		
		\draw (-1.5,0.1) -- (-1.5,-0.1) node[below] {$-3\mu$};
		\draw (-0.5,0.1) -- (-0.5,-0.1) node[below] {$-\mu$};
		\node at (0,0) {\textbullet};
		\node at (0,0.1) [above] {$0$};
		\draw (0.5,0.1) -- (0.5,-0.1) node[below] {$\mu$};
		\draw (1.5,0.1) -- (1.5,-0.1) node[below] {$3\mu$}; 
	\end{tikzpicture}
    
	\caption{Densities of triangular vs uniform distribution (left), and alphabet $\calA_\mu$ (right).}
	\label{fig:Dither}
\end{figure}

\begin{remark}
\label{rem:Bbit}
    {For bounded or strongly concentrating data and noise distributions as in Assumption \ref{assump:main}, $Q_\mu$ effectively behaves like the sign-quantizer $Q(\cdot) = \sign(\cdot)$ if $\mu$ is chosen sufficiently large. Hence for smaller $\mu > 0$, $Q_\mu$ is a natural generalization of $Q$ to (finite) multi-bit regimes and the infinite-range alphabet just simplifies theoretical analysis.
    Let us make this more precise.} 
    \begin{enumerate}
        \item[(i)] For bounded inputs, triangular dithering uses a finite number of bits, depending on the choice of resolution $\mu$. We first see that $Q_\mu$ maps $\left[ 2k\mu, (2k+2)\mu \right)$ to the point $(2k+1)\mu$, for any $k \in \Z$.
        If $\lambda>0$ is such  that $|x|\leq \lambda$, then $|x+\tau^\triangle|<\lambda+2\mu$ a.s., so picking $\mu=\lambda/(2^b-2)$ for any integer $b\geq 2$, we see that $Q_\mu$ maps
        $(-\lambda-2\mu,\lambda+2\mu)$ to $$\calA_\mu \cap [-(2^b-2)\mu -\mu, (2^b-2)\mu + \mu].$$
        This consists of $2^b$ elements, so this choice of $\mu$ yields a triangular quantizer that uses $b$ bits. Note that $b=2$ is the smallest allowable choice of $b$ for the triangular dither.
        \item[(ii)] {In the case of unbounded distributions that strongly concentrate, the same argumentation holds with high probability. Replacing $Q_\mu$ in practice with a finite bit quantizer that saturates outside of $[-\lambda-2\mu,\lambda+2\mu]$ yields a feasible method for which all derived guarantees stay valid.} From here on, if $\lambda$ is any upper bound for $|x|$ {that holds w.h.p.}, we refer to $Q_\mu$ with $\mu=\lambda/(2^b-2)$ as {\it $b$-bit triangular dithering (or quantizer)}.
    \end{enumerate}
\end{remark}

The statistical properties of such dithered quantizers were carefully studied, and we refer the reader to Gray's survey \cite{gray1993dithered} and the references therein. We make one important change in perspective from Gray. Even though $\tauv_k^\triangle$ is a dither and $\by_k$ is the signal, it is actually more meaningful to treat $\be_k+\tauv_k^\triangle$ as the {\it effective dither} and $\bx_k$ as the signal. Accordingly, we define the {\it quantization noise} as
\begin{equation}
	\label{eq:quannoise}
	\bxi_k 
	:= \bq_k^\triangle-\bx_k
	=Q_\mu(\bx_k + \be_k + \tauv_k^\triangle) - \bx_k.
\end{equation}
In general, we cannot say that $\bxi_k$ is independent of $\bx_k$. However, due to the triangular distribution of the dithering vectors, the random variables $\bxi_1,\dots,\bxi_n$ have several favorable properties as long as the distribution of $\be$ satisfies the following mild regularity assumption. 

\begin{assumption}\label{assump:noise2}
	Assume that {the coordinates} of the distribution of $\be$ {are uncorrelated,} absolutely continuous with respect to the Lebesgue measure on $\F$, and that the characteristic function of each coordinate of $\be$ is twice differentiable. 
\end{assumption}
\begin{lemma}
	\label{lem:triquannoise}
	There is an absolute constant $C>0$ such that the following holds. Let $\bx_1,\ldots,\bx_n \in \R^p$. Let $\be_1,\ldots,\be_n\overset{\textnormal{i.i.d.}}{\sim}\be$. Suppose that $\be$ is $K$-subgaussian for some $K\geq 1$ and satisfies Assumption \ref{assump:noise2} with mean zero and entry-wise variance $\nu^2$. Then $\bxi_1,\dots,\bxi_n$ are independent, $CK$-subgaussian with mean zero and covariance $(\mu^2+\nu^2)\id_p$.
\end{lemma}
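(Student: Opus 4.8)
## Proof proposal for Lemma \ref{lem:triquannoise}

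\textbf{Overall strategy.}  The key structural fact about triangular dithering --- going back to Schuchman's condition and Gray's survey \cite{gray1993dithered} --- is that if a dither $\boldsymbol{\delta}$ has the property that its characteristic function (and, for higher moments, sufficiently many derivatives of it) vanishes at the nonzero integer multiples of $\pi/\mu$, then the quantization error $Q_\mu(v+\delta)-(v+\delta)$ is, conditionally on $v$, uniform on $[-\mu,\mu]$ and its first two moments do not depend on $v$.  Here the \emph{effective dither} is $\be+\tauv^\triangle$, not $\tauv^\triangle$ alone; so the plan is to first reduce to a scalar, per-coordinate statement, then verify that $\be_j+\tau_j^\triangle$ satisfies a second-order Schuchman-type condition, deduce that the quantization error coordinate $\xi_{k,j}$ is conditionally uniform on $[-\mu,\mu]$ given $\bx_k$ (in particular mean zero and variance $\mu^2/3$ --- wait, one must be careful: uniform on $[-\mu,\mu]$ has variance $\mu^2/3$, so the claimed covariance $(\mu^2+\nu^2)\id_p$ must come from writing $\xi_{k,j}=(Q_\mu(y_{k,j}+\tau^\triangle)-(y_{k,j}+\tau^\triangle)) + \tau^\triangle + e_{k,j}$ and summing independent contributions --- see below), and finally assemble the coordinatewise facts into the vector statement using the independence and isotropy assumptions of \cref{assump:main}(ii) and \cref{assump:noise2}.

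\textbf{Step 1: decompose the quantization noise.}  Fix $k$ and a coordinate $j$, and write $u := x_{k,j}$, $w := e_{k,j}$, $\tau := \tau_{k,j}^\triangle$.  Then
\begin{equation*}
\xi_{k,j} = Q_\mu(u+w+\tau) - u = \underbrace{\big(Q_\mu(u+w+\tau)-(u+w+\tau)\big)}_{=:\,r} + w + \tau .
\end{equation*}
Here $r$ is the raw quantizer round-off, $w=e_{k,j}$ is the noise coordinate, and $\tau=\tau_{k,j}^\triangle$ is the applied dither.  The triangular dither $\tau$ is itself a sum $\tau = \tau' + \tau''$ of two independent $\calU_\R(-\mu,\mu)$ variables, so $\E\tau = 0$ and $\var(\tau) = 2\cdot\mu^2/3$.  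The idea is that $r$, $\tau$ (together with $w$) behave jointly so that the \emph{cross terms cancel} and the variances add to $\mu^2+\nu^2$: indeed $\var(r)=\mu^2/3$ if $r$ is conditionally uniform on $[-\mu,\mu]$, and $\mu^2/3 + 2\mu^2/3 + \nu^2 = \mu^2+\nu^2$; the mixed moments $\E[r\tau\mid u]$, $\E[rw\mid u]$, $\E[r w\, \text{(other coordinate)}]$ must all be shown to vanish.

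\textbf{Step 2: the Schuchman/Fourier computation.}  This is the technical core.  Conditionally on $u$, write $g := u + w + \tau$ and use the Fourier-series identity for the sawtooth: $Q_\mu(g)-g = \frac{\mu}{\pi}\sum_{m\neq 0} \frac{(-1)^m}{m} e^{\i \pi m g/\mu}$ (in $L^2$).  Then
\begin{equation*}
\E\!\big[(Q_\mu(g)-g)^\ell \,\big|\, u\big] = \text{(sum over frequencies)} \cdot \prod \E\big[e^{\i \pi m (w+\tau)/\mu}\big]
= \text{(those terms)} \cdot \widehat{\phi}_{w}(\pi m/\mu)\,\widehat{\phi}_\tau(\pi m/\mu),
\end{equation*}
and since $\tau$ is a sum of two independent uniforms, $\widehat{\phi}_\tau(\pi m/\mu) = \big(\mathrm{sinc}(\pi m)\big)^2 = 0$ for every nonzero integer $m$; moreover $\widehat{\phi}_\tau$ has a double zero there, which (using that $\widehat{\phi}_w$ is $C^2$ by \cref{assump:noise2}) kills not only $\E[r\mid u]$ but also $\E[r^2\mid u] - \mu^2/3$ and the mixed moments $\E[r\tau\mid u]$, $\E[rw\mid u]$ after one integration by parts / differentiation under the integral against the densities of $\tau$ and $w$.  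The upshot: conditionally on $u$, $r$ is uniform on $[-\mu,\mu]$ and uncorrelated with $(w,\tau)$; and all of this is independent of $u$, which gives $\E[\xi_{k,j}\mid\bx_k]=0$ and $\var(\xi_{k,j}\mid\bx_k)=\mu^2+\nu^2$, hence also unconditionally.  For off-diagonal covariance entries, the coordinates of $\be$ are uncorrelated (\cref{assump:noise2}) and the coordinates of $\tauv^\triangle$ are independent, and $r_j$ depends on coordinate $j$ only, so $\E[\xi_{k,j}\xi_{k,j'}\mid\bx_k] = 0$ for $j\neq j'$; the complex case splits into real and imaginary parts by assumption.

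\textbf{Step 3: subgaussianity and independence.}  Independence of $\bxi_1,\dots,\bxi_n$ is immediate: $\bxi_k$ is a deterministic function of $(\bx_k,\be_k,\tauv_k^\triangle)$ and these triples are independent across $k$ by \cref{assump:main}(iii) and independence of the dithers.  For the $CK$-subgaussian bound: $r$ is bounded ($|r|\leq\mu$), $\tau$ is bounded ($|\tau|\leq 2\mu$), and $\be$ is $K$-subgaussian with $\|\langle\be,\bx\rangle\|_{\psi_2}\leq K\nu\|\bx\|_2$; since $\mu\lesssim$ (the natural scale tied to $\nu$ via $K\geq 1$) --- or, more cleanly, since bounded-by-$M$ random variables are $O(M)$-subgaussian and $\mu \le \mu+\nu \lesssim K(\mu+\nu)$ --- the three pieces are each $CK(\mu+\nu)$-subgaussian per direction, and $\|\langle\bxi_k,\bx\rangle\|_{\psi_2}\leq\|\langle\br_k,\bx\rangle\|_{\psi_2}+\|\langle\tauv_k^\triangle,\bx\rangle\|_{\psi_2}+\|\langle\be_k,\bx\rangle\|_{\psi_2}\lesssim K(\mu+\nu)\|\bx\|_2\simeq K\|\bx\|_{\Sigma_\bxi}$ once we know $\SigmaA_{\bxi}=(\mu^2+\nu^2)\id_p$, where I have written $\br_k$ for the coordinatewise round-off vector.  (One should double check that $\br_k$ has independent bounded coordinates given $\bx_k$ so that the per-direction $\psi_2$ control is uniform; this follows from the coordinatewise structure already used in Step 2.)

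\textbf{Main obstacle.}  The delicate step is Step 2 --- specifically, justifying the termwise manipulation of the sawtooth Fourier series under expectation (absolute/$L^2$ convergence and interchange with $\E$), and extracting the \emph{second-order} cancellation ($\E[r^2\mid u]=\mu^2/3$ and the vanishing of $\E[r\tau\mid u]$, $\E[rw\mid u]$) rather than just the first-order one.  This is exactly where the twice-differentiability of the characteristic function in \cref{assump:noise2} and the \emph{double} zero of $\widehat{\phi}_\tau$ at $\pi\Z\setminus\{0\}$ (the hallmark of triangular, as opposed to uniform, dither) are both essential; this is the standard ``$N=2$'' case of Gray's theory \cite{gray1993dithered}, and the cleanest route is to invoke it and then carefully track how the $w$-marginal enters.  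Everything else (Step 1 bookkeeping, Step 3 norm estimates, independence) is routine.
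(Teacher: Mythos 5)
Your proposal is correct and follows essentially the same route as the paper: the same decomposition $\bxi_k = \br_k + \tauv_k^\triangle + \be_k$ into round-off, dither, and noise, the same appeal to Gray's dithered-quantization theorems (rather than redoing the Fourier/Schuchman computation) for the conditional moment identities and for the boundedness and independence of the round-off coordinates, and the same triangle-inequality-plus-Hoeffding argument for the $CK$-subgaussianity. The only cosmetic difference is that the paper conditions on $\by_k$ — so only the triangular dither itself must satisfy the second-order Schuchman condition and the cross term with $\be_k$ vanishes trivially by conditional independence — whereas you condition on $\bx_k$ and fold $\be_k$ into the effective dither, which is where Assumption \ref{assump:noise2} enters in your version.
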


The proof of Lemma \ref{lem:triquannoise} is provided in Section \ref{sec:triquannoise}.

\begin{remark}
	Let us highlight two points before we proceed:
	\begin{enumerate}
		\item[(i)] Lemma \ref{lem:triquannoise} extends to the complex setting. For a complex valued input $\by_k$, we define $\bq_k^\triangle\in \calA_\mu^p + \i \calA_\mu^p$ as
		$$
		\bq_k^\triangle:=Q_{\mu}(\Re(\by_k)+\tauv_k^\triangle)+\i Q_{\mu}(\Im(\by_k)+ \dot \tauv_k^\triangle),
		$$
		where $\tauv_k^\triangle$ and $\dot \tauv_k^\triangle$ are independent and drawn from the same triangular distribution \eqref{eq:tridither}. Since we assume that $\Re(\be_k)$ and $\Im(\be_k)$ are independent, $\Re(\bq_k)$ and $\Im(\bq_k)$ are independent and the conclusions in \cref{lem:triquannoise} hold for both parts.
        \item[(ii)] In the special case that $\be=\calU_{\R}(-\mu,\mu)^p$, one can set $\tauv\sim \calU_{\R}(-\mu,\mu)^p$ instead of \eqref{eq:tridither}, since then, the effective dither $\be+\tauv$ is triangular. In which case, the conclusions of \cref{lem:triquannoise} hold with $\bxi_k\mid \bx_k$ having $\psi_2$ norm at most $C \mu$ and $\E \xi_{k,\ell}^2= \mu^2$ for each $k\in [n]$ and $\ell\in [p]$.
	\end{enumerate}
\end{remark}

Let us now return to our initial goal of estimating the leading eigenspace of $\SigmaX$ from the quantized observations $\bq_1^\triangle, \dots, \bq_n^\triangle$.
Rearranging \eqref{eq:quannoise}, we get $\bq_k^\triangle=\bx_k+\bxi_k$ for each $k\in [n]$, which can be written in matrix form as  
\begin{equation} \label{eq:Qtri}
	\bQ_n^\triangle 
	= \bX_n + \bXi_n. 
\end{equation}
Lemma \ref{lem:triquannoise} states that given $\bx_k$ the entries of $\bxi_k$ are uncorrelated, mean-zero, and have variance $\sigma^2:=c_\F^2(\nu^2 + \mu^2)$ where $c_\R = 1$ and $c_\C=2$. This implies that
\begin{align*}
	\E(\bq_k^\triangle (\bq_k^\triangle)^*)
	&=\E( \bx_k \bx_k^*+ \bx_k \bxi_k^* + \bxi_k \bx_k^* + \bxi_k \bxi_k^*) \\
	&= \E \bx_k \bx_k^* + \E \bxi_k \bxi_k^* \\
	&= \SigmaX+ \sigma^2 \id_p.
\end{align*}
{Since the shift by $\sigma^2\id_p$ leaves the eigenspaces of $\SigmaX$ invariant}, it is natural to define the triangular estimator as
\begin{equation}
		\label{eq:Sigmatri}
		\SIGMADITHtriangular_n:= \frac 1 n \sum_{k=1}^n \bq_k^\triangle (\bq_k^\triangle)^*
\end{equation}
We deduce the following result for estimating the leading eigenspace of $\SigmaX$ via $\SIGMADITHtriangular_n$. 

\begin{theorem} \label{thm:subspacerectangletriangle}
	Suppose Assumptions \ref{assump:main} and \ref{assump:noise2} hold and that $\rank(\SigmaX) = s$. There are absolute constants {$C_1,C_2 > 0$}, and constants $c_K,\alpha_K>0$ depending only on $K$ such that the following holds. Let $\bU,\bU_n^\triangle \in \O^{p\times s}$ denote orthonormal bases for the leading left singular spaces of $\SigmaX$ and $\SIGMADITHtriangular_n$.  
		\begin{enumerate}
			\item[(i)] For deterministic $\bx_1,\dots,\bx_n \subset \F^p$ and any $\alpha \geq \alpha_K$, we have with probability at least $1-e^{-c_K \alpha p}$
			\begin{equation}
				\label{eq:CZvariationineq}
				\dist(\bU_n^\triangle,\bU)
				\leq \min \left\{ 1, \, {C_1} \sqrt{\frac {\nu^2+\mu^2}{\lambda_s(\SigmaX)}} \left(1 +\sqrt{ \frac{\nu^2+\mu^2}{\lambda_s(\SigmaX)} } \right) \sqrt{\frac{\alpha p}{n}} \right\}. 
			\end{equation}
			\item[(ii)] For $\bx_1,\dots,\bx_n \overset{i.i.d.}{\sim} \bx$ with $\bx \sim \mathcal \calC\calN(\0,\SigmaX)$ and any $\alpha \ge \alpha_K$, 
            {we have that if
            $n \ge C_2 \frac{\sum_{j=1}^s \lambda_j(\SigmaX)}{\lambda_s(\SigmaX)}$
            then with probability at least $1-3e^{-c_K \min\{\alpha p,n\} }$}
			\begin{equation}
				\label{eq:CZvariationineqCorollary}
				\dist(\bU_n^\triangle,\bU)
				\leq \min \left\{ 1, \, {C_1} \sqrt{\frac {\nu^2+\mu^2}{\lambda_s(\SigmaX)} } \left( 1 + \sqrt{ \frac{\nu^2+\mu^2}{\lambda_s(\SigmaX)} } \right) \sqrt{\frac{\alpha p}{n}} \right\}.
			\end{equation}
	\end{enumerate}
\end{theorem}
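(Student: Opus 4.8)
The plan is to read off both the target and the estimated subspace from the factorization $\bQ_n^\triangle = \bX_n + \bXi_n$ in \eqref{eq:Qtri}. Since $\SIGMADITHtriangular_n = \tfrac1n\bQ_n^\triangle(\bQ_n^\triangle)^*$, its leading $s$-dimensional eigenspace $\bU_n^\triangle$ is exactly the leading left singular space of $\bQ_n^\triangle$; and since $\rank(\SigmaX) = s$, the leading eigenspace $\bU$ of $\SigmaX$ is the column space of $\bX_n$, i.e.\ the leading left singular space of $\bX_n$. Thus \cref{thm:subspacerectangletriangle} is a signal-plus-noise singular subspace perturbation statement, and the route is to apply the high-probability Cai--Zhang-type bound \cref{thm:CZvarHP} with signal matrix $\bX_n$ and noise matrix $\bXi_n$. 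The hypotheses of \cref{thm:CZvarHP} are furnished by \cref{lem:triquannoise} and its complex extension: conditionally on the $\bx_k$, the columns $\bxi_1,\dots,\bxi_n$ of $\bXi_n$ are independent, mean zero, $CK$-subgaussian, with covariance $\sigma^2\id_p$ where $\sigma^2\simeq \nu^2+\mu^2$.

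For part (i) the $\bx_k$ are deterministic, so $\bX_n$ is a fixed matrix with $\sigma_s(\bX_n)^2 = \lambda_s(\bX_n\bX_n^*) = n\,\lambda_s(\SigmaX)$. Feeding $\bX_n$ and $\bXi_n$ into \cref{thm:CZvarHP} gives, with probability at least $1-e^{-c\alpha p}$ for every $\alpha\geq \alpha_0$ (with $c,\alpha_0$ inherited from the $CK$-subgaussian constant of $\bXi_n$, hence depending only on $K$), a bound of the shape $\dist(\bU_n^\triangle,\bU)\lesssim \sigma\sqrt{\alpha p}/\sigma_s(\bX_n) + \sigma^2\sqrt{\alpha p\, n}/\sigma_s(\bX_n)^2$. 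Substituting $\sigma_s(\bX_n)^2 = n\,\lambda_s(\SigmaX)$ and $\sigma^2\simeq\nu^2+\mu^2$ and factoring out $\sqrt{\alpha p/n}$ turns the right-hand side into $\sqrt{\tfrac{\nu^2+\mu^2}{\lambda_s(\SigmaX)}}\bigl(1+\sqrt{\tfrac{\nu^2+\mu^2}{\lambda_s(\SigmaX)}}\bigr)\sqrt{\tfrac{\alpha p}{n}}$ up to an absolute constant, which is \eqref{eq:CZvariationineq}; the truncation at $1$ is free since $\dist\leq 1$ always.

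For part (ii) we condition on $\bX_n$ and reduce to (i). The extra ingredient is a high-probability lower bound on the smallest nonzero eigenvalue of the Gaussian sample covariance. Writing $\bx_k = \bV\bg_k$ with $\bV\in\C^{p\times s}$ satisfying $\bV\bV^* = \SigmaX$ and $\bg_1,\dots,\bg_n\overset{\text{i.i.d.}}{\sim}\calC\calN(\0,\id_s)$, we obtain $\tfrac1n\bX_n\bX_n^* = \bV(\tfrac1n\bG_n\bG_n^*)\bV^*$, so $\lambda_s(\tfrac1n\bX_n\bX_n^*)\geq \lambda_{\min}(\tfrac1n\bG_n\bG_n^*)\,\sigma_s(\bV)^2 = \lambda_{\min}(\tfrac1n\bG_n\bG_n^*)\,\lambda_s(\SigmaX)$; a standard estimate for $s\times n$ Gaussian matrices gives $\lambda_{\min}(\tfrac1n\bG_n\bG_n^*)\geq c_0$ for an absolute $c_0>0$ with probability at least $1-2e^{-c'n}$, provided $n\gtrsim s$ — which is implied by the hypothesis $n\geq C_2\,\tfrac{\sum_{j=1}^s\lambda_j(\SigmaX)}{\lambda_s(\SigmaX)}$ because $\sum_{j=1}^s\lambda_j(\SigmaX)\geq s\,\lambda_s(\SigmaX)$. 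On this event $\bX_n$ has rank $s$ and $\sigma_s(\bX_n)^2\geq c_0 n\,\lambda_s(\SigmaX)$, and conditionally on $\bX_n$ the noise $\bXi_n$ still satisfies the hypotheses of \cref{thm:CZvarHP} with $\bX_n$ playing the role of the (now deterministic) signal, since \cref{lem:triquannoise} treats the signal vectors as arbitrary and fixed. Applying \cref{thm:CZvarHP} conditionally and substituting the lower bound on $\sigma_s(\bX_n)^2$ yields \eqref{eq:CZvariationineqCorollary}, and a union bound over the failure of the sample-covariance event ($\leq 2e^{-c'n}$) and of \cref{thm:CZvarHP} ($\leq e^{-c\alpha p}$) gives the advertised probability $1-3e^{-c\min\{\alpha p,\,n\}}$.

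I expect the genuine difficulty to lie not in this deduction but in the underlying perturbation bound \cref{thm:CZvarHP} itself — upgrading \cite[Theorem 3]{cai2018rate} to a high-probability statement and relaxing its hypotheses so that only the noise \emph{columns} (rather than all noise entries) need be subgaussian with the prescribed covariance. Within the present argument, the only real work is the Gaussian sample-covariance lower bound in part (ii) and the bookkeeping needed to confirm that conditioning on $\bX_n$ preserves the conditional independence and conditional covariance structure of $\bXi_n$, both of which are routine once \cref{lem:triquannoise} is available.
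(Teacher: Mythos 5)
Your proposal is correct, and for part (i) it is essentially the paper's own argument: rescale \eqref{eq:Qtri} by $\sigma$ with $\sigma^2=c_\F^2(\nu^2+\mu^2)$, invoke \cref{lem:triquannoise} to see that the (conditional) noise columns are independent, isotropic after normalization, and $O(K)$-subgaussian, and then feed $\bX_n$ and $\bXi_n$ into \cref{thm:CZvarHP}; the algebra converting $\sigma_s^2(\bX_n)=n\lambda_s(\SigmaX)$ into the factor $\sqrt{\tfrac{\nu^2+\mu^2}{\lambda_s(\SigmaX)}}\bigl(1+\sqrt{\tfrac{\nu^2+\mu^2}{\lambda_s(\SigmaX)}}\bigr)\sqrt{\alpha p/n}$ checks out. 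As you note, the substantive content is in \cref{thm:CZvarHP} itself, which you cite rather than prove — legitimate, since it is a standalone theorem of the paper, but be aware that the paper's ``proof of (i)'' consists almost entirely of establishing that theorem.

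For part (ii) you take a genuinely different route to the key ingredient, namely the high-probability lower bound $\lambda_s(\tfrac1n\bX_n\bX_n^*)\gtrsim\lambda_s(\SigmaX)$. The paper proves this (\cref{lem:lambdarLB} via \cref{lem:Mendelson}) using Mendelson's small ball method together with the Paley--Zygmund inequality; this requires only an $L^1$--$L^2$ equivalence and is what allows the heavy-tailed extension in \cref{thm:subspacerectangletriangle_heavytailed}, and it is also the source of the effective-rank condition $n\gtrsim\sum_{j\le s}\lambda_j(\SigmaX)/\lambda_s(\SigmaX)$. You instead exploit Gaussianity directly: writing $\bx_k=\bV\bg_k$ with $\bV\bV^*=\SigmaX$ and $\bg_k\sim\calC\calN(\0,\id_s)$, the inequality $\lambda_s(\bV\bM\bV^*)\ge\lambda_{\min}(\bM)\,\sigma_s^2(\bV)$ for positive semidefinite $\bM$ (via the nonzero spectrum of $\bM^{1/2}\bV^*\bV\bM^{1/2}$) reduces everything to the smallest eigenvalue of an $s\times n$ standard complex Gaussian sample covariance, which is bounded below by an absolute constant once $n\gtrsim s$ — a condition implied by the theorem's hypothesis since $\sum_{j\le s}\lambda_j(\SigmaX)\ge s\,\lambda_s(\SigmaX)$. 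This is more elementary and in fact needs only $n\gtrsim s$ rather than the effective-rank condition, but it is tied to the Gaussian (or at least rotation-factorizable) model and would not yield the paper's heavy-tailed generalization. Your conditioning argument and the union bound giving $1-3e^{-c\min\{\alpha p,n\}}$ are both sound.
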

Theorem~\ref{thm:subspacerectangletriangle}(i) is a variation of \cite[Theorem 3]{cai2018rate}. Their result cannot be directly used for our purposes because it relies on the stronger assumption that the noise ($\bXi_n$ in this case) has i.i.d.\ sub-Gaussian entries. Our proof of \cref{thm:subspacerectangletriangle}(i) follows the same strategy as in \cite{cai2018rate} with suitable modifications, and can be found in \cref{sec:CZvariation}. \cref{thm:subspacerectangletriangle}(ii) then follows from (i) by a high probability bound of $\lambda_s(\tfrac{1}{n} \bX_n\bX_n^*)$ in terms of $\lambda_s(\SigmaX)$ that can be found in Section~\ref{sec:CZvariationII}.

\begin{remark}
    We restrict ourselves to complex Gaussian distributions here for the sake of simplicity and their widespread use in theoretical works on spectral estimation. Note, however, that Theorem \ref{thm:subspacerectangletriangle}(ii) can be generalized to heavier-tailed distributions, cf.\ Theorem \ref{thm:subspacerectangletriangle_heavytailed}.
\end{remark}

\subsection{Comparisons between rectangular and triangular dithering}
\label{sec:comparison}

Let us carefully compare \cref{thm:subspacerectangle} for rectangular and \cref{thm:subspacerectangletriangle} for triangular {dithering}. 
On the one hand, \cref{thm:subspacerectangle} puts milder assumptions on the eigengap of $\SigmaX$  
{and does not require a rank-defective covariance matrix. On the other hand,} the error bound in \cref{thm:subspacerectangletriangle} improves on the dependence of the condition number of $\SigmaX$ for low-noise regimes with sufficiently high bit rates, i.e., $\nu^2+\mu^2\lesssim \lambda_r(\SigmaX)$.

To make this more precise, let us consider noisy samples $\by_1,\dots,\by_n$ following Assumptions \ref{assump:main} and \ref{assump:noise2} with underlying deterministic data $\bx_1,\dots,\bx_n$ such that $\SigmaX = \frac{1}{n} \bX_n \bX_n^*$ has rank $r$.\footnote{For random data, the comparison follows analogously.} Using \cref{thm:subspacerectangle} with $s=r$, $\lambda^2=C_K C_{\infty}\log(n)$ for a $C_K$ that only depends on $K$, and $t=\alpha \log(p)$, for any $\alpha\geq 1$ and $n\geq (\alpha+1)\lambda^2 p \log(p)$,  we have  with probability at least $1-p^{-\alpha}$ that 
\begin{equation}
	\label{eq:EVbound}
	\dist(\bU_n^\square,\bU)
	\leq \min \left\{ 1, \, \frac{C_K(\Cop^{1/2}+\lambda)}{\lambda_r(\SigmaX)} \sqrt{\frac{(\alpha+1) p \log (p)}{n}} \right\}.
\end{equation}
To see the dependence of the right side on $\SigmaX$, first notice that
$$
\Cop 
= \max_{k\in [n]} \|\bx_k\bx_k^*\|^2+\nu^2 
\geq \lambda_1(\SigmaX)+\nu^2. 
$$
Defining {$\kappa_r(\SigmaX) := \lambda_1(\SigmaX)/\lambda_r(\SigmaX)$ as the $r$-th condition number} of the covariance matrix, the right hand side of \eqref{eq:EVbound} decays not faster than
\begin{align}
    \label{eq:OmegaBound}
    \Omega \left( \max \left\{ \sqrt{\frac{{\kappa_r(\SigmaX)}}{\lambda_r(\SigmaX)}}, \, {\sqrt{\frac{\nu^2 +\lambda^2}{\lambda_r(\SigmaX)}}} \right\} \sqrt{\frac{{\alpha} p}{n}}\right).
\end{align}

\cref{thm:subspacerectangletriangle} provides a comparable rate in both $n$ and $\lambda_r(\SigmaX)$, unless $\nu^2+\mu^2 \lesssim \lambda_r(\SigmaX)$. 
For a sufficiently large absolute constant $c>0$, let 
\begin{align}
	\label{eq:BitRate}
	b \geq \left\lceil \log_2\left( 2+\frac{c}{\sqrt{\lambda_r(\SigmaX)}} \max_{k\in[n]} \| \by_k \|_\infty \right) \right\rceil. 
\end{align}
If $\nu^2 \lesssim \lambda_r(\SigmaX)$ and $b$ satisfies \eqref{eq:BitRate}, then by Theorem \ref{thm:subspacerectangletriangle} there is an absolute $C>0$ such that, with probability at least $1-e^{-c_K \alpha p}$,
\begin{equation}
	\label{eq:SVDbound}
	\dist(\bU_n^\triangle,\bU)
	\leq \min \left\{ 1, \, C\sqrt{\frac {\nu^2+\mu^2}{\lambda_r(\SigmaX)}} \sqrt{\frac{\alpha p}{n}} \right\}. 
\end{equation}
{Comparing \eqref{eq:OmegaBound} with \eqref{eq:SVDbound} for ill-conditioned problems, the latter bound clearly improves on \eqref{eq:EVbound} whenever the additional assumptions hold.}

\section{Performance guarantees for ESPRIT in quantized DOA}
\label{sec:ESPRITforDOA}

We now apply the {quantized subspace estimation techniques} to multi-snapshot spectral estimation {also known as DOA estimation}. A mathematical formulation of the problem without quantization is as follows. 

\begin{assumption}
    \label{assump:specestimationproblem}
    Let $s,p,n\in \N$ with $n>p > s$. Suppose $\btheta=(\theta_1,\dots,\theta_s)\in (\R/\Z)^s=[0,1)^s$ consists of distinct elements. 
    \begin{enumerate}[(i)]
        \item 
        Let $\ba_1,\dots,\ba_n\in \F^s$ be deterministic (d) or stochastic (s) vectors. We then either
        \begin{enumerate}
            \item[(d)] assume $\SigmaA:=\frac 1 n \sum_{k=1}^n \ba_k \ba_k^*$ has rank $s$ or 
            \item[(s)] assume $\ba_1,\dots,\ba_n\overset{\text{i.i.d.}}{\sim} \ba$, $\ba\in \F^p$ is $K/2$-subgaussian, and $\SigmaA:=\E \ba \ba^*$ has rank $s$. 
        \end{enumerate}
        \item 
        Let $\be_1,\dots,\be_n\overset{\text{i.i.d.}}{\sim} \be$, where $\be\in\F^p$ is $K$-subgaussian with uncorrelated entries that are mean zero and have variance $\nu^2$. If $\F=\C$, then we assume that the real and imaginary parts of the entries of $\be$ are independent.
        \item 
        Let $\ba_1,\dots,\ba_n,\be_1,\dots,\be_n$ be independent. 
        \item 
        For each $k \in [n]$, set
        \begin{align*}
        \bPhi&:=\bPhi(\btheta) = \Big[ e^{-2\pi \i j \theta_\ell} \Big]_{j\in [p]-1,\, \ell \in [s]} \in {\C^{p \times s}}, \\
        \by_k&:=\bPhi \ba_k+\be_k \in \C^{p}.
        \end{align*}
    \end{enumerate}
\end{assumption}

Here, $\btheta=(\theta_1,\ldots,\theta_s)$ represents a collection of $s$ point sources (or frequencies or objects) at distinct locations (or angles) and $\ba_1,\dots,\ba_n\in \F^s$ represent their amplitudes at various times, say $t_1,\dots,t_n$. The amplitudes can be deterministic or stochastic.  
As before, $\be_k$ represents the noise. The noiseless data $\bPhi \ba_k$ {consists of} the first $p$ Fourier coefficients of the point sources at time $t_k$. The goal is to estimate $\btheta$ as accurately as possible from just the data $\bY_n$ and perhaps additional weak information. This inverse problem, {which goes by many names such as spectral estimation, super-resolution, and direction-of-arrival estimation,} is nonlinear since $\btheta$ is unknown and $\bPhi$ generally has non-orthogonal columns. 

We are interested in {its quantized counterpart}, whereby each analog measurement $\by_k$ is put through a quantizer $Q$ of our design, to output $\bq_k := Q(\by_k).$ The introduction of $Q$ models hardware limitations. One typically needs $p$ sensors to collect $p$ Fourier coefficients, and since the production of a high-resolution sensor is expensive, collecting a large number of high-precision Fourier measurements may be infeasible. Instead, it may be better to use a greater number of low precision sensors, but utilize them in a smarter way, such that accurate estimation of $\btheta$ is {still }possible.

Spectral estimation is intrinsically linked to subspace estimation. This stems from the fact that estimating the ``signal subspace" on which $\bx_1,\dots,\bx_n$ lie is an important first step in celebrated spectral estimation algorithms like MUSIC \cite{schmidt1986multiple}, ESPRIT \cite{kailath1989esprit}, and MPM \cite{hua1990matrixpencil}. Our results on subspace estimation are relevant to all three algorithms, but a complete treatment is beyond the scope of this paper. We will focus on (multi-snapshot) ESPRIT since it is a computationally efficient algorithm with provable recovery guarantees. 

An accurate deterministic perturbation and resolution analysis of ESPRIT has been worked out in \cite{li2022stability}. Postponing the technical details for the time being, the authors showed that if $\hat\bU$ is a sufficiently good approximation of $\bU$, an orthonormal basis for the range of $\bPhi$, then the output $\widehat\btheta$ of ESPRIT satisfies
\begin{equation}
	\label{eq:mdsintheta}
	\md(\hat\btheta,\btheta) \lesssim_{s} \dist(\hat\bU,\bU). 
\end{equation}
Here, we define $|\theta|_\T:=\min_{n\in\Z} |\theta+n|$ and $\md$ denotes the matching distance,   
$$
\md(\hat\btheta,\btheta)
:= \min_{{\pi}} \max_k |\theta_k-\hat\theta_{{\pi(k)}}|_\T,
$$
where the minimum is taken over all permutations {$\pi$} of $[s]$. A formal version of \eqref{eq:mdsintheta} is stated in \cref{lem:mdsintheta}.

Inequality \eqref{eq:mdsintheta} {implies} that any good estimator $\hat\bU$ for $\bU$ can be used in the ESPRIT algorithm to obtain recovery guarantees for spectral estimation. In particular, we generate subspace estimates $\bU_n^\square$ or $\bU_n^\triangle$ from quantized samples, see \cref{alg:subspace}, to obtain estimators $\btheta_n^\square$ and $\btheta_n^\triangle$ of $\btheta$ via ESPRIT, cf.\ \cref{alg:espritquan}. 

\begin{algorithm}[t]
	\begin{algorithmic}
		\Require Parameter $s$ and dithered quantized measurements $\{\bq_k^\square, \dot \bq_k^\square\}_{k=1}^{n}$ or $\{\bq_k^\triangle\}_{k=1}^{n}$.
		\State 1. Form the covariance estimator $\SIGMADITHuniform_n$ {in \eqref{eq:TwoBitEstimator}} or $\SIGMADITHtriangular_n$ {in \eqref{eq:Sigmatri}}. \smallskip 
		\State 2. Compute the leading $s$ eigenvectors of $\SIGMADITHuniform_n$ or $\SIGMADITHtriangular_n$. \smallskip  
		\Ensure  $\bU_n^\square$ or $\bU_n^\triangle$.
	\end{algorithmic}
	\caption{Subspace estimation with quantized measurements}
	\label{alg:subspace}	
\end{algorithm}

\begin{algorithm}[t]
	\begin{algorithmic}
		\Require Number of sources $s$, quantized measurements either $\{\bq_k^\square, \dot \bq_k^\square\}_{k=1}^{n}$ or $\{\bq_k^\triangle\}_{k=1}^n$. \medskip 			
		\State 1. Let $\hat\bU\in \O^{p\times s}$ be either $\bU_n^\square$ or $\bU_n^\triangle$, which are computed according to \cref{alg:subspace}. \smallskip 
		\State 2. Let $\hat\bU_{0}$ and $\hat\bU_{1}$ be two submatrices of $\hat\bU$ containing the first and last $p-1$ rows, respectively. Compute the $s$ eigenvalues $\hat\lambda_1,\dots,\hat\lambda_s$ of the matrix $\hat \bPsi:=\hat\bU_{0}^{\dagger} \hat\bU_{1}.$ 
		\medskip 
		\Ensure $\hat\btheta=(\hat\theta_1,\dots,\hat\theta_s)$, where $\hat\theta_k=-\arg(\hat\lambda_k)/(2\pi)$ and $\arg(re^{\i \varphi})=\varphi\in [0,2\pi)$. 
	\end{algorithmic}
	\caption{Multi-snapshot ESPRIT with quantized measurements}
	\label{alg:espritquan}	
\end{algorithm}

From the analysis of ESPRIT in the unquantized setting \cite{li2022stability}, we know that its stability to noise greatly depends on the geometry of $\btheta$. Recall that the minimum separation of $\btheta$ is defined as 
$$
\Delta
:=\Delta(\btheta)
:=\min_{k\not=k'} |\theta_k-\theta_{k'}|_{\T}.
$$
The best possible scenario is when $\Delta(\btheta)$ is large compared to $\frac 1p$, where $p$ stands for the number of Fourier measurements. This is evident in the condition number {$\kappa_s(\bPhi)$} of $\bPhi$. It was shown in \cite{aubel2017vandermonde} that $\Delta(\btheta)> \frac 1 p$ implies
\begin{equation}
	\label{eq:wellsep}
	p - \frac 1{\Delta(\btheta)}
	\leq  \sigma_s^2(\bPhi)
	\leq \sigma_1^2(\bPhi)
	\leq p + \frac 1{\Delta(\btheta)}.
\end{equation}
This is of course only possible when enough Fourier samples are collected. The opposite situation is also of interest, especially in the context of super-resolution. When $\Delta(\btheta)$ is significantly smaller than $\frac 1 p$, the condition number of $\bPhi$ no longer admits a lower bound with a simple expression. It predominantly depends on the number of elements in $\btheta$ that lie within an interval on the order of $\frac 1 p$ and their geometry at small scales. We refer the reader to \cite{li2024multiscale} for further details and the state-of-the-art results.

\subsection{Four-bit spectral estimation by rectangular dithering}

We are ready to derive our main results for ESPRIT when the measurements are quantized via a rectangular dither as outlined in \cref{sec:rectangular}. Recall that each analog measurement $\by_k\in \C^p$ is quantized to $\bq_k^\square,\dot\bq_k^\square\in \{\pm 1 \pm \i\}^p$. {Despite being cheap to implement via one-bit ADCs, see Remark \ref{rem:ADC}}, in total this is a four-bit quantizer since each $(\by_k)_j\in \C$ is quantized to a pair of complex numbers $((\bq_k)_j, (\dot \bq_k)_j)\in \{\pm 1 \pm \i\}^2$, which can be specified by four bits. Ultimately, it thus requires $4n$ bits to store all $n$ quantized samples.

We start with the general setting where no assumptions on $\Delta(\btheta)$ are made. The following theorem is proved in \cref{proof:espritrectangular}. {The factors $\gamma_1$ and $\gamma_2$ appearing therein are defined depending on whether the amplitudes are deterministic or stochastic, see Assumption \ref{assump:specestimationproblem}. 
\begin{enumerate}[(i)]
		\item 
        For deterministic amplitudes, let $\gamma_1:=\max_{k\in [n]} \|\ba_k\ba_k^*\|_1$ and $\gamma_2:=\max_{k\in [n]} \|\ba_k\|_2$. 
        \item 
        For stochastic amplitudes, let $\gamma_1:=\|\SigmaA\|_1$ and $\gamma_2:=\sqrt{\lambda_1(\SigmaA)}$.
\end{enumerate}
}

\begin{theorem}
	\label{thm:espritrectangular}
    Suppose \cref{assump:specestimationproblem} holds. There exist constants $B_K,C_K>0$ that depend only on $K$ such that the following hold. 
    For any $\lambda^2 \geq B_K \log(n) (\gamma_1+ \nu^2)$ and $t\geq 1$, whenever 
    \begin{equation}
			\label{eq:ncondition}
			n
            \geq \max\left\{ (t+1)\lambda^2 p \log(p), \, \frac{ C_K^2 16^s s^3 (\gamma_2\sigma_1(\bPhi)+\nu+\lambda)^2 (t+1)}{\lambda_s^2(\SigmaA)} \frac{p^3 \log(p)}{\sigma_s^8(\bPhi) \Delta^2} \right\},
	\end{equation}
	with probability at least $1-p^{-t}$, we have	{for $\btheta^\square_n$ computed via Algorithms \ref{alg:subspace} and \ref{alg:espritquan} that}
	\begin{equation} \label{eq:espritrect2}
		\md(\btheta^\square_n,\btheta)
		\leq \min \left\{ \frac 12,  \,  \frac{C_K 4^s (\gamma_2\sigma_1(\bPhi)+\nu+\lambda)}{\sigma_s^2(\bPhi) \lambda_s(\SigmaA)}
           \sqrt{\frac{(1+t)p \log (p)}{n}} \right\}.
	\end{equation}
\end{theorem}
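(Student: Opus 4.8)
The plan is to combine the subspace estimation guarantee for rectangular dithering (Theorem~\ref{thm:subspacerectangle}) with the deterministic perturbation bound for ESPRIT from \cite{li2022stability} as summarized in \eqref{eq:mdsintheta}. The first step is to specialize Theorem~\ref{thm:subspacerectangle} to the DOA setting. Under Assumption~\ref{assump:specestimationproblem}, the signal vectors are $\bx_k = \bPhi\ba_k$, so $\SigmaX = \bPhi\SigmaA\bPhi^*$ (with $\SigmaA$ the deterministic or stochastic amplitude covariance), and since $\bPhi$ has full column rank $s$ and $\SigmaA$ has rank $s$, we get $\rank(\SigmaX)=s$ with an eigengap $\lambda_s(\SigmaX)>\lambda_{s+1}(\SigmaX)=0$, as required. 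I would bound $\lambda_s(\SigmaX)\ge \sigma_s^2(\bPhi)\lambda_s(\SigmaA)$ via Weyl/the min-max principle, and bound $C_\infty$ and $\Cop$ (as in Remark~\ref{rem:Cop}) in terms of $\gamma_1,\gamma_2$: indeed $\|\SigmaX\|_\infty \le \|\SigmaX\|_1 \le \|\bPhi\|_{1\to 1}^2\|\SigmaA\|_1$-type estimates give $C_\infty \lesssim \gamma_1 + \nu^2$ (matching the hypothesis $\lambda^2\ge B_K\log(n)(\gamma_1+\nu^2)$), while $\Cop = \|\SigmaX\|+\nu^2 \le \sigma_1^2(\bPhi)\|\SigmaA\| + \nu^2$, so $\Cop^{1/2}+\lambda \lesssim \gamma_2\sigma_1(\bPhi)+\nu+\lambda$. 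Plugging these into \eqref{eq:subspace} with $t$ replaced by $t\log(p)$ yields, on an event of probability $\ge 1-p^{-t}$,
\[
\dist(\bU_n^\square,\bU) \lesssim_K \frac{\gamma_2\sigma_1(\bPhi)+\nu+\lambda}{\sigma_s^2(\bPhi)\lambda_s(\SigmaA)}\sqrt{\frac{(1+t)p\log p}{n}} + \frac{\lambda^2}{\sigma_s^2(\bPhi)\lambda_s(\SigmaA)}\cdot\frac{(1+t)p\log p}{n}.
\]

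The second step is to invoke the ESPRIT perturbation bound. The quantitative form of \eqref{eq:mdsintheta} (Lemma~\ref{lem:mdsintheta}) requires $\dist(\bU_n^\square,\bU)$ to be below a threshold that scales like $\sigma_s^2(\bPhi)\Delta / (4^s s^{3/2}\sqrt p)$ (this is where the $4^s$, $s^3$, $p^3$, $\sigma_s^8(\bPhi)$, $\Delta^2$ factors in \eqref{eq:ncondition} come from — squaring that threshold and matching it against the leading $1/\sqrt n$ term), and then concludes $\md(\btheta_n^\square,\btheta)\lesssim_s \dist(\bU_n^\square,\bU)/(\sigma_s^2(\bPhi)\Delta)$ or a similar expression; the precise constants and powers of $p,s$ I would read off from \cite{li2022stability}. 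So I would use the sample complexity lower bound \eqref{eq:ncondition} twice: the first term $n\ge (t+1)\lambda^2 p\log p$ ensures the quadratic term in the display above is dominated by the linear term (so it contributes only a constant factor), and the second term ensures $\dist(\bU_n^\square,\bU)$ is small enough both for the $\min\{1,\cdot\}$ in Theorem~\ref{thm:subspacerectangle} to be inactive and for the ESPRIT threshold in Lemma~\ref{lem:mdsintheta} to be met. Concretely, under \eqref{eq:ncondition} the subspace error is at most $C_K 4^s(\gamma_2\sigma_1(\bPhi)+\nu+\lambda)\sigma_s^2(\bPhi)^{-1}\lambda_s(\SigmaA)^{-1}\sqrt{(1+t)p\log p / n}$ times a factor that, after dividing by the $\sigma_s^2(\bPhi)\Delta$ from ESPRIT, is absorbed into the stated bound; then \eqref{eq:mdsintheta} directly gives \eqref{eq:espritrect2}, with the $\min\{1/2,\cdot\}$ coming from the fact that $\md$ is always at most $1/2$ on the torus.

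The main bookkeeping obstacle — rather than a genuine difficulty — is tracking the exact polynomial-in-$p$, exponential-in-$s$, and $\sigma_s(\bPhi)$, $\Delta$ dependencies so that the threshold condition from \cite{li2022stability}, once squared, reproduces exactly the second branch of \eqref{eq:ncondition}; in particular one must verify that the ESPRIT stability radius scales as claimed and that the $16^s$ (versus $4^s$) appears precisely because the threshold enters squared in \eqref{eq:ncondition} but linearly in \eqref{eq:espritrect2}. A secondary point is confirming that the stochastic-amplitude case goes through with the same argument: there one additionally needs a high-probability lower bound on $\lambda_s(\tfrac1n\sum_k \bPhi\ba_k\ba_k^*\bPhi^*)$ in terms of $\sigma_s^2(\bPhi)\lambda_s(\SigmaA)$, which follows from a matrix concentration bound for the sum of independent rank-one subgaussian terms (and may require a mild sample-size condition $n\gtrsim s$, implied by \eqref{eq:ncondition}); but since Theorem~\ref{thm:subspacerectangle} is already stated to cover both the deterministic and stochastic cases of Assumption~\ref{assump:main}, this can be folded in by a direct application, with $\SigmaX = \E(\bPhi\ba\ba^*\bPhi^*) = \bPhi\SigmaA\bPhi^*$. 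Everything else is a substitution of the quantities from Remark~\ref{rem:Cop} and the identifications $\gamma_1,\gamma_2$ into already-proven estimates.
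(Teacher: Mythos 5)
Your proposal is correct and follows essentially the same route as the paper: reduce Assumption~\ref{assump:specestimationproblem} to Assumption~\ref{assump:main} via $\bx_k=\bPhi\ba_k$ (so $\SigmaX=\bPhi\SigmaA\bPhi^*$ has rank $s$ and $\lambda_s(\SigmaX)\ge\sigma_s^2(\bPhi)\lambda_s(\SigmaA)$), bound $C_\infty$ and $\Cop$ by $\gamma_1+\nu^2$ and $\gamma_2^2\sigma_1^2(\bPhi)+\nu^2$, apply Theorem~\ref{thm:subspacerectangle} with $t\log p$ in place of $t$ (using the first branch of \eqref{eq:ncondition} to absorb the quadratic term), and then use the second branch of \eqref{eq:ncondition} to meet the threshold $\dist(\bU_n^\square,\bU)\le\sigma_s^2(\bPhi)\Delta/(4^{s+2}s^2p)$ in Lemma~\ref{lem:mdsintheta}, whose conclusion $\md(\btheta_n^\square,\btheta)\le 4^{s+2}\dist(\bU_n^\square,\bU)$ gives \eqref{eq:espritrect2}. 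Your worry about a separate high-probability lower bound on $\lambda_s(\tfrac1n\bX_n\bX_n^*)$ in the stochastic case is unnecessary here, exactly as you suspected, since Theorem~\ref{thm:subspacerectangle} already handles both cases through $\SigmaX=\E(\bx\bx^*)$.
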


We conclude this section by considering the well-separated case where $\Delta(\btheta)\geq \frac 2 p$. In this case, we use inequality \eqref{eq:wellsep} and hence we obtain the following corollary.

\begin{corollary}
	\label{cor:espritrectangular}
    Suppose \cref{assump:specestimationproblem} holds and that $\Delta(\btheta)\geq \frac 2 p$. There exist constants $B_K,C_K>0$ that depend only on $K$ such that the following hold. 
	For any $\lambda^2 \geq B_K \log(n) (\gamma_1+ \nu^2)$ and $t\geq 1$, whenever 
    \begin{equation*}
        n\geq \max\left\{(t+1)\lambda^2 p \log(p), \, \frac{ C_K^2 16^s s^3 (\gamma_2 \sqrt p+\nu+\lambda)^2 (t+1) p \log(p)}{\lambda_s^2(\SigmaA)}\right\} ,
	\end{equation*}
	with probability at least $1-p^{-t}$, we have	{for $\btheta^\square_n$ computed via Algorithms \ref{alg:subspace} and \ref{alg:espritquan} that}
	\begin{equation*}
	    \md(\btheta^\square_n,\btheta)
	    \leq \min \left\{ \frac 12,  \,  \frac{C_K 4^s (\gamma_2 \sqrt p +\nu+\lambda)}{\lambda_s(\SigmaA) \sqrt p}      \sqrt{\frac{(1+t)\log (p)}{n}} \right\}.
	\end{equation*}
\end{corollary}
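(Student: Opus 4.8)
The plan is to obtain \cref{cor:espritrectangular} as an immediate specialization of \cref{thm:espritrectangular}, using the well-separation hypothesis $\Delta(\btheta)\ge 2/p$ only to replace the singular-value quantities of $\bPhi$ by explicit powers of $p$. First I would invoke the Vandermonde estimate \eqref{eq:wellsep}: since $\Delta(\btheta)\ge 2/p>1/p$ we have $1/\Delta(\btheta)\le p/2$, hence $\sigma_s^2(\bPhi)\ge p-1/\Delta(\btheta)\ge p/2$ and $\sigma_1^2(\bPhi)\le p+1/\Delta(\btheta)\le 3p/2$. Thus $\sigma_1(\bPhi)\simeq\sqrt p$ and $\sigma_s^2(\bPhi)\simeq p$ with \emph{absolute} implicit constants, so these replacements cost nothing beyond constants that can be absorbed into $B_K,C_K$.

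Next I would substitute into the two parts of \cref{thm:espritrectangular}. In the sample-complexity condition \eqref{eq:ncondition}, the first term $(t+1)\lambda^2 p\log(p)$ is unchanged, while in the second term $(\gamma_2\sigma_1(\bPhi)+\nu+\lambda)^2\simeq(\gamma_2\sqrt p+\nu+\lambda)^2$ and $p^3\log(p)/(\sigma_s^8(\bPhi)\Delta^2)\le p^3\log(p)\cdot 16 p^{-4}\cdot p^2/4=4p\log(p)$, using $\sigma_s^8(\bPhi)\ge(p/2)^4$ and $\Delta^2\ge 4/p^2$; the factor $4$ merges into $C_K^2$ and one recovers precisely the stated condition $n\ge\max\{(t+1)\lambda^2p\log(p),\,C_K^2 16^s s^3(\gamma_2\sqrt p+\nu+\lambda)^2(t+1)p\log(p)/\lambda_s^2(\SigmaA)\}$. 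For the error bound \eqref{eq:espritrect2}, replacing $\sigma_1(\bPhi)$ by $\lesssim\sqrt p$ in the numerator and $\sigma_s^2(\bPhi)$ by $\gtrsim p$ in the denominator turns the prefactor into $\lesssim_K 4^s(\gamma_2\sqrt p+\nu+\lambda)/(p\,\lambda_s(\SigmaA))$; factoring one $\sqrt p$ out of $\sqrt{(1+t)p\log(p)/n}$ yields $\md(\btheta_n^\square,\btheta)\le\min\{1/2,\,C_K 4^s(\gamma_2\sqrt p+\nu+\lambda)/(\lambda_s(\SigmaA)\sqrt p)\cdot\sqrt{(1+t)\log(p)/n}\}$, which is the claim. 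The probability level $1-p^{-t}$ and the dither requirement $\lambda^2\ge B_K\log(n)(\gamma_1+\nu^2)$ carry over verbatim.

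This argument is essentially bookkeeping rather than a new proof; the only points needing care are checking that each implicit constant introduced in the substitutions is absolute or $K$-dependent so that it can be folded into $B_K$ or $C_K$, and noting that the strict inequality $\Delta(\btheta)>1/p$ required for \eqref{eq:wellsep} follows from $\Delta(\btheta)\ge 2/p$ with enough slack that the bounds $p/2\le\sigma_s^2(\bPhi)\le\sigma_1^2(\bPhi)\le 3p/2$ do not degenerate (which is exactly what the factor $2$ in the hypothesis buys). I do not expect any obstacle here beyond those already handled in \cref{thm:espritrectangular}.
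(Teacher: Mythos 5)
Your proposal is correct and is exactly the paper's argument: the paper derives the corollary from \cref{thm:espritrectangular} by applying \eqref{eq:wellsep} under $\Delta(\btheta)\ge 2/p$ to get $\sigma_s^2(\bPhi)\ge p/2$ and $\sigma_1^2(\bPhi)\le 3p/2$, and then absorbing the resulting absolute factors into $C_K$. The bookkeeping you carry out (bounding $p^3\log(p)/(\sigma_s^8(\bPhi)\Delta^2)$ by a constant times $p\log(p)$ and factoring $\sqrt p$ out of the error bound) matches the intended derivation.
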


\subsection{Multi-bit spectral estimation by triangular dithering}

{Let us now turn to analyzing} ESPRIT when the measurements are quantized via a triangular dither as outlined in \cref{sec:triangular}. Recall that each analog measurement $\by_k\in \C^p$ is quantized to $\bq_k^\triangle \in \calA_\mu$. {Further recall from Remark \ref{rem:Bbit} that while they are proved for the infinite range quantizer $Q_\mu$, the results also apply to finite $b$-bit quantizers by concentration of the data and noise distributions.}
We start with a general result where $\Delta(\btheta)$ is arbitrary. The following theorem  is proved in \cref{proof:esprittriangular}. 

\begin{theorem}
	\label{thm:esprittriangular}
	Suppose Assumptions \ref{assump:noise2} and \ref{assump:specestimationproblem} hold and assume that 
    \begin{equation}
		\label{eq:noisecond1}
		\nu^2+\mu^2\leq \sigma_s^2(\bPhi)\lambda_s(\SigmaA).
	\end{equation}
    There exist absolute constants $C,c_1>0$, and constants $c_K,\alpha_K>0$ depending only on $K$, such that the following holds. Let {$\btheta^\triangle_n$ be computed via Algorithms \ref{alg:subspace} and \ref{alg:espritquan}.}
	\begin{enumerate}[(i)]
		\item 
		For deterministic $\ba_1,\dots,\ba_n$ and any $\alpha\geq \alpha_K$, assume that 
		\begin{equation}
			\label{eq:ncond1}
			n \geq \frac{C^2 16^s s^4 (\nu^2+\mu^2)}{\lambda_s(\SigmaA)}  \frac {\alpha p^2}{\sigma_s^6(\bPhi)\Delta^2}.
		\end{equation}
        Then, with probability at least $1-e^{-c\alpha p}$,
    	\begin{equation}
    		\label{eqn:esprittriangularEst}
    		\md(\btheta_n^\triangle,\btheta) 
    		\leq \min \left\{ \frac 12, \, \, C 4^s \sqrt{\frac {\nu^2+\mu^2}{\sigma_s^2(\bPhi)\lambda_s(\SigmaA)}}\sqrt{\frac{\alpha p}{n}}  \right\}.
    	\end{equation}
		\item 
        For $\ba_1,\dots,\ba_n \overset{i.i.d.}{\sim} \ba$ with $\ba \sim \mathcal \calC\calN(\0,\SigmaA)$, any $\alpha \ge \alpha_K$, and any $\beta \ge 1$, assume that 
        \begin{equation}
			\label{eq:ncond2}
			n \geq \max\left\{ {c_1 \beta \Bigg(\sum_{j=1}^s \kappa_j^4(\bPhi)\Bigg)^{1/2} \Bigg(\sum_{j=1}^s \kappa_j^2(\SigmaA)\Bigg)^{1/2}}, \, \frac{C^2 16^s s^4 (\nu^2+\mu^2)}{\lambda_s(\SigmaA)}  \frac{\alpha p^2}{\sigma_s^6(\bPhi)\Delta^2} \right\}.
		\end{equation}
        Then, with probability at least $1-3e^{-c_K \min\{\alpha p, \beta r \}}$, 
 		\begin{equation}
    		\label{eqn:esprittriangularEst2}
    		\md(\btheta_n^\triangle,\btheta) 
    		\leq \min \left\{ \frac 12, \, \, C 4^s \sqrt{\frac {\nu^2+\mu^2}{\sigma_s^2(\bPhi)\lambda_s(\SigmaA)}}\sqrt{\frac{\alpha p}{n}}  \right\}.
    	\end{equation}
	\end{enumerate}
\end{theorem}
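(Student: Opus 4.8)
The plan is to combine the subspace estimation guarantees of Theorem \ref{thm:subspacerectangletriangle} with the deterministic perturbation bound for ESPRIT, inequality \eqref{eq:mdsintheta} (formalized in \cref{lem:mdsintheta}), exactly as outlined in the paragraph preceding the statement. The key observation is that under \cref{assump:specestimationproblem}, the ``clean'' signals are $\bx_k = \bPhi \ba_k$, so the population second-moment matrix is $\SigmaX = \bPhi \SigmaA \bPhi^*$, which has rank $s$ (since $\bPhi$ has full column rank $s$ and $\SigmaA$ has rank $s$), and its leading $s$-dimensional eigenspace coincides with $\bU := \mathrm{range}(\bPhi)$, an orthonormal basis for which is exactly the object ESPRIT needs. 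Moreover, by the Courant--Fischer min-max principle, $\lambda_s(\SigmaX) = \lambda_s(\bPhi \SigmaA \bPhi^*) \ge \sigma_s^2(\bPhi)\, \lambda_s(\SigmaA)$. This single inequality is what converts the noise condition \eqref{eq:noisecond1} into the hypothesis $\nu^2 + \mu^2 \le \lambda_s(\SigmaX)$ needed by Theorem \ref{thm:subspacerectangletriangle}, and it converts the sample-complexity and error bounds in that theorem (which are stated in terms of $\lambda_s(\SigmaX)$) into bounds in terms of $\sigma_s^2(\bPhi)\lambda_s(\SigmaA)$.

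For part (i), I would proceed as follows. First, verify that the hypotheses of \cref{assump:main} and \cref{assump:noise2} are inherited from \cref{assump:specestimationproblem}: the noise assumption is identical, and the data vectors $\bx_k = \bPhi\ba_k$ are deterministic, so the deterministic branch of \cref{thm:subspacerectangletriangle}(i) applies with $\SigmaX = \bPhi\SigmaA\bPhi^*$. Next, apply \cref{thm:subspacerectangletriangle}(i): using $\lambda_s(\SigmaX) \ge \sigma_s^2(\bPhi)\lambda_s(\SigmaA)$ together with \eqref{eq:noisecond1}, the factor $(1 + \sqrt{(\nu^2+\mu^2)/\lambda_s(\SigmaX)})$ is bounded by an absolute constant, so on an event of probability at least $1 - e^{-c\alpha p}$ we get $\dist(\bU_n^\triangle, \bU) \lesssim \sqrt{(\nu^2+\mu^2)/(\sigma_s^2(\bPhi)\lambda_s(\SigmaA))}\,\sqrt{\alpha p/n}$. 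Then feed this into \cref{lem:mdsintheta}: that lemma requires $\dist(\bU_n^\triangle,\bU)$ to be below a threshold of the form $\sim \sigma_s^3(\bPhi)\Delta/(4^s s^{3/2} p^{1/2})$ (up to constants), and the sample-complexity condition \eqref{eq:ncond1} is precisely chosen so that the right-hand side of the subspace bound drops below this threshold — so one checks that \eqref{eq:ncond1} implies the hypothesis of \cref{lem:mdsintheta}. Applying the lemma yields $\md(\btheta_n^\triangle,\btheta) \lesssim_s p\,\dist(\bU_n^\triangle,\bU)/(\sigma_s^2(\bPhi)\Delta)$ with the explicit $s$- and $p$-dependence from \cite{li2022stability}; substituting the subspace bound and simplifying constants gives \eqref{eqn:esprittriangularEst}, and intersecting with the trivial bound $\md \le 1/2$ closes the argument.

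Part (ii) is identical except that one invokes \cref{thm:subspacerectangletriangle}(ii) instead. Here the extra sample-complexity term appears: \cref{thm:subspacerectangletriangle}(ii) requires $n \gtrsim \sum_{j=1}^s \lambda_j(\SigmaX)/\lambda_s(\SigmaX)$, and one must translate $\sum_j \lambda_j(\bPhi\SigmaA\bPhi^*)$ and $\lambda_s(\bPhi\SigmaA\bPhi^*)$ into the quantities $\sum_{j=1}^s \kappa_j^4(\bPhi)$ and $\sum_{j=1}^s \kappa_j^2(\SigmaA)$ appearing in \eqref{eq:ncond2}; this uses $\lambda_j(\bPhi\SigmaA\bPhi^*) \le \sigma_1^2(\bPhi)\lambda_j(\SigmaA)$ and $\lambda_s(\bPhi\SigmaA\bPhi^*) \ge \sigma_s^2(\bPhi)\lambda_s(\SigmaA)$ coordinatewise, followed by Cauchy--Schwarz to split the ratio into the product of the two sums under square roots, with the role of $\beta$ being to absorb the high-probability failure term $3e^{-c\min\{\alpha p, \beta r\}}$ where $r = s$. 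The main obstacle I anticipate is not conceptual but bookkeeping: carefully tracking the $4^s$, $16^s$, $s^3$, $s^4$, $\sigma_s(\bPhi)$, $p$, and $\Delta$ factors through \cref{lem:mdsintheta} so that the sample-complexity conditions \eqref{eq:ncond1}--\eqref{eq:ncond2} and the final error bounds come out with exactly the stated exponents — in particular making sure that squaring the threshold condition on $\dist$ produces the $p^2/(\sigma_s^6(\bPhi)\Delta^2)$ and $16^s s^4$ factors in \eqref{eq:ncond1} rather than off-by-a-power versions. A secondary point requiring care is confirming that $\ba$ being $K/2$-subgaussian (as in \cref{assump:specestimationproblem}(i)(s)) makes $\bx = \bPhi\ba$ fall within the scope of the Gaussian hypothesis of \cref{thm:subspacerectangletriangle}(ii) — which is why \cref{assump:specestimationproblem}(i)(s) specializes to $\ba \sim \calC\calN(\0,\SigmaA)$ in the statement of part (ii), so that $\bx \sim \calC\calN(\0, \bPhi\SigmaA\bPhi^*)$ is genuinely complex Gaussian and the theorem applies verbatim.
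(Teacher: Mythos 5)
Your proposal is correct and follows essentially the same route as the paper: reduce to \cref{assump:main} with $\bx_k=\bPhi\ba_k$ and $\lambda_s(\SigmaX)\ge\sigma_s^2(\bPhi)\lambda_s(\SigmaA)$ (the paper's Lemma \ref{lem:model}), apply \cref{thm:subspacerectangletriangle}(i) resp.\ (ii), use the sample-size conditions to verify the smallness threshold in \cref{lem:mdsintheta}, and for part (ii) bound $\sum_j\lambda_j(\SigmaX)/\lambda_s(\SigmaX)=\tr(\bPhi^*\bPhi\SigmaA)/(\sigma_s^2(\bPhi)\lambda_s(\SigmaA))$ via Cauchy--Schwarz to obtain the condition-number form of \eqref{eq:ncond2}. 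One small correction: once the threshold $\dist(\bU_n^\triangle,\bU)\le \sigma_s^2(\bPhi)\Delta/(4^{s+2}s^2p)$ is verified, the second statement of \cref{lem:mdsintheta} gives $\md(\btheta_n^\triangle,\btheta)\le 4^{s+2}\dist(\bU_n^\triangle,\bU)$ with no extra $p$, $\sigma_s(\bPhi)$, or $\Delta$ factors (contrary to the intermediate formula you quote), and it is exactly this clean multiplicative constant that yields \eqref{eqn:esprittriangularEst}.
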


{The noise assumption \eqref{eq:noisecond1} is not restrictive because the terms on both sides are independent of $n$, and $\sigma_s(\bPhi)$ alone grows in $p$.} Hence, the assumption will always be satisfied for fixed $\btheta$ and sufficiently large $p$. Moreover, recall that $\sigma_1(\bPhi)\in [\sqrt p, \sqrt{ps}]$. Hence, 
$
p\sigma_s^{-2}(\bPhi)
\leq {\kappa_s^2(\bPhi)}
$
so that the above estimates can be reformulated in terms of {$\kappa_s(\bPhi)$}. 

While the conditions and expressions in \cref{thm:espritrectangular} look complicated, they simplify into natural expressions in the well-separated case where $\Delta(\btheta)\geq \frac 2 p$. In this case, we use \eqref{eq:wellsep} to obtain the following corollary.

\begin{corollary}
	\label{cor:esprittriangular}
	Suppose Assumptions \ref{assump:noise2} and \ref{assump:specestimationproblem} hold, and assume that 
    $$
	\Delta(\btheta)\geq \frac 2 p \andspace 4(\nu^2+\mu^2)\leq p\lambda_s(\SigmaA).
	$$
    There exist absolute constants $C,c_1>0$, and constants $c_K,\alpha_K>0$ depending only on $K$, such that the following holds. Let {$\btheta^\triangle_n$ be computed via Algorithms \ref{alg:subspace} and \ref{alg:espritquan}.}
	\begin{enumerate}[(i)]
		\item 
		For deterministic $\ba_1,\dots,\ba_n$ and any $\alpha\geq \alpha_0$, assume that 
		\begin{equation*}
			n \geq \frac{C^216^s s^4 (\nu^2+\mu^2) \alpha p}{\lambda_s(\SigmaA)}.
		\end{equation*}
        Then, with probability at least $1-e^{-c_K\alpha p}$,
    	\begin{equation*}
    		\md(\btheta_n^\triangle,\btheta) 
    		\leq \min \left\{ \frac 12, \, C 4^s \sqrt{\frac{\nu^2+\mu^2}{\lambda_s(\SigmaA)}}\sqrt{\frac{\alpha }{n}} \right\}.
    	\end{equation*}
		\item 
        For $\ba_1,\dots,\ba_n \overset{i.i.d.}{\sim} \ba$ with $\ba \sim \mathcal \calC\calN(\0,\SigmaA)$, any $\alpha \ge \alpha_K$, and any $\beta \ge 1$, assume that 
        \begin{equation*}
            n \geq \max\left\{ {c_1 \beta \sqrt{s} \Bigg(\sum_{j=1}^s \kappa_j^2(\SigmaA)\Bigg)^{1/2}}, \, \frac{C^2 16^s s^4 (\nu^2+\mu^2) \alpha p }{\lambda_s(\SigmaA)}  \right\}.
		\end{equation*}
        Then, with probability at least $1-3e^{-c_K \min\{\alpha p, \beta r \}}$,
		\begin{equation*}
    		\md(\btheta_n^\triangle,\btheta) \le 
    		\min \left\{ \frac 12, \, C 4^s \sqrt{\frac{\nu^2+\mu^2}{\lambda_s(\SigmaA)}}\sqrt{\frac{\alpha }{n}}  \right\}.
    	\end{equation*}
	\end{enumerate}
\end{corollary}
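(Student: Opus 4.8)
The plan is to obtain Corollary~\ref{cor:esprittriangular} as a direct specialization of Theorem~\ref{thm:esprittriangular} to the well-separated regime, using the eigenvalue sandwich \eqref{eq:wellsep}. First I would record the consequences of the hypothesis $\Delta(\btheta)\geq \tfrac 2p$: then $\tfrac{1}{\Delta(\btheta)}\leq \tfrac p2$, so \eqref{eq:wellsep} yields $\tfrac p2 \leq \sigma_s^2(\bPhi)\leq \sigma_1^2(\bPhi)\leq \tfrac{3p}2$. In particular, since the $\kappa_j(\bPhi)=\sigma_1(\bPhi)/\sigma_j(\bPhi)$ are nondecreasing in the index, $\kappa_j^2(\bPhi)\leq \kappa_s^2(\bPhi)=\sigma_1^2(\bPhi)/\sigma_s^2(\bPhi)\leq 3$ for every $j\in[s]$; and from $4(\nu^2+\mu^2)\leq p\lambda_s(\SigmaA)$ together with $\sigma_s^2(\bPhi)\geq p/2$ one gets $\nu^2+\mu^2\leq \tfrac p4\lambda_s(\SigmaA)\leq \sigma_s^2(\bPhi)\lambda_s(\SigmaA)$, i.e.\ the noise condition \eqref{eq:noisecond1} required by Theorem~\ref{thm:esprittriangular} is satisfied.

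Next I would verify that the sample-size hypotheses of the corollary imply those of Theorem~\ref{thm:esprittriangular}. Using $\sigma_s^6(\bPhi)\geq (p/2)^3$ and $\Delta^2\geq 4/p^2$, we have $p^2/(\sigma_s^6(\bPhi)\Delta^2)\leq 2p$, so requiring $n\geq \tfrac{C^2 16^s s^4(\nu^2+\mu^2)\alpha p}{\lambda_s(\SigmaA)}$ (with a constant $C$ that is $\sqrt 2$ times the one in Theorem~\ref{thm:esprittriangular}) already implies \eqref{eq:ncond1}, and hence gives part~(i). For the stochastic case, additionally $\big(\sum_{j=1}^s \kappa_j^4(\bPhi)\big)^{1/2}\leq (9s)^{1/2}=3\sqrt s$, so after renaming the absolute constant $c_1$ the bound $n\geq c_1\beta\sqrt s\,\big(\sum_{j=1}^s \kappa_j^2(\SigmaA)\big)^{1/2}$ dominates the first term of \eqref{eq:ncond2}, while the second term is handled exactly as in part~(i).

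Finally, I would push the bound $\sigma_s^2(\bPhi)\geq p/2$ through the conclusion of Theorem~\ref{thm:esprittriangular}: since
\[
\sqrt{\frac{\nu^2+\mu^2}{\sigma_s^2(\bPhi)\lambda_s(\SigmaA)}}\sqrt{\frac{\alpha p}{n}}
\;\leq\; \sqrt 2\,\sqrt{\frac{\nu^2+\mu^2}{\lambda_s(\SigmaA)}}\sqrt{\frac{\alpha}{n}},
\]
absorbing the $\sqrt 2$ into the absolute constant $C$ produces the displayed estimates in both parts, while the probabilities $1-e^{-c\alpha p}$ and $1-3e^{-c\min\{\alpha p,\beta r\}}$, as well as the $K$-dependent constants $c,\alpha_0$, are inherited verbatim. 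I do not expect a genuine obstacle here: the argument is entirely bookkeeping of absolute constants. The only point deserving care is to make sure that every numerical inequality invoked inside Theorem~\ref{thm:esprittriangular} is actually implied---not merely suggested---by the cleaner hypotheses of the corollary, so that a single choice of the constants $C,c_1,c,\alpha_0$ works uniformly across both the deterministic and the stochastic case.
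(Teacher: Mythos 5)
Your proposal is correct and is exactly the argument the paper intends: the corollary is obtained by specializing Theorem~\ref{thm:esprittriangular} via the bound $p/2\leq\sigma_s^2(\bPhi)\leq\sigma_1^2(\bPhi)\leq 3p/2$ from \eqref{eq:wellsep}, with the same constant bookkeeping you describe. The paper states this specialization without writing it out, and your verification of the noise condition, the sample-size conditions (including $p^2/(\sigma_s^6(\bPhi)\Delta^2)\leq 2p$ and $\kappa_j^2(\bPhi)\leq 3$), and the final estimate is accurate.
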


\begin{remark}
    The error bounds derived in this section primarily focus on dependence on $n$, $\nu$, and $\lambda_s(\bSigma_{\ba})$, and treats all other quantities as constants. This section contains inequalities that have an exponential dependence on $s$, which come from \cite[Theorem 3]{Li20}. Any improvements on the latter result would directly improve the bounds in Theorems \ref{thm:espritrectangular} and \ref{thm:esprittriangular}. There is a different result \cite[Theorem 4]{Li20} that only applies to the case where $\Delta(\btheta)\geq C/p$, which reduces exponential to polynomial dependence on $s$. By using this more specific result and mimicking the proofs of \cref{thm:espritrectangular,thm:esprittriangular}, one can show that \cref{cor:espritrectangular,cor:esprittriangular} hold with polynomial growth in $s$. 
\end{remark}

\section{Numerical experiments}
\label{sec:Numerics}

In this section, we will compare the performance {of our estimators for} rectangular and triangular dithering on various subspace and spectral estimation tasks. 
{Recall from \cref{sec:rectangular} and Remark \ref{rem:Bbit} that our theory for both estimators are applicable to finite-range quantizers provided that the data concentrates and appropriate parameters $\lambda$ resp. $\mu$ are used.}

{In part of the experiments, we will use {\it $b$-bit direct rounding} as a benchmark}. The latter is a map $R_{\lambda,b}$ defined on $[-\lambda,\lambda]$ such that 
$$
R_{\lambda,b}(x):= 
\begin{cases}
	Q_{\lambda/2^b}(x) &\text{if } x\in [-\lambda,\lambda), \\
	\lambda -  \frac{\lambda}{2^b} &\text{if } x=\lambda.
\end{cases}
$$
{Recall the definition of $Q_{\lambda/2^b}$ defined in \eqref{eq:Qmudef}.} Note that $
|R_{\lambda,b}(x)-x|\leq \frac \lambda {2^b}$ whenever $|x|\leq \lambda$.
{The $b$-bit direct rounding extends to complex vectors by applying it to the real and imaginary parts of each entry separately. Hence, all of the compared methods store} $c_\F b p$ many bits per vector in $\F^p$, where we recall that $c_\R=1$ and $c_\C=2$. Furthermore, if $n$ vectors in $\F^p$ are quantized, then this requires $c_\F b p n$ bits. 
{Since in the literature $b$-bit direct rounding is often simply referred to as MSQ (memoryless scalar quantizer), we emphasize that all our presented approaches are MSQ schemes that can be realized in hardware via switches, one-bit ADCs, and thermal noise diodes, e.g., see \cite[Figure 2]{yang2023plug} and \cite{robinson2019analog}.}

The software that implements ESPRIT, computes the quantizers in this paper, and reproduces all the numerical experiments located in this section are available on WL's Github repository.\footnote{https://github.com/weilinlimath/OneBitSubspaceDOA}

\subsection{An adversarial example}

In this example, we will consider an adversarial situation that illustrates the inconsistency of direct rounding for deterministic signals. We first select an arbitrary real $\bU\in\O^{p\times s}$ that will be fixed throughout the experiment. In our simulations, $\bU$ is a single random draw according to the Haar measure, and from performing simulations numerous times, we observe that its choice does not affect the following reported rates of approximation. 

In this experiment, we set $p=32$, $s=8$, and will vary $n$. For each $k\in [n]$, let $k'=k\mod s$, and we define the clean samples $\bx_1,\dots,\bx_n$  such that $\bx_k$ is the $k'$-th column of $\bU$. Although $\bx_1,\dots,\bx_n$ cycle through the same $s$ samples in a periodic manner, they obviously provide enough information to fully recover the subspace $\bU$. 

We draw real i.i.d.\ samples $\be_1,\dots,\be_n\in \calU(-\nu,\nu)^p$ with $\nu=0.01$, and let $\by_k=\bx_k+\be_k$ be the noisy unquantized samples. We consider five different quantizers: rectangular dithering, triangular dither with $b=2$ and $b=4$ bits, and direct rounding with $b=2$ and $b=4$ bits. Each quantizer uses the same $\lambda$ parameter that controls the range of inputs, which in this experiment, is chosen as $\lambda=2$. 

For quantized samples generated by either rectangular or triangular dithering, we have already discussed how to compute $\bU_n^\square$ and $\bU_n^\triangle$, as summarized in \cref{alg:subspace}. If $\bq_1,\dots,\bq_n$ are quantized measurements computed via direct rounding, we let $\bU_n^{R}$ be the leading left singular space of the matrix $[\bq_1,\dots,\bq_n]$. 

To make this comparison among these five quantizers fair, we normalize by the total bits used. A quantizer that uses $b$ bits per real scalar will use $bpn$ bits for $n$ unquantized measurements in $\R^p$. This means that rectangular, 2-bit triangular, and 2-bit direct rounding can utilize twice as many measurements compared to 4-bit triangular and 4-bit direct rounding. 

\begin{figure}[h]
	\centering
	\includegraphics[width=.5\textwidth]{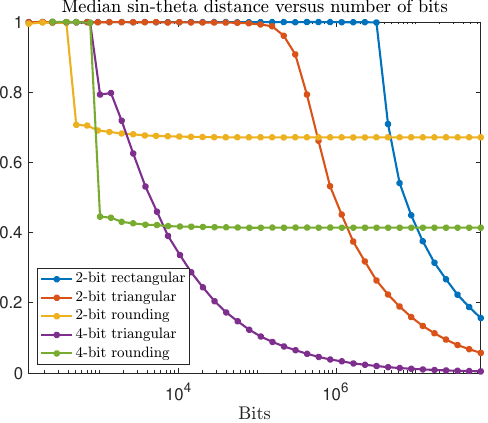}
	\caption{Results of experiment for adversarial example. }
	\label{fig:adversarial}
\end{figure}

We perform this routine for 100 trials. Note that the only stochastic component in this experiment is the noise since $\bx_1,\dots,\bx_n$ are fixed. We compute the median sin-theta distance versus the number of bits. The results are shown in Figure \ref{fig:adversarial}. We see that 2-bit and 4-bit direct rounding quickly saturate after enough measurements are taken. This is expected since $\bx_1,\dots,\bx_n$ periodically rotate through the same points, so direct rounding tends to map $\by_1,\dots,\by_n$ to the same collection of values, and consequently, does not take advantage of additional measurements. On the other hand, the three dithered quantizers can take advantage of repetitive measurements precisely due to their stochastic nature. For instance, even though $\by_k=\bx_k+\be_k$ and $\by_{k+s}=\bx_{k+s}+\be_{k+s}$ are close together, their quantized samples generated by dithered quantizers may be entirely different. 

Finally, one may consider other noise such as $\calN(\0,\nu^2\id_p)$. 
The rates in Figure \ref{fig:adversarial} do not change when $\calN(\0,\nu^2\id_p)$ noise is used instead, but we do not include this figure in the paper.

\subsection{Dependence on smallest singular value for rectangular dithering}

As discussed in \cref{sec:comparison}, the main appeal of quantization via triangular dithering over rectangular is an improved dependence on {$\kappa_s(\SigmaX)$}. It is natural to wonder if it is an artifact of our proofs for rectangular dithering, or whether it actually appears in examples.

Let us first explain the set-up of the following example. We will consider adjustable $\zeta\in(0,1)$ and a covariance matrix $\SigmaX\in \R^{p\times p}$ with rank exactly $r<p$ such that 
$$
\SigmaX=\diag(1,\zeta,\dots,\zeta,0,\dots,0), 
$$
i.e., $\zeta$ appears $r-1$ times, and 0 is repeated $p-r$ many times. Note that $\lambda_1(\SigmaX)=1$ and $\lambda_r(\SigmaX)=\zeta$. We consider the noiseless setting $\nu=0$, set $r=15$ and $p=20$, fix a rotation matrix $\bR$, e.g., drawn uniformly at random according to the Haar measure, and draw $\bx=\bR \bg$ where $\bg\sim N(\0,\SigmaX)$. $\bR$ is drawn once and fixed in this experiment.

Under the same simplifications as in \cref{sec:comparison}, ignoring all terms except those that depend on $n$ and $\lambda_r(\SigmaX)$, we obtain for $n$ sufficiently large
$$
\dist(\bU^\square_n,\bU)
\lesssim_{p,\lambda,K} \frac{1}{\lambda_r(\SigmaX) \sqrt n}.
$$
If we increase $n$ and enforce the scaling $\lambda_r(\SigmaX)=\zeta\simeq n^{-\beta}$ for $\beta>0$, then the upper bound for the sin-theta distance will:
\begin{enumerate}[(i)] \itemsep-2pt
	\item
	remain constant at the critical exponent $\beta = 1/2$,
	\item 
	decrease in $n$ when $\beta<1/2$,
	\item 
	eventually achieve the trivial upper bound of 1  when $\beta>1/2$.
\end{enumerate}  

\begin{figure}[h]
	\centering
	\includegraphics[width=0.5\textwidth]{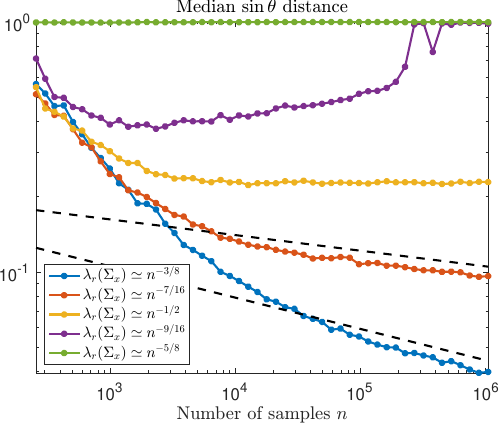}
	\caption{Dependence on smallest singular value for rectangular dithering. The black dashed lines are $1/(4n^{1/16})$ and $1/(4n^{1/8})$.}
	\label{fig:eigendependencerect}
\end{figure}

To numerically test whether the upper bound is sharp, we draw 100 samples from $\bx$ under the scaling $\lambda_r(\SigmaX)\simeq n^{-\beta}$ for $\beta \in \{3/8, \, 7/16, \, 1/2, \, 9/16, \, 5/8\}$. We display the median sin-theta distance in Figure \ref{fig:eigendependencerect}. The results of this experiment emphatically show that we cannot expect a better dependence on $\lambda_r(\SigmaX)$, such as $1/\sqrt{n \lambda_r(\SigmaX)}$, otherwise the sin-theta distance would decrease in $n$ under the scaling $\lambda_r(\SigmaX)\simeq n^{-\beta}$ for $\beta<1$. The numerically observed decay rates for $\beta=3/8$ and $\beta=7/16$ are consistent with the rate predicted by \cref{thm:subspacerectangle}.

\subsection{Dependence on smallest singular value for triangular dithering}

Here we verify an important claim from \cref{thm:subspacerectangletriangle} for the triangular dither. In high signal-to-noise situations where $\mu^2+\nu^2\leq \lambda_s(\SigmaX)$, we have for probability $1-e^{-c\alpha p}$,
$$
\dist(\bU_n^\triangle,\bU)
\leq C \sqrt{\frac {\nu^2+\mu^2}{\lambda_s(\SigmaX)}} \sqrt{\frac {\alpha p} n}
= \frac {C\sqrt{(\nu^2+\mu^2)\alpha p}}{\sigma_s(\bX_n)}. 
$$

To test the dependence on $\sigma_s(\bX_n)$, we consider the following sequence of deterministic signals. We pick 32 logarithmically spaced integers between $10^3$ and $10^7$. Letting $N$ denote any such number, define the vectors $\bx_1,\dots\bx_N\in \R^p$ for $p=8$, where
$$
\bx_k =  \begin{bmatrix}
	\, \cos\left( \frac{16 \pi (k-1)}N\right) &\sin\left( \frac{16 \pi (k-1)}N\right) &0 &\cdots &0
\end{bmatrix}^T. 
$$
That is, the first two components of $\bx_1,\dots\bx_N$ are equally spaced samples from eight rotations of the unit circle in $\R^2$, while the remaining entries are zero. Clearly, $\bx_1,\dots,\bx_N$ are samples from a subspace $\bU$ of dimension $r=2$. 

We draw i.i.d.\ samples $\be_1,\dots,\be_N\sim \calU(-\nu,\nu)^p$ for $\nu=0.01$ and set $\by_k=\bx_k+\be_k$ for each $k\in [N]$. For varying $n\in [N]$, we consider the data $\bX_n\in \R^{p\times n}$ whose columns consist of $\bx_1,\dots,\bx_n$. As $n$ increases, so does $\sigma_r(\bX_n)$. The samples $\by_1,\dots,\by_n$ are quantized using $b$-bit triangular dithers with $\lambda=2$ and $b\in \{2,4,6,8\}$. The approximation $\bU_n^\triangle$ to $\bU$ is found through \cref{alg:subspace}.

\begin{figure}[h]
	\centering
	\includegraphics[width=.5\textwidth]{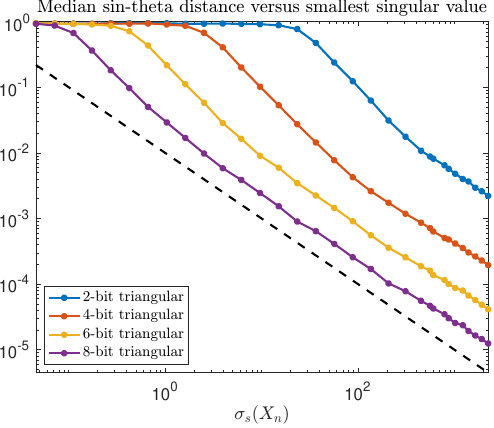}
	\caption{Dependence on smallest singular value for triangular dither, and the black dashed line is $C/\sigma_r(\bX_n)$.}
	\label{fig:eigendependence}
\end{figure}

We perform this routine 100 times and compute the median $\dist(\bU_n^\triangle,\bU)$. Note that the only stochastic component of this experiment is the noise, so we draw 100 independent realizations of $\be_1,\dots,\be_N$. The results shown in Figure \ref{fig:eigendependence} numerically verify our claim in \cref{thm:subspacerectangletriangle}. 

\subsection{Spectral estimation example: well-separated case}

Here, we consider the quantized spectral estimation problem in the well-separated case where $\Delta(\btheta)\geq \frac 2 p$ and for deterministic amplitudes. As discussed earlier, this is a highly favorable situation since \eqref{eq:wellsep} shows that $\sigma_s(\bPhi)\simeq\sigma_1(\bPhi) \simeq \sqrt p$. Provided that the assumptions in \cref{cor:espritrectangular,cor:esprittriangular} hold, the estimators $\btheta_n^\square$ and $\btheta_n^\triangle$ both decay at rates of $O(1/\sqrt n)$. 

To numerically verify these results, we set $p=32$ and $s=4$. We create an arbitrary $\btheta$ that satisfies $\Delta(\btheta)\geq \frac 2 p$ by setting $\theta_k= \frac{4k}p +  \tau_k$ where $\tau_1,\dots,\tau_s\in \calU(-\frac 1p,\frac 1p)$ are independent. Here, $\tau$ is only used as a random perturbation in order to avoid any number theoretic properties that may have some special interactions with the Fourier transform. Even though $\btheta$ is partly chosen in a stochastic manner, it is fixed throughout the subsequent experiment. We let the amplitudes $\ba_1,\dots,\ba_n$ cycle through the $s$ canonical basis vectors for $\R^s$,  where is $n$ is always chosen to be a multiple of $s$. In this case, $\SigmaA=\id$ and $\lambda_s(\SigmaA)=1$.

For each trial, we draw independent $\be_1,\dots,\be_N\sim \calU_\C(-\nu,\nu)^p$ where $\nu=0.01$, and form the noisy unquantized measurement $\by_k = \bPhi(\btheta)\ba_k + \be_k$ for each $k\in [N]$. These samples are quantized using five different quantizers: rectangular dithering with $4$ bits, triangular with $b=4$ and $b=8$ bits, and direct rounding with $b=4$ and $b=8$ bits. Each quantizer uses $\lambda=6$. 

To set up a fair comparison, we normalize by the total bits used. Recall that $\by_k$ is a complex vector, so the number of bits used is multiplied by a factor of 2; for example, $4$-bit rectangular dithering uses 2 bits per real scalar and 2 bits per imaginary scalar. A quantizer that uses $b$-bits per complex scalar entry uses $B=bpn$ bits for $n$ measurements in $\C^p$. Since $p$ is fixed throughout this experiment, the total number of bits $B$ is proportional to $n$. 

For each of these five quantizers, we generate quantized samples $\bq_1,\dots,\bq_n$ and then proceed to estimate $\btheta$ using their associated covariance estimators in ESPRIT (see \cref{alg:espritquan}). We repeat the above procedure for 100 trials and report the median matching distance error. Again, this is averaged over 100 draws of the noise, while $\btheta$ and the amplitudes are fixed. The results are shown in Figure \ref{fig:espritbits}. 

\begin{figure}[h]
	\centering
	\includegraphics[width=.5\textwidth]{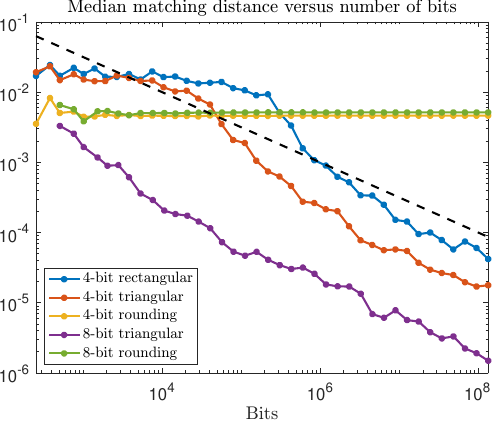}
	\caption{ESPRIT using five different quantizers. The black dashed line is $ 1/{\sqrt {\rm Bits}}$.}
	\label{fig:espritbits}
\end{figure}

As expected, ESPRIT is not a consistent estimator of $\btheta$ when given directly rounded measurements because the analog measurements $\{\bPhi(\btheta)\ba_k\}_{k\in [n]}$ cycle through the same $s$ elements in $\C^p$ and direct rounding is unable to take advantage of additional measurements. On the other hand, the simulations verify our claim that the dithered estimators are consistent and the matching distance error decays at a rate of $O(\frac 1 {\sqrt B})$ as $B\to\infty$, or equivalently, $O(\frac 1 {\sqrt n})$ as $n\to\infty$. Triangular dithering improves as we increase the number of bits since this effectively reduces the quantizer resolution $\mu$ in the term $\sigma^2=\mu^2+\nu^2$. 

This example may seem artificial since the amplitudes cycle through the canonical basis vectors. On the other hand, this is the ideal scenario for spectral estimation. When $\ba_k$ is a canonical basis vector, then we collect information from just a single frequency per time snapshot. This is in contrast to a more complicated and realistic scenario where all entries of the amplitude vector are nonzero, and in which case, we would collect a super-position of waves that would need be disentangled. Yet, even for this ideal situation, the performance of direct rounding quickly saturates, which illustrates a significant advantage for dithered quantizers when many samples can be collected. 

\subsection{Pushing the limits of quantization}

Here, we consider the most extreme scenarios that our theory offers. We again look at 4-bit rectangular and triangular dithering for spectral estimation. This is the fewest bits that our methods can use, since we must work with complex measurements, {see also Remark \ref{rem:ADC}}. We examine the super-resolution scenario whereby the minimum separation is allowed to be arbitrarily small and much smaller than $\frac 1 p$. In this setting, spectral estimation is highly sensitive to noise and errors from quantization. We consider the case of deterministic amplitudes for simplicity.

More concretely, we set $p=32$ and parameterize the minimum separation by varying $\epsilon\in [\frac 1 {32p}, \frac 1 p]$. Consider sources located at $\btheta = \{0,\epsilon, \frac 12\}$ so that $\Delta(\btheta)=\epsilon$ and $s=3$. This is an important example, which corresponds to the scenario whereby two frequencies/objects/sources are close together {and difficult} to distinguish, while a third frequency/object/source is far away from the others and serves as a decoy. In this scenario, we have $\sigma_s(\bPhi)\geq C \sqrt p \, (p \epsilon)$ for some numerical $C>0$, and this estimate is sharp in $\epsilon$ and $p$, see \cite{li2024multiscale}. {If one is interested in replicating the experiment with more complicated choices of} $\btheta$, {one can find a corresponding} expression for $\sigma_s(\bPhi)$ in \cite{li2024multiscale}. 

For large enough $N$, we select vectors $\ba_1,\dots,\ba_N\sim \calU_\C(-1,1)^s$ independently {and keep them fixed} through the rest of this experiment. We also vary the number of measurements $n$ while keeping $p$ fixed, so the total number of bits $B$ used for either 4-bit rectangular or triangular dithering is $4pn$. Since we treat $p$ as a constant, $B\simeq n$. For each trial, we generate $\be_1,\dots,\be_N\sim \calU_\C(-\nu,\nu)^p$ independently with $\nu=0.01$. For $n\leq N$, the unquantized complex samples are $\by_1,\dots,\by_n$, where $\by_k=\bPhi(\btheta)\ba_k + \be_k$ for each $k\in [n]$. These are quantized using either 4-bit rectangular or 4-bit triangular dithering with parameter $\lambda = 5$.

\begin{figure}[h]
	\centering
	\includegraphics[width=.45\textwidth]{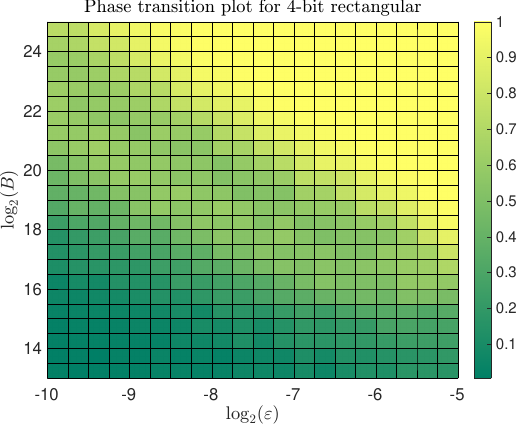} \ \ 
	\includegraphics[width=.45\textwidth]{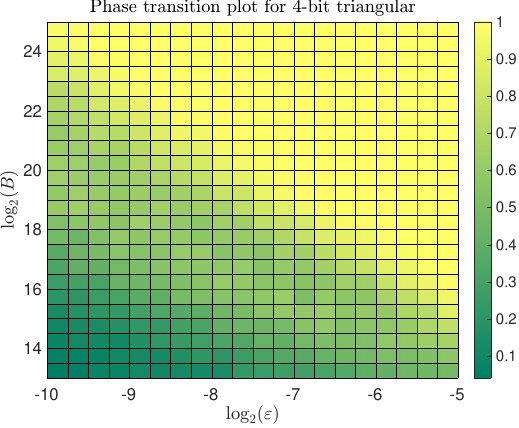}
	\caption{Phase transition plots.}
	\label{fig:phase}
\end{figure}

For a single trial, we say ESPRIT is successful using rectangular (resp. triangular) dithering if $\md(\btheta_n^\square,\btheta)\leq \frac 14 \Delta(\btheta) = \frac 14 \epsilon$ (resp. $\md(\btheta_n^\triangle,\btheta)\leq \frac 14 \epsilon$). Here, the $\frac 14$ factor was chosen arbitrarily, but since $\epsilon$ is the minimum separation of $\btheta$, success should be quantified relative to $\epsilon$. For each pair $(\epsilon,B)$ we repeat the above procedure 500 times for each quantizer and report the fraction of trials for which ESPRIT is successful. The noise is the only stochastic component of this experiment. 

Our results are shown in Figure \ref{fig:phase}. For both quantizers, we see that there is a phase transition region {whereby the} method appears to go from failure to success. This region is a strip under logarithmic scale. We say $(\epsilon,B)$ is usually successful if at least 95\% of 500 trials for that pair of parameters are successful. For each $\epsilon$, we find the smallest $B$ for which $(\epsilon,B)$ is usually successful, and fit these data points with the best possible line in the least squares sense. The slopes of these lines are $-1.6814$ and $-1.7193$ for 4-bit rectangular and 4-bit triangular, respectively. Hence, the empirical results suggest that $B \geq C (\frac{1}{\epsilon})^{1.6814}$ and $B \geq C (\frac{1}{\epsilon})^{1.7193}$ are enough for rectangular and triangular dithering to succeed at least 95\% over these trials.

Our theorems cannot fully explain these results. For triangular dithering, using \cref{thm:esprittriangular} together with $\sigma_s(\bPhi) \geq C \sqrt p \, (p \epsilon)$ and $B=4np$, with fixed probability,
$$
\md(\btheta_n^\triangle,\btheta)
\lesssim_{s,p,\nu,\lambda_s(\SigmaA)} \frac{1}{\epsilon \sqrt B},
$$
{which would require $B \ge C(\frac{1}{\varepsilon})^4$ for usually successful recovery}.
{For rectangular dithering the gap is even worse.} Using \cref{thm:espritrectangular} and greatly simplifying the resulting expression, with fixed probability,
$$
\md(\btheta_n^\square,\btheta)
\lesssim_{s,p,\nu,\lambda_s(\SigmaA)} \frac{1}{\epsilon^2 \sqrt{B}},
$$
{which would require $B \ge C(\frac{1}{\varepsilon})^6$ for usually successful recovery}.

\section{Proofs}
\label{sec:Proofs}

In this section, we provide the proofs of our main results.

\subsection{Proof of Lemma \ref{lem:PSestviaCovEst}}
\label{sec:PSestviaCovEst}

Thanks to the trivial estimate $\dist(\hat{\bU},\bU)\leq 1$, it suffices to prove \eqref{eq:wedin2} in the case that
\begin{align}
\label{eq:ReducedCase}
    \|\hat{\SIGMA}-\SIGMA_{\by}\|
	\leq \frac 1 {1+\sqrt 2}  \, \big(\lambda_s(\SigmaX)-\lambda_{s+1}(\SigmaX)\big).
\end{align}
Since $\SigmaY=\SigmaX+\nu^2\id$, it follows that for every $k\in [p]$, 
\begin{equation}
\label{eqn:specShift}
\lambda_k(\SIGMA_{\by})=\lambda_k(\SigmaX)+\nu^2.
\end{equation} 
By assumption, $\lambda_s(\SigmaX)>\lambda_{s+1}(\SigmaX)$ and hence the leading $s$-dimensional eigenspace of $\SigmaY$ is unique. By Weyl's inequality for Hermitian matrices,  
	\begin{align*}
		\lambda_k(\hat{\SIGMA})
		&\geq \lambda_k(\SIGMA_{\by})-\|\hat{\SIGMA}-\SIGMA_{\by}\| \\
		\lambda_k(\hat{\SIGMA}) 
		&\leq \lambda_k(\SIGMA_{\by})+\|\hat{\SIGMA}-\SIGMA_{\by}\|.
	\end{align*}
	Using these inequalities together with \eqref{eq:ReducedCase} and \eqref{eqn:specShift} we find
	\begin{align*}
		\lambda_s(\hat{\SIGMA})-\lambda_{s+1}(\hat{\SIGMA})
		&\geq 
		\lambda_s(\SigmaX)-\lambda_{s+1}(\SigmaX)- 2\|\hat{\SIGMA}-\SIGMA_{\by}\|>0,
	\end{align*}
	so that in particular the leading $s$-dimensional eigenspace of $\hat{\SIGMA}$ is unique.
	The Davis-Kahan theorem \cite{davis1970rotation} and \eqref{eq:ReducedCase} now yield
	\begin{align*}
	\dist({\hat \bU},\bU)
	&\leq \frac{\sqrt 2 \, \|\hat{\SIGMA}-\SIGMA_{\by}\|}{\lambda_s(\SigmaX)-\lambda_{s+1}(\SigmaX)-\|\hat{\SIGMA}-\SIGMA_{\by}\|} \\
	&\leq (1+\sqrt 2) \, \frac{\|\hat{\SIGMA}-\SIGMA_{\by}\|}{\lambda_s(\SigmaX)-\lambda_{s+1}(\SigmaX)}.     
	\end{align*}

\subsection{Proof of Theorem \ref{thm:subspacerectangle}}
\label{sec:subspace}

In this section, we extend the analysis in \cite{dirksen2022covariance,yang2023plug} to bound $\| \SIGMADITHuniform_n - \SigmaY \|$ under Assumption \ref{assump:main}, see Theorem~\ref{thm:CE_ErrorBound} below. Theorem \ref{thm:subspacerectangle} then follows from combining Lemma~\ref{lem:PSestviaCovEst} with Theorem~\ref{thm:CE_ErrorBound}. 

Let us begin with some technical observations.
The following result shows that $\tilde\SIGMA_n'$ and $\SIGMADITHuniform_n$ are unbiased estimators of $\SIGMA_{\by}$ if the dithering range $\lambda$ dominates the range of the underlying distributions.

\begin{lemma} \label{lem:linftyBiasEstBounded}
    Let $\lambda > 0$ and let $\bu \in \mathbb{C}^p$ be a random vector with $\| \bu \|_\infty \le \lambda$ a.s. and covariance matrix $\SIGMA_{\bu} = \E(\bu\bu^*)$. Let $\bq^\square,\dot \bq^\square \in \F^p$ be defined via $\bq^\square = \sign_\F \big( \bu + \tauv^\square \big)$ and $\dot \bq^\square = \sign_\F \big( \bu + \dot \tauv^\square \big)$, where $\tauv^\square,\dot \tauv^\square\sim \calU_\F (-\lambda,\lambda)^p$ and assume that $\bu,\tauv^\square,\dot \tauv^\square$ are independent.
    Then,
    \begin{align*}
        \E \big( \lambda^2 \bq^\square(\dot \bq^\square)^* \big) = \SIGMA_{\bu}
    \end{align*}
    and
    \begin{align*}
       \E \big( \lambda^2 \bq^\square(\bq^\square)^* \big) = \SIGMA_{\bu} - \diag (\SIGMA_{\bu}) + c_\F \lambda^2 \id_M,
    \end{align*}
    where $c_\R = 1$ and $c_\C = 2$.
\end{lemma}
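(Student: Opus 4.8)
The statement is a computation of two expectations, and the natural approach is to reduce everything to a single scalar identity for the dithered one-bit quantizer, then assemble the matrix claims entrywise. The key scalar fact is the following: if $u \in \R$ with $|u| \le \lambda$ and $\tau \sim \calU_\R(-\lambda,\lambda)$ is independent of $u$, then $\E[\sign_\R(u+\tau) \mid u] = u/\lambda$. Indeed, conditionally on $u$, $\P(u+\tau \ge 0) = \P(\tau \ge -u) = (\lambda+u)/(2\lambda)$, so the conditional expectation of $\sign_\R(u+\tau)$ is $(\lambda+u)/(2\lambda) - (\lambda-u)/(2\lambda) = u/\lambda$. This is where the assumption $\|\bu\|_\infty \le \lambda$ is used in an essential way — it guarantees $-u$ stays in the support of $\tau$ so the formula is exactly linear rather than a clipped/nonlinear function of $u$. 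I would state and prove this scalar lemma first (for the real case), then note the complex analogue: writing $\sign_\C = \sign_\R \circ \Re + \i\,\sign_\R \circ \Im$ and using that $\Re(\tauv^\square), \Im(\tauv^\square)$ are independent $\calU_\R(-\lambda,\lambda)$, one gets $\E[\csign(u+\tau^\square)\mid u] = (\Re u + \i \Im u)/\lambda = u/\lambda$ coordinatewise.

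**Off-diagonal / cross-product identity.** For the first claim, fix entries $i,j$ and condition on $\bu$. Since $\tauv^\square$ and $\dot\tauv^\square$ are independent of each other and of $\bu$, the coordinates $q^\square_i$ and $\dot q^\square_j$ are conditionally independent given $\bu$, so
\begin{align*}
\E\big[\lambda^2 q^\square_i \overline{\dot q^\square_j} \,\big|\, \bu\big]
= \lambda^2 \,\E[q^\square_i \mid \bu]\,\overline{\E[\dot q^\square_j \mid \bu]}
= \lambda^2 \cdot \frac{u_i}{\lambda}\cdot \frac{\overline{u_j}}{\lambda}
= u_i \overline{u_j}.
\end{align*}
Taking expectations over $\bu$ gives $\E[\lambda^2 \bq^\square(\dot\bq^\square)^*] = \E(\bu\bu^*) = \SIGMA_\bu$. (In the real case the conjugation is vacuous; in the complex case one should double-check that the entrywise conditional-expectation formula survives the complex sign definition, which it does by the remark above.) This handles the diagonal too: the same conditional-independence argument applies verbatim when $i=j$ because $q^\square_i$ and $\dot q^\square_i$ come from \emph{different} dithers.

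**Self-product identity.** The second claim differs precisely on the diagonal, since now both factors use the \emph{same} dither $\tauv^\square$. For $i \ne j$, conditional independence of $q^\square_i, q^\square_j$ given $\bu$ still holds (distinct coordinates of the dither are independent), so the off-diagonal entries of $\E[\lambda^2\bq^\square(\bq^\square)^*]$ equal $u_i\overline{u_j}$ exactly as before, contributing $\SIGMA_\bu - \diag(\SIGMA_\bu)$. For $i = j$ we need $\E[\lambda^2 |q^\square_i|^2]$: in the real case $|q^\square_i|^2 = 1$, giving $\lambda^2 = c_\R\lambda^2$; in the complex case $|q^\square_i|^2 = |\pm 1 \pm \i|^2 = 2 = c_\F$, giving $2\lambda^2 = c_\C\lambda^2$. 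So the diagonal contributes $c_\F\lambda^2\id_p$, and subtracting the spurious $\diag(\SIGMA_\bu)$ that the off-diagonal formula would have wrongly assigned to the diagonal yields $\SIGMA_\bu - \diag(\SIGMA_\bu) + c_\F\lambda^2\id_p$, as claimed.

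**Main obstacle.** There is no real analytic difficulty here — the only thing to be careful about is bookkeeping in the complex case: verifying that the scalar identity $\E[\csign(u+\tau^\square)\mid u] = u/\lambda$ is genuinely the right normalization (the factor $c_\F$ does \emph{not} appear in the first claim but does in the second), and correctly tracking which pairs of coordinates share a dither versus use independent dithers. I would organize the write-up around the scalar lemma so that both matrix identities drop out by conditioning and the only case distinction left is $|q_i|^2 \in \{1,2\}$ on the diagonal of the self-product.
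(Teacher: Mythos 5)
Your proof is correct and follows the same route the paper intends: the paper does not spell out the argument but defers to \cite[Lemma 2]{yang2023plug}, whose proof is exactly your scalar identity $\E[\sign_\R(u+\tau)\mid u]=u/\lambda$ (valid because $|u|\le\lambda$ keeps $-u$ in the dither's support) combined with conditional independence of coordinates given $\bu$, and the diagonal correction $|q_i|^2=c_\F$ for the self-product. Your bookkeeping of which entries share a dither and of the factor $c_\F$ is accurate, so nothing is missing.
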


The proof of this lemma is a straightforward modification of \cite[Lemma 2]{yang2023plug}.
The following result slightly extends \cite[Theorem 4]{dirksen2022covariance}. We include a proof for the reader's convenience. 
\begin{theorem}
\label{thm:CE_ErrorBound}
    Suppose \cref{assump:main} holds.
    Letting $C_\infty$ and $\Cop$ be the quantities defined in \eqref{eq:Cy_defs}, the following statements hold:
    \begin{enumerate}
        \item[(i)] 
        If $\lambda^2 \gtrsim_K \log(n) C_{\infty}$, then with probability at least $1-e^{-t}$
        \begin{align}
        \label{eq:CE_ErrorBound_Subgaussian}
            \| \SIGMADITHuniform_n - \SIGMA_{\by} \|
            \lesssim_K ( \lambda \Cop^{1/2}  + \lambda^2 ) \sqrt{\frac{p (\log(p) + t)}{n}} + \lambda^2 \frac{p (\log(p) + t)}{n} .
        \end{align}
        In particular, if $\lambda^2 \simeq_K \log(n) C_{\infty}$, then
        \begin{align*}
            \| \SIGMADITHuniform_n - \SIGMA_{\by} \|
            \lesssim_K \log(n) \sqrt{ \Cop C_{\infty} \frac{p (\log(p) + t)}{n}} + \log(n) C_{\infty} \frac{p (\log(p) + t)}{n}.
        \end{align*}
        \item[(ii)] If the distributions of the samples $\by_k$ are bounded, then $\lambda$ can be chosen independent of $n$, i.e., if $\| \by_k \|_\infty \le M$ a.s. for each $k \in [n]$ and $\lambda \ge M$, we have with probability $1-e^{-t}$ that \eqref{eq:CE_ErrorBound_Subgaussian} holds. 
    \end{enumerate}
\end{theorem}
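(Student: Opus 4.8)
The plan is to follow the matrix-Bernstein strategy of \cite[Theorem 4]{dirksen2022covariance}, adapted to the present assumptions. First I would decompose the error into a bias term and a concentration term. By Lemma \ref{lem:linftyBiasEstBounded} (applied conditionally on $\bx_k$ in case (s), using independence of the dithers), $\E \SIGMADITHuniform_n = \frac 1n \sum_k \E(\by_k\by_k^*) = \SIGMA_{\by}$ when $\lambda \ge M$ in case (ii); in case (i) the same identity holds only approximately, since $\|\by_k\|_\infty \le \lambda$ fails with small probability. So the first step is to control this bias: condition on the event $\calE = \{ \max_k \|\by_k\|_\infty \le \lambda \}$, which by the subgaussian assumption and a union bound over the $np$ (real and imaginary parts of) entries has probability at least $1 - (np)e^{-c\lambda^2/(K^2 C_\infty)} \ge 1 - e^{-t}$ once $\lambda^2 \gtrsim_K \log(n) C_\infty$ (absorbing the $p$-factor and $t$ into the constant, using $\log(p) \le \log(n)$ or enlarging $B_K$). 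On $\calE$, everything reduces to the bounded case. Outside $\calE$ the contribution is negligible because $\|\SIGMADITHuniform_n\| \le \lambda^2 p$ deterministically and $\P(\calE^c)$ is exponentially small — this is a standard truncation argument.

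The heart of the argument is the concentration term $\| \SIGMADITHuniform_n - \E_{\calE}\SIGMADITHuniform_n \|$ on the event $\calE$ (or unconditionally in case (ii)). Write $\SIGMADITHuniform_n - \E \SIGMADITHuniform_n = \frac 1n \sum_{k=1}^n \bZ_k$ where $\bZ_k = \tfrac12 \lambda^2(\bq_k^\square (\dot\bq_k^\square)^* + \dot\bq_k^\square(\bq_k^\square)^*) - \E[\cdots]$ are independent, mean-zero, Hermitian. Each summand is bounded: $\|\bq_k^\square (\dot\bq_k^\square)^*\| \le \|\bq_k^\square\|_2 \|\dot\bq_k^\square\|_2 \le c_\F p$, so $\|\bZ_k\| \lesssim \lambda^2 p$. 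For the variance proxy I would estimate $\| \sum_k \E \bZ_k^2 \|$; using $\E(\bq_k (\dot\bq_k)^* \dot\bq_k (\bq_k)^*) = c_\F p \, \E(\bq_k \bq_k^*)$ together with Lemma \ref{lem:linftyBiasEstBounded}, one gets $\E \bZ_k^2 \preceq \lambda^2 p \, \E(\bq_k \bq_k^*) \cdot \lambda^2 \lesssim \lambda^2 p (\SIGMA_{\by} + c_\F\lambda^2 \id)$, hence $\|\sum_k \E\bZ_k^2\| \lesssim n \lambda^2 p (\Cop + \lambda^2)$ (here I use $C_\infty \le \Cop$ and absorb constants). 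Feeding $\sigma^2 = n\lambda^2 p(\Cop + \lambda^2)$ and $R = \lambda^2 p$ into the matrix Bernstein inequality (intrinsic-dimension or plain $p$-dimensional version, at level $t + \log p$ to kill the dimension factor) yields, after dividing by $n$, exactly the bound
\begin{align*}
\| \SIGMADITHuniform_n - \SIGMA_{\by} \| \lesssim \sqrt{\frac{\lambda^2 p(\Cop+\lambda^2)(t+\log p)}{n}} + \frac{\lambda^2 p (t + \log p)}{n},
\end{align*}
and $\sqrt{\Cop + \lambda^2} \le \Cop^{1/2} + \lambda$ gives \eqref{eq:CE_ErrorBound_Subgaussian}. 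The ``in particular'' statement is then just substituting $\lambda^2 \simeq_K \log(n) C_\infty$ and $\sqrt{\Cop \cdot \log n \cdot C_\infty} \ge \log n \sqrt{\Cop C_\infty}$ up to constants. Case (ii) is the same computation with $\calE$ replaced by the almost-sure bound and no truncation loss.

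The main obstacle I anticipate is bookkeeping rather than conceptual: carefully tracking that the failure probability of the truncation event $\calE$ can genuinely be absorbed into the $e^{-t}$ budget uniformly for all $\lambda^2 \ge B_K \log(n) C_\infty$ — in particular choosing $B_K$ large enough that $np\,e^{-c\lambda^2/(K^2 C_\infty)} \le e^{-t}$ also accommodates the range $t \lesssim \log(p)$ where the concentration bound itself is phrased. A second subtlety is that in case (s) the bias identity and the variance bound must be derived conditionally on $(\bx_k)_k$ and then the outer expectation taken, which works because all the operator-norm bounds above hold pointwise in $\bx_k$; I would state this explicitly to avoid the reader worrying about the randomness of $\SIGMA_{\by}$ versus its deterministic value $\SIGMA_{\bx} + \nu^2\id$. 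Everything else is a direct transcription of the argument in \cite{dirksen2022covariance} with $C_\infty, \Cop$ in place of the quantities used there.
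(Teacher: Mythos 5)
Your treatment of the core concentration step is essentially the paper's proof: reduce to the (possibly asymmetric) sum of independent mean-zero matrices $\lambda^2\bq_k^\square(\dot\bq_k^\square)^* - \E(\by_k\by_k^*)$ via Lemma \ref{lem:linftyBiasEstBounded}, bound each summand by $\lambda^2 p$ using $\|\bq_k^\square\|_2^2 = c_\F p$, bound the variance proxy by $n\lambda^2 p(\Cop+\lambda^2)$ via \eqref{eq:Kadison} and Lemma \ref{lem:linftyBiasEstBounded} again, and apply a matrix Bernstein/Rosenthal inequality (the paper uses the moment form from \cite{Dir14} plus a moments-to-tails lemma; you use the tail form directly — immaterial). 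Your Hermitian symmetrization versus the paper's rectangular treatment of the non-Hermitian $\THETA_k$ is also an equivalent choice, and your variance computation checks out. The paper, like you, only writes out the bounded case (ii) in full and defers the subgaussian case to \cite{dirksen2022covariance}.

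The one genuine problem is your route to part (i). Conditioning on $\calE=\{\max_k\|\by_k\|_\infty\le\lambda\}$ and union-bounding gives $\P(\calE^c)\lesssim np\,e^{-c\lambda^2/(K^2C_\infty)}$, which for $\lambda^2\simeq_K\log(n)C_\infty$ is only \emph{polynomially} small in $n$; it cannot be absorbed into $e^{-t}$ uniformly in $t$ (take $t=n$), and enlarging $B_K$ does not help since $B_K$ may not depend on $t$. Moreover, even on $\calE$ the problem does not literally reduce to the bounded case, because conditioning perturbs the law of $\by_k$ and hence the identity $\E(\lambda^2\bq^\square(\dot\bq^\square)^*)=\SigmaY$. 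The fix — and what the cited reference (its Lemma 17, invoked by the paper for exactly this purpose) actually does — is to avoid the event entirely and instead bound the deterministic \emph{bias} $\|\E(\lambda^2\bq^\square(\dot\bq^\square)^*)-\E(\by\by^*)\|$: the clipping error is supported on the subgaussian tail and is exponentially small in $\lambda^2/(K^2C_\infty)$, so for $\lambda^2\ge B_K\log(n)C_\infty$ it is dominated by the $1/\sqrt n$ main term with no cost in probability. You half-identified this worry yourself; the resolution is to replace the truncation event by the bias estimate.
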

\begin{proof}[Proof of Theorem \ref{thm:CE_ErrorBound}] 
Under Assumption \ref{assump:main}, the samples $\by_k$ are independent and $K$-subgaussian.
We only prove (ii). For random samples $\bx_1,\dots,\bx_n \sim \bx$ and $\F = \R$, the statement in (i) is equivalent to \cite[Theorem 4]{dirksen2022covariance}. The extension to $\F = \C$ and deterministic samples $\bx_1,\dots,\bx_n$ works analogously to (ii) but using a complex analog of \cite[Lemma 17]{dirksen2022covariance} instead of Lemma~\ref{lem:linftyBiasEstBounded}. 

    \textit{Proof of (ii):} First note that
    \begin{align*}
        \| \SIGMADITHuniform_n - \SIGMA_{\by} \|
        \le \| \hat{\SIGMA}'_n - \SigmaY \|
        = \left\| \sum_{k=1}^n \THETA_k \right\|,
    \end{align*}
    where we define
    \begin{align*}
        \THETA_k 
        &:= \frac{\lambda^2}{n} \left( \bq_k^\square (\dot \bq_k^\square)^* - \E \big( \bq_k^\square (\dot \bq_k^\square)^* \big) \right) \\
        &= \frac{1}{n} \left( \lambda^2 \bq_k^\square (\dot \bq_k^\square)^* - \E \big( \by_k\by_k^* \big) \right).
    \end{align*}
    and use Lemma \ref{lem:linftyBiasEstBounded} in the second equality. By our assumption on $\by_k$, the $\THETA_k$ are independent matrices with mean zero. To conclude we will apply \cite[Theorem 6.2]{Dir14}, cf.\ \cite[Theorem 10]{dirksen2022covariance}.\\
    First, we observe that
    \begin{align*}
        \| \THETA_k \| 
        = \frac{\lambda^2}{n} \left\| \bq_k^\square (\dot \bq_k^\square)^* - \E \big( \bq_k^\square (\dot \bq_k^\square)^* \big) \right\|
        \lesssim \frac{\lambda^2 p}{n},
    \end{align*}
    where we used that $\max_{k} \{ \| \bq_k^\square \|_\infty, \| \dot \bq_k^\square \|_\infty \} \le \sqrt{2}$.
    Second,
    \begin{align*}
        &\Big\|\Big(\sum_{k=1}^n \mathbb{E}(\THETA_k^*\THETA_k)\Big)^{1/2}\Big\| \\
        &\le \frac{\lambda^2}{n} \left(
        \sum_{k=1}^n \Big\| \E \big( \dot \bq_k^\square (\bq_k^\square)^* \bq_k^\square (\dot \bq_k^\square)^* \big) - \E \big( \dot \bq_k^\square (\bq_k^\square)^* \big) \E \big( \bq_k^\square (\dot \bq_k^\square)^* \big) \Big\| \right)^{1/2} \\
        &\le \frac{2\lambda^2}{n} \left(
        \sum_{k=1}^n \Big\| \E \big( \dot \bq_k^\square (\bq_k^\square)^* \bq_k^\square (\dot \bq_k^\square)^* \big) \Big\| \right)^{1/2} \\
        &\le 4\lambda^2 \frac{\sqrt{p}}{n} \left(
        \sum_{k=1}^n \Big\| \E \big( \dot \bq_k^\square (\dot \bq_k^\square)^* \big) \Big\| \right)^{1/2} \\
        &\le 4\lambda \frac{\sqrt{p}}{n} \left(
        \sum_{k=1}^n \Big\| \E \big( \by_k\by_k^* \big) - \diag \big( \E \big( \by_k\by_k^* \big) \big) + c_\F \lambda^2 \id_p \Big\| \right)^{1/2} \\
        &\le 4\lambda \sqrt{\frac{p}{n}} \max_{k \in [n]} \Big\| \E \big( \by_k\by_k^* \big) - \diag \big( \E \big( \by_k\by_k^* \big) \big) + c_\F \lambda^2 \id_p \Big\|^{1/2} \\
        &\lesssim \sqrt{\frac{p}{n}} \left( \lambda \max_{k \in [n]} \| \E \big( \by_k\by_k^* \big) \|^{1/2} + \lambda^2 \right),
    \end{align*}
where we used \eqref{eq:Kadison} in the second line, the observation that $\| \bq_k^\square \|_2^2 \le 2p$ in the third line, and Lemma \ref{lem:linftyBiasEstBounded} in the fourth line. Analogously,
    \begin{align*}
        \Big\|\Big(\sum_{k=1}^n \mathbb{E}(\THETA_k \THETA_k^*)\Big)^{1/2}\Big\|
        \lesssim \sqrt{\frac{p}{n}} \left( \lambda \max_{k \in [n]} \| \E \big( \by_k\by_k^* \big) \|^{1/2} + \lambda^2 \right),
    \end{align*}

    Combining the previous three estimates and applying \cite[Theorem 6.2]{Dir14} thus leads to
    \begin{align*}
        \Big(\E \Big\|\sum_{k=1}^n {\THETA}_k \Big\|^q \Big)^{1/q}
        \lesssim \sqrt{q} \sqrt{\frac{p}{n}} \left( \lambda \max_{k \in [n]} \| \E \big( \by_k\by_k^* \big) \|^{1/2} + \lambda^2 \right)
        + q \lambda^2 \frac{p}{n},
    \end{align*}
    for any $q \ge \log(p)$. By \cite[Lemma 13]{dirksen2022covariance}, this implies that \eqref{eq:CE_ErrorBound_Subgaussian} holds with probability at least $1-e^{-t}$.
\end{proof}

\subsection{Proof of Lemma \ref{lem:triquannoise}}
\label{sec:triquannoise}

{
For $\by_k = \bx_k + \be_k$, let us first define $\bgamma_k = Q_\mu(\by_k+\tauv_k^\triangle) - \by_k$. Since $\tauv_k^\triangle$ has i.i.d.\ entries that follow the triangular distribution, we derive from \cite[Theorem 2]{gray1993dithered} that, for $k \in [n]$ and $\ell \in [p]$, there exist $w_{k,\ell}\sim \calU_\R (-\mu,\mu)$ that are i.i.d.\ and independent of the other random variables, such that the entries $\gamma_{k,\ell}$ of $\bgamma_k$ satisfy
\begin{enumerate}[{\rm (a)}]
	\item 
	$
	\E(\gamma_{k,\ell}\mid \by_k)
	=\E(\gamma_{k,\ell})
	=\E \left(\tau^\triangle_{k,\ell}+w_{k,\ell} \right), 
	$ 
    \item 
	$
	\E(\gamma_{k,\ell}^2\mid \by_k)
	=\E(\gamma_{k,\ell}^2)
	=\E \left( \left(\tau^\triangle_{k,\ell}+w_{k,\ell} \right)^2 \right), 
	$ 
	\item $\gamma_{k,\ell}$ and $\gamma_{k,\ell'}$ are uncorrelated for $\ell\neq \ell'$.
\end{enumerate}	
Since $\bxi_k = \bgamma_k + \be_k$, we get from (a) that
\begin{align*}
    \E \xi_{k,\ell} 
    = \E\gamma_{k,\ell} + \E e_{k,\ell} 
    = (\E \tau^\triangle_{k,\ell} + \E w_{k,\ell} ) + 0
    = 0
\end{align*}
and that
\begin{align*}
    \E \xi_{k,\ell}^2 
    &= \E \gamma_{k,\ell}^2 + 2 \E \gamma_{k,\ell} e_{k,\ell} + \E e_{k,\ell}^2
    = (\E (\tau^\triangle_{k,\ell})^2 + \E w_{k,\ell}^2 ) + 2 \E_{\by_k} \E (\gamma_{k,\ell} e_{k,\ell} \mid \by_k) + \nu^2 \\
    &= \mu^2 + \nu^2,
\end{align*}
where we used that by $\E\gamma_{k,\ell} = 0${, it holds that}
\begin{align*}
    \E_{\by_k}\E(\gamma_{k,\ell} e_{k,\ell} \mid \by_k)
    = \E_{\by_k}(\E(\gamma_{k,\ell} \mid \by_k) \cdot \E(e_{k,\ell} \mid \by_k)) 
    = \E\gamma_{k,\ell} \cdot \E_{\be_k}\E(e_{k,\ell} \mid \by_k)
    = 0.
\end{align*}
Furthermore, together with our assumption that $\be_k$ has uncorrelated entries, (b) implies that $\bxi_k$ has uncorrelated entries.}

{
It remains to show that the $\bxi_k$ are subgaussian. Since $|Q_\mu(t)-t|\leq \mu$ for any $t\in\R$, we know from \cite[Theorem 1]{gray1993dithered} that $\bgamma_k-\tauv_k^\triangle$ has independent entries bounded by $\mu$. Hence, for any $\bu\in \R^p$, we can use $\bxi_k = \bgamma_k + \be_k$ and Hoeffding's inequality to see that
\begin{align*}
	&\|\langle \bxi_k,\bu\rangle\|_{\psi_2} \\
    &\leq \|\langle \bgamma_k-\tauv_k^\triangle,\bu\rangle\|_{\psi_2} + \|\langle \tauv_k^\triangle,\bu\rangle\|_{\psi_2} + \|\langle \be_k,\bu\rangle\|_{\psi_2} \\
	& \lesssim K(\nu+\mu) \|\bu\|_2 \lesssim_{K} \|\langle \bxi_k,\bu\rangle\|_{L^2}.
\end{align*}
}

\subsection{Proof of \cref{thm:subspacerectangletriangle}(i)}
\label{sec:CZvariation}

The proof of Theorem \ref{thm:subspacerectangletriangle}(i) has a similar structure as the proof of Cai-Zhang \cite[Theorem 3]{cai2018rate}. In fact, the only stochastic component in the proof of \cite[Theorem 3]{cai2018rate} is \cite[Lemma 4]{cai2018supplement}, i.e., Lemma 4 in the supplement to \cite{cai2018rate}. This lemma assumes that the noise has i.i.d.\ subgaussian entries, so in order to derive Theorem \ref{thm:subspacerectangletriangle}(i) it suffices to find a suitable substitute to \cite[Lemma 4]{cai2018supplement} under our weaker hypothesis on the noise. 

To avoid repetitiveness, we only provide rigorous proofs for the parts that are significantly different. For the reader's convenience in comparing the respective statements, in the remaining section we will switch to the notation and setup as in \cite{cai2018rate,cai2018supplement}. To do this, we let $\sigma^2=c_{\F}^2(\nu^2+\mu^2)$ and consider the model
$$
\frac 1 \sigma (\bQ_n^\triangle)^* = \frac 1 \sigma \bX_n^*+ \frac 1 \sigma \bXi_n^*, 
$$
which is a transformation of \eqref{eq:Qtri}. Taking the Hermitian {on both sides} swaps the right and left singular spaces. We have $\sigma_r(\sigma^{-1}\bX_n^*) = \sigma^{-1} \sigma_r(\bX_n)$ and that $\lambda_r(\SigmaX)= n^{-1} \sigma_r^2(\bX_n)$. Recalling the properties of $\bXi_n$ from \cref{lem:triquannoise}, we see that $\sigma^{-1} \bXi_n^*$ has independent subgaussian rows with $\psi_2$-norm bounded by a universal constant, and the entries of $\bXi$ are uncorrelated, and have mean zero and unit variance. 

From this rescaling, we see that \cref{thm:subspacerectangletriangle}(i) is a consequence of the following theorem which is a high probability version of \cite[Theorem 3]{cai2018rate} generalized to our noise setting. One just has to identify $p_1$ with $n$, $p_2$ with $p$, $r$ with $s$, $\tau$ with $K$, $\bX$ with $\frac{1}{\sigma} \bX_n^*$, $\bZ$ with $\frac{1}{\sigma} \bXi_n^*$, $\bY$ with $\frac{1}{\sigma} (\bQ_n^\triangle)^*$, $\bV$ with $\bU$, and $\widehat\bV$ with $\bU_n^\triangle$.

\begin{theorem}
	\label{thm:CZvarHP}
	Let $\tau >0$, and $p_1,p_2,r\in \N$ with $r\leq \min\{p_1,p_2\}$. Suppose $\bX\in \F^{p_1\times p_2}$ has rank $r$ and the rank $r$ truncated SVD of $\bX$ is $\bU \bS \bV^*$. Assume $\bZ\in \F^{p_1\times p_2}$ is a random matrix whose rows $\bz_1^*,\dots,\bz_{p_1}^*$ are independent, isotropic, $\tau$-subgaussian random vectors. Let $\bY:=\bX+\bZ$ and $\hat \bU \hat \bS \hat \bV^*$ be the rank $r$ truncated SVD of $\bY$. There is a an absolute constant $C>0$ and constants $\alpha_\tau\geq 0$ and $c_\tau \geq 0$ depending only on $\tau$ such that for any $\alpha\geq \alpha_\tau$, with probability at least $1-e^{-c_\tau \alpha p_2}$,
	$$\|\sin(\bV,\hat \bV)\|^2 \leq \min \left\{ 1,\frac{ C\alpha \, p_2 \, (\sigma_r^2(\bX)+p_1)}{\sigma_r^4(\bX)}\right\}.$$  
\end{theorem}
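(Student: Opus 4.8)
The plan is to mirror the proof of \cite[Theorem 3]{cai2018rate}: reuse its deterministic core essentially verbatim, and replace its single probabilistic input, \cite[Lemma 4]{cai2018supplement}, by a high-probability estimate that exploits only independence of the \emph{rows} of $\bZ$ rather than of its entries.

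First I would isolate the deterministic skeleton of \cite{cai2018rate} for the right singular subspace. Write $\bX=\bU\bS\bV^*$ for the rank-$r$ SVD and $\bY^*\bY=\bX^*\bX+\bE$ with $\bE=\bX^*\bZ+\bZ^*\bX+\bZ^*\bZ$. The argument rests on three deterministic facts that are insensitive to the law of $\bZ$: (i) $P_{\bV^\perp}\bX^*=\0$, which annihilates the worst summand $\bX^*\bZ$ once the eigenvector relation $\bY^*\bY\hat\bV=\hat\bV\hat\bS^2$ is projected onto $\bV^\perp$; (ii) the compression identity $P_{\bU}\bY P_{\bV}=\bX+\bU(\bU^*\bZ\bV)\bV^*$, giving $\sigma_r(\bY)\ge\sigma_r(P_{\bU}\bY P_{\bV})\ge\sigma_r(\bX)-\|\bU^*\bZ\bV\|$ and hence control of the relevant gap of $\bY^*\bY$ \emph{without} any dependence on $\sigma_1(\bX)$; and (iii) a self-consistent estimate for $P_{\bV^\perp}\hat\bV$ turning (i)--(ii) into a bound whose dominant terms are $\|\bU^*\bZ\|/\sigma_r(\bX)$ and $\bigl(\|\bU^*\bZ\|\,\|\bZ\bV\|+\|(\bZ^*\bZ-\E(\bZ^*\bZ))\bV\|\bigr)/\sigma_r^2(\bX)$, valid on the event $\{\|\bU^*\bZ\bV\|\le\tfrac12\sigma_r(\bX)\}$; when this event fails, the right-hand side of the asserted inequality already exceeds $1$ and there is nothing to prove. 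In the write-up I will quote the relevant steps of \cite{cai2018rate,cai2018supplement} and check at each appeal to randomness that only operator norms of the blocks $\bU^*\bZ$, $\bZ\bV$, $\bU^*\bZ\bV$ and of $\bZ^*\bZ-\E(\bZ^*\bZ)$ (restricted to $\bV$) enter.

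Second, and this is the only new ingredient, I would prove the replacement for \cite[Lemma 4]{cai2018supplement}: there exist $c,\alpha_0>0$ depending only on $\tau$ such that for every $\alpha\ge\alpha_0$, with probability at least $1-e^{-c\alpha p_2}$ one has simultaneously
\[
\|\bU^*\bZ\|\lesssim_\tau\sqrt{\alpha p_2},\qquad\|\bU^*\bZ\bV\|\lesssim_\tau\sqrt{\alpha p_2},\qquad\|\bZ\bV\|\lesssim_\tau\sqrt{\alpha p_1},\qquad\|(\bZ^*\bZ-\E(\bZ^*\bZ))\bV\|\lesssim_\tau\alpha\sqrt{p_1 p_2}.
\]
Each estimate is a standard covering argument: for fixed unit vectors $\ba,\bb$ the scalar $(\bU\ba)^*\bZ\bb=\sum_i\overline{(\bU\ba)_i}\langle\bz_i,\bb\rangle$ is a sum of independent mean-zero $\tau$-subgaussian variables with total variance at most $1$, hence $\tau$-subgaussian; a union bound over $9^{r+p_2}$-point nets of the relevant spheres with tail parameter of order $\alpha p_2$ gives the first three bounds (using $r\le p_2$, resp.\ $r\le p_1$), and a sub-exponential Bernstein estimate for the sums $\sum_i\bz_i\langle\bz_i,\bV\bd\rangle$, again followed by a net in $\bd$, gives the fourth. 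None of these uses independence of the entries of $\bZ$.

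Finally I would intersect these events and substitute into the deterministic estimate, obtaining $\|\sin(\bV,\hat\bV)\|\lesssim_\tau\sqrt{\alpha p_2}/\sigma_r(\bX)+\sqrt{\alpha p_1 p_2}/\sigma_r^2(\bX)$; squaring, applying $(a+b)^2\le2a^2+2b^2$, and intersecting with the trivial bound $\|\sin\|\le1$ gives $\|\sin(\bV,\hat\bV)\|^2\le\min\{1,\,C\alpha p_2(\sigma_r^2(\bX)+p_1)/\sigma_r^4(\bX)\}$, the $\tau$-dependence being pushed into $c$ and $\alpha_0$ by enlarging $\alpha_0$. I expect the main obstacle to be the first step: faithfully transcribing the deterministic part of \cite{cai2018rate} and certifying that it never covertly uses the i.i.d.\ structure — in particular that the spectral-gap estimate genuinely avoids $\sigma_1(\bX)$ — so that the drop-in replacement of \cite[Lemma 4]{cai2018supplement} is legitimate; the probabilistic step is routine once the relevant quantities are pinned down.
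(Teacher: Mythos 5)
Your proposal is correct and takes essentially the same route as the paper: both reproduce the proof of Cai--Zhang's Theorem 3 verbatim in its deterministic part and replace its sole probabilistic ingredient (Lemma 4 of their supplement) with concentration bounds --- Hoeffding/Bernstein plus $\epsilon$-nets --- that use only independence, isotropy and subgaussianity of the rows of $\bZ$. The only difference is packaging: the paper invokes Cai--Zhang's Proposition 1 and directly controls $\sigma_r(\bY\bV)$, $\sigma_{r+1}(\bY)$ and $\|P_{\bY\bV}\bY\bV_\perp\|$ in \cref{lem:CZvar} (the pseudo-inverse normalization hidden in the last quantity being what actually eliminates $\sigma_1(\bX)$, a point your sketched eigenvector-equation projection leaves slightly exposed), whereas you bound the primitive blocks $\|\bU^*\bZ\|$, $\|\bZ\bV\|$, $\|(\bZ^*\bZ-\E\bZ^*\bZ)\bV\|$ --- exactly the quantities the paper's \cref{lem:help1} controls one level down.
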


We follow the same steps as in the proof of \cite[Theorem 3]{cai2018rate}, which requires that  $\bZ$ additionally has i.i.d.\ entries. 
The main point of deviation is \cref{lem:CZvar} below, which serves as a suitable substitute to \cite[Lemma 4]{cai2018supplement} under our weaker assumptions on $\bZ$.
We first recall a well known $\epsilon$-net result. 

\begin{lemma}
	\label{lem:epnet}
	There are absolute constants $C,c>0$ such that the following holds. Suppose $\bA\in \F^{r_1\times r_2}$ is a random matrix. Then for all $t>0$,
	$$
	\P(\|\bA\|\geq ct)
	\leq C^{r_1+r_2} \max_{\substack{\bu\in \F^{r_1}, \, \|\bu\|_2=1 \\ \bv\in \F^{r_2}, \, \|\bv\|_2=1}} \P(|\bu^* \bA \bv|\geq t).
	$$
\end{lemma}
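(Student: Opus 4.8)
\textbf{Proof plan for \cref{lem:epnet}.}
The plan is to run the standard $\epsilon$-net argument for the operator norm. First I would fix $\epsilon = 1/4$ (any constant in $(0,1/2)$ works) and invoke the classical covering-number bound: the unit sphere $\mathbb S^{r_1-1} \subset \F^{r_1}$ admits an $\epsilon$-net $\calN_1$ with $|\calN_1| \le (1 + 2/\epsilon)^{d r_1} = 9^{d r_1}$, and likewise $\mathbb S^{r_2-1} \subset \F^{r_2}$ admits an $\epsilon$-net $\calN_2$ with $|\calN_2| \le 9^{d r_2}$, where $d = 1$ if $\F = \R$ and $d = 2$ if $\F = \C$ (the complex sphere in $\C^m$ is the real sphere in $\R^{2m}$). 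Absorbing the factor $d$ into the base constant, we get $|\calN_1|\,|\calN_2| \le C_0^{r_1 + r_2}$ for an absolute constant $C_0$.

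Next I would use the deterministic net-comparison fact: for any matrix $\bA \in \F^{r_1 \times r_2}$,
\begin{align*}
\|\bA\| \le \frac{1}{1 - 2\epsilon} \max_{\bu \in \calN_1,\, \bv \in \calN_2} |\bu^* \bA \bv|.
\end{align*}
The proof of this is the usual two-sided approximation: pick unit $\bu_0, \bv_0$ with $|\bu_0^*\bA\bv_0| = \|\bA\|$, choose $\bu \in \calN_1$, $\bv \in \calN_2$ with $\|\bu - \bu_0\|_2, \|\bv - \bv_0\|_2 \le \epsilon$, and expand $|\bu_0^*\bA\bv_0 - \bu^*\bA\bv| \le |(\bu_0 - \bu)^*\bA\bv_0| + |\bu^*\bA(\bv_0 - \bv)| \le 2\epsilon\|\bA\|$. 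With $\epsilon = 1/4$ this gives $\|\bA\| \le 2 \max_{\bu \in \calN_1, \bv \in \calN_2} |\bu^*\bA\bv|$; set $c = 2$.

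Finally I would apply the union bound over the net. On the event $\{\|\bA\| \ge c t\}$, by the displayed comparison there must exist $\bu \in \calN_1$, $\bv \in \calN_2$ with $|\bu^*\bA\bv| \ge t$. Hence
\begin{align*}
\P(\|\bA\| \ge ct) \le \sum_{\bu \in \calN_1} \sum_{\bv \in \calN_2} \P(|\bu^*\bA\bv| \ge t) \le |\calN_1|\,|\calN_2| \max_{\substack{\bu,\bv}} \P(|\bu^*\bA\bv| \ge t) \le C^{r_1 + r_2} \max_{\substack{\bu \in \F^{r_1}, \|\bu\|_2 = 1 \\ \bv \in \F^{r_2}, \|\bv\|_2 = 1}} \P(|\bu^*\bA\bv| \ge t),
\end{align*}
with $C = C_0$, which is exactly the claimed bound (replacing the maximum over the finite nets by the maximum over all unit vectors only increases the right-hand side). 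There is no real obstacle here — the only point requiring a little care is bookkeeping the real-vs-complex dimension factor of $2$ in the covering number, which is harmless since it is swallowed by the absolute constant $C$; one should also note the statement is vacuous unless the right-hand side is below $1$, so no nondegeneracy hypothesis on $\bA$ is needed.
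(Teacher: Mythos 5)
Your proof is correct, and it takes a different route from the paper. The paper's proof does not run the net argument from scratch: it treats the real case as known (citing the analogue of \cite[Lemma 5]{cai2018supplement}) and handles the complex case by splitting $\bA=\bA_1+\i\bA_2$ into real and imaginary parts, bounding $\P(\|\bA\|\geq 4ct)$ by the sum of the two real-case probabilities, and then observing that $|\bu^*\bA\bv|\geq\max\{|\bu^*\bA_1\bv|,|\bu^*\bA_2\bv|\}$ for real unit vectors so that the real maxima are dominated by the complex one. You instead give a self-contained argument that works uniformly over $\F\in\{\R,\C\}$ by viewing the complex unit sphere in $\F^{m}$ as the real sphere in $\R^{2m}$, so the covering number only picks up a dimension factor of $2$ in the exponent, which is absorbed into $C$. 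Your route is more elementary and avoids the extra constant-factor loss the paper incurs from the decomposition ($4c$ versus your $c=2$); the paper's route is shorter on the page because it outsources the real case to a citation. Both the attainment of $\|\bA\|=|\bu_0^*\bA\bv_0|$ by compactness and the two-sided approximation step $|\bu_0^*\bA\bv_0-\bu^*\bA\bv|\le 2\epsilon\|\bA\|$ are handled correctly, and the union bound over $\calN_1\times\calN_2$ followed by enlarging the maximum to all unit vectors is exactly as it should be. No gaps.
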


\begin{proof}
	The real case of this lemma is well known, see also \cite[Lemma 5]{cai2018supplement}. For the complex case, we split $\bA=\bA_1+\i \bA_2$ into its real and imaginary components. First note that $\|\bA\|\leq 2\|\bA_1\|+2\|\bA_2\| \leq 4\max\{\|\bA_1\|,\|\bA_2\|\}$, where $\|\bA_1\|$ and $\|\bA_2\|$ are the $\ell^2\to\ell^2$ operator norms acting on real vectors. 
    Thus, for an absolute constant $c>0$ and any $t>0$, we have
		$$
		\P(\|\bA\|\geq 4ct)
		\leq \P(\|\bA_1\|\geq ct)+\P(\|\bA_2\|\geq ct). 
		$$
		The final two quantities can be controlled using the $\epsilon$-net statement for real matrices applied to $\bA_1$ and $\bA_2$. Hence, 
		$$
		\P(\|\bA\|\geq 4ct)
		\leq C^{r_1+r_2} \sum_{k=1}^2 \, \max_{\substack{\bu\in \R^{r_1}, \,  \|\bu\|_2=1 \\ \bv\in \R^{r_2}, \, \|\bv\|_2=1}} \P(|\bu^* \bA_k \bv|\geq t). 
		$$
		Finally, if $\bu,\bv$ are real, a calculation shows that $|\bu^* \bA \bv|\geq \max \{ |\bu^* \bA_1 \bv|, |\bu^* \bA_2 \bv|\}$, which establishes that for $j \in \{1,2\}$,
		\begin{align*}
		&\max_{\substack{\bu\in \R^{r_1}, \, \|\bu\|_2=1 \\ \bv\in \R^{r_2}, \, \|\bv\|_2=1}} \P(|\bu^* \bA_j \bv|\geq t) \\
		&\leq \max_{\substack{\bu\in \R^{r_1}, \, \|\bu\|_2=1 \\ \bv\in \R^{r_2}, \, \|\bv\|_2=1}} \P(|\bu^* \bA \bv|\geq t) \\
		&\leq \max_{\substack{\bu\in \C^{r_1}, \, \|\bu\|_2=1 \\ \bv\in \C^{r_2}, \, \|\bv\|_2=1}} \P(|\bu^* \bA \bv|\geq t).   
		\end{align*}
		Combining the above completes the proof.
\end{proof}

Next, we establish some basic concentration bounds related to the matrix $\bZ$.

\begin{lemma}
	\label{lem:help1}
	There are absolute constants $C,c>0$ such that the following holds. Assume $\bZ\in \F^{p_1\times p_2}$ is a random matrix whose rows $\bz_1^*,\dots,\bz_{p_1}^*$ are independent, isotropic, $\tau$-subgaussian random vectors. For any fixed matrices $\bA\in \F^{r_1\times p_1},\bB\in \F^{p_2\times r_2}$ and $\bC\in \F^{p_2\times r_1}$, we have for all $t>0$, 
	\begin{align*}
	&\P(\|\bA \bZ \bB\|\geq t)
	\leq C^{r_1+r_2} \exp \left( - \frac{c t^2}{\tau^2 \|\bA\|^2 \|\bB\|^2 } \right), \\
	&\P(\|\bC^*(\bZ^*\bZ-p_1\id_{p_2}) \bB \|\geq t) \\
	&\leq C^{r_1+r_2} \exp\left( -c \min\left\{ \frac t{\tau^2 \|\bB\| \|\bC\|}, \, \frac {t^2}{p_1 \tau^4 \|\bB\|^2 \|\bC\|^2} \right\}\right). 
	\end{align*}
\end{lemma}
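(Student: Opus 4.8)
\emph{Overview.} The plan is to reduce both tail bounds to scalar bilinear (resp.\ quadratic) forms via the net estimate \cref{lem:epnet}, and then control those forms by standard concentration for sums of independent random variables.

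\emph{First inequality.} Apply \cref{lem:epnet} to the random matrix $\bA\bZ\bB\in\F^{r_1\times r_2}$; it then suffices to bound $\P(|\bu^*\bA\bZ\bB\bv|\ge t)$ uniformly over unit vectors $\bu\in\F^{r_1}$, $\bv\in\F^{r_2}$. Set $\ba:=\bA^*\bu$ and $\bb:=\bB\bv$, so $\|\ba\|_2\le\|\bA\|$ and $\|\bb\|_2\le\|\bB\|$, and write $\bZ=\sum_{i=1}^{p_1}\be_i\bz_i^*$, so that $\bu^*\bA\bZ\bB\bv=\ba^*\bZ\bb=\sum_{i=1}^{p_1}\bar a_i\,(\bz_i^*\bb)$. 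The terms are independent and mean zero, and $\|\bz_i^*\bb\|_{\psi_2}\lesssim\tau\|\bb\|_2$: for $\F=\R$ this is $\tau$-subgaussianity combined with isotropy, and for $\F=\C$ it follows by applying that property to the real and imaginary parts of $\bz_i$. By the standard estimate for the $\psi_2$-norm of a sum of independent mean-zero sub-gaussians (applied to real and imaginary parts when $\F=\C$), $\|\ba^*\bZ\bb\|_{\psi_2}\lesssim\tau\|\ba\|_2\|\bb\|_2\le\tau\|\bA\|\|\bB\|$, hence $\P(|\bu^*\bA\bZ\bB\bv|\ge t)\le 2\exp\bigl(-ct^2/(\tau^2\|\bA\|^2\|\bB\|^2)\bigr)$ uniformly in $\bu,\bv$. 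Feeding this into \cref{lem:epnet} and rescaling $t$ (absorbing the factor $2$ into $C^{r_1+r_2}$) gives the first bound.

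\emph{Second inequality.} Apply \cref{lem:epnet} to $\bC^*(\bZ^*\bZ-p_1\id_{p_2})\bB\in\F^{r_1\times r_2}$ and fix unit $\bu,\bv$; set $\bc:=\bC\bu$, $\bb:=\bB\bv$, so $\|\bc\|_2\le\|\bC\|$ and $\|\bb\|_2\le\|\bB\|$. Since $\bZ^*\bZ=\sum_{i=1}^{p_1}\bz_i\bz_i^*$ and the rows are isotropic,
\begin{equation*}
	\bu^*\bC^*(\bZ^*\bZ-p_1\id_{p_2})\bB\bv=\sum_{i=1}^{p_1}\bigl[(\bc^*\bz_i)(\bz_i^*\bb)-\bc^*\bb\bigr],
\end{equation*}
a sum of $p_1$ independent mean-zero random variables. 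Each summand is, up to conjugation, a product of the two $\tau$-subgaussian scalars $\bc^*\bz_i$ and $\bz_i^*\bb$, so by the product inequality $\|XY\|_{\psi_1}\le\|X\|_{\psi_2}\|Y\|_{\psi_2}$ together with centering (using that isotropy forces $\tau\gtrsim1$), its $\psi_1$-norm is $\lesssim\tau^2\|\bc\|_2\|\bb\|_2\le\tau^2\|\bB\|\|\bC\|$. Bernstein's inequality for sums of independent sub-exponential random variables then yields, uniformly in $\bu,\bv$,
\begin{equation*}
	\P\bigl(|\bu^*\bC^*(\bZ^*\bZ-p_1\id_{p_2})\bB\bv|\ge t\bigr)\le 2\exp\Bigl(-c\min\bigl\{\tfrac{t}{\tau^2\|\bB\|\|\bC\|},\ \tfrac{t^2}{p_1\tau^4\|\bB\|^2\|\bC\|^2}\bigr\}\Bigr),
\end{equation*}
and combining this with \cref{lem:epnet} and a rescaling of $t$ completes the proof.

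\emph{Main obstacle.} Once the net reduction is in place the argument is essentially bookkeeping; the only points requiring genuine care are (a) the complex case, where one must check that $\bw\mapsto\bz^*\bw$ inherits the correct sub-gaussian norm from the hypothesis that $\Re\bz$ and $\Im\bz$ are $\tau$-subgaussian (and that the real/imaginary decomposition is compatible with \cref{lem:epnet}), and (b) keeping track of how the absolute constants coming from \cref{lem:epnet}, the sub-gaussian sum bound, and Bernstein's inequality combine — this only affects the final values of $c$ and $C$, not the shape of the estimates.
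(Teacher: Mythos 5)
Your proposal is correct and follows essentially the same route as the paper: reduce to scalar forms via \cref{lem:epnet}, write $\bu^*\bA\bZ\bB\bv$ as a sum of independent mean-zero subgaussians (your $\sum_i \bar a_i(\bz_i^*\bb)$ is exactly the paper's $\sum_k \bz_k^*\bw_k$ with $\bw_k$ the columns of the rank-one matrix $\bB\bv\bu^*\bA$) and apply Hoeffding, then treat the quadratic form with the $\psi_2$-product bound and Bernstein. The two technical caveats you flag (the complex case and centering via $\tau\gtrsim 1$) are handled the same way, implicitly, in the paper's argument.
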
 

\begin{proof}
	Fix unit vectors $\bu\in \F^{r_1},\bv\in \F^{r_2}$. Observe that 
	$
	\bu^* \bA \bZ \bB \bv
	= \tr(\bZ \bB \bv\bu^* \bA)
	= \sum_{k=1}^{p_1} \bz_k^* \bw_k, 
	$
	where we let $\bw_k$ denote the $k$-th column of $\bB \bv\bu^* \bA$. Since the rows of $\bZ$ are independent and the entries of $\bZ$ have mean zero, the previous expression is a sum of $p_1$ independent mean zero random variables. Since $\bz_k$ is $\tau$-subgaussian and $\bB \bv\bu^* \bA$ has rank one,   
	\begin{align*}
	\sum_{k=1}^{p_1} \|\bz_k^* \bw_k\|_{\psi_2}^2
	&\leq \tau^2 \sum_{k=1}^{p_1} \|\bw_k\|_2^2 \\
	&\leq \tau^2 \|\bB \bv\bu^* \bA\|_F^2
	\leq \tau^2 \|\bA\|^2 \|\bB\|^2. 
	\end{align*}
	Thus, by Hoeffding's inequality (see, e.g., \cite[Theorem 2.6.2]{vershynin2018high}), we get 
	$$
	\P(|\bu^* \bA \bZ \bB \bv|>t)\leq 2\exp \left(- \frac{c t^2}{\tau^2 \|\bA\|^2 \|\bB\|^2 } \right).
	$$
	Using the $\epsilon$-net result \cref{lem:epnet}, we obtain the first claimed inequality.  
	
	For the second part of this lemma, using $\E(\bz_k \bz_k^*)=\id_{p_2}$ yields
	\begin{align*}
		&\bu^*\bC^*(\bZ^*\bZ-p_1\id_{p_2}) \bB \bv \\
		&=\sum_{k=1}^{p_1} \bu^*\bC^*(\bz_k \bz_k^* - \E \bz_k \bz_k^*) \bB \bv\\
		&=\sum_{k=1}^{p_1} \Big( \bz_k^* \bB \bv \bu^*\bC^* \bz_k - \E(\bz_k^* \bB \bv \bu^*\bC^* \bz_k) \Big). 
	\end{align*}
	Due to the assumption that the rows of $\bZ$ are independent, the right hand side is a sum of independent, mean-zero random variables. Next, we see that 
	\begin{align*}
	&\|\bz_k^* \bB \bv \bu^*\bC^* \bz_k\|_{\psi_1} \\
	&\leq \|\bz_k^* \bB \bu\|_{\psi_2} \|\bu^* \bC^* \bz_k\|_{\psi_2} 
    \leq \tau^2 \|\bB\| \|\bC\|.    
	\end{align*}
	Thus, by Bernstein's inequality (see, e.g., \cite[Theorem 2.8.1]{vershynin2018high}), 
	\begin{align*}
	&\P(|\bu^*\bC^*(\bZ^*\bZ-p_1\id_{p_2}) \bB \bv|>t) \\
	&\leq 2\exp\left( -c \min\left\{ \frac t{\tau^2 \|\bB\| \|\bC\|}, \, \frac {t^2}{p_1 \tau^4 \|\bB\|^2 \|\bC\|^2} \right\}\right).     
	\end{align*}
	The $\epsilon$-net result in \cref{lem:epnet} completes the proof.
\end{proof}

From Lemma \ref{lem:help1}, we further derive the following concentration bounds. We use some additional notation to simplify the presentation. For a matrix $\bA$, let $P_{\bA}$ be the orthogonal projection onto the range of $\bA$. For $\bV \in \F^{p_2\times r}$ with orthonormal columns, let $\bV_\perp\in \F^{p_2\times (p_2-r)}$ be a matrix whose columns form an orthonormal basis for the orthogonal complement of the subspace spanned by $\bV$. 

\begin{lemma}
	\label{lem:CZvar}
	Under the same notation and assumptions of \cref{thm:CZvarHP}, there is an absolute $C>0$ and constants $C_\tau, c_\tau>0$ depending only on $\tau$ such that for any $t>0$,
    \begin{equation} \label{eq:CZvar1} 
        \begin{split}
        &\P \big(\sigma_r^2(\bY\bV)\leq (\sigma_r^2(\bX) + p_1)(1-t)\big) \\
		&\qquad \leq \exp\left(Cr - c_\tau (\sigma_r^2(\bX) + p_1) \min \{t,t^2\}\right),  
        \end{split}
    \end{equation}
    and 
    \begin{equation}\label{eq:CZvar2} 
        \begin{split}
        &\P\big(\sigma_{r+1}^2(\bY)\geq p_1(1+t)\big) \\
		&\qquad \leq \exp\left(C p_2-c_\tau p_1 \min\{t,t^2\}\right).     
        \end{split}
    \end{equation} 
	If additionally, $\sigma_r(\bX)\geq C_\tau p_2$, then for any $t>0$, 
	\begin{equation} \label{eq:CZvar3} 
    \begin{split}
    &\P \left(\|P_{\bY\bV} \bY\bV_\perp\| > t \right)  \\
	&\leq \exp\left(C p_2-c_\tau \min\left\{t^2, t\sqrt{\sigma_r^2(\bX)+p_1} \right\}\right) \\
    &\qquad +\exp\left(-c_\tau( \sigma_r^2(\bX)+p_1)\right). 
    \end{split}
	\end{equation}
\end{lemma}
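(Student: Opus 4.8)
The plan is to exploit the rigid algebraic structure coming from the SVD $\bX=\bU\bS\bV^*$ to reduce all three tail bounds to the two concentration estimates of \cref{lem:help1}. Since $\bV^*\bV_\perp=\0$ one has $\bX\bV=\bU\bS$ and $\bX\bV_\perp=\0$, hence
\[
\bY\bV=\bU\bS+\bZ\bV,\qquad \bY\bV_\perp=\bZ\bV_\perp,
\]
and, writing $\bM:=\bU^*\bZ\bV\in\F^{r\times r}$,
\[
(\bY\bV)^*(\bY\bV)=\bS^2+\bS\bM+\bM^*\bS+\bV^*\bZ^*\bZ\bV,\qquad
(\bY\bV)^*\bZ\bV_\perp=\bS\,\bU^*\bZ\bV_\perp+\bV^*(\bZ^*\bZ-p_1\id)\bV_\perp.
\]
Everything else is matrix analysis plus \cref{lem:help1}, whose net factors $C^{r_1+r_2}$ produce the $e^{Cr}$ or $e^{Cp_2}$ prefactors and whose Hoeffding/Bernstein exponents produce the $\min\{t,t^2\}$-type rates.

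For \eqref{eq:CZvar1} the only nonroutine ingredient is a completion-of-squares bound that stops the cross term $\bS\bM+\bM^*\bS$ from introducing a condition-number factor: for every $s\in(0,1)$,
\[
\bS^2+\bS\bM+\bM^*\bS\succeq(1-s)\bS^2-\tfrac1s\bM^*\bM,
\]
because $s\bS^2+\bS\bM+\bM^*\bS+\tfrac1s\bM^*\bM=(\sqrt s\,\bS+\tfrac1{\sqrt s}\bM^*)(\sqrt s\,\bS+\tfrac1{\sqrt s}\bM)^*\succeq\0$. Together with $\lambda_{\min}(\bV^*\bZ^*\bZ\bV)\ge p_1-\|\bV^*(\bZ^*\bZ-p_1\id)\bV\|$ this gives $\sigma_r^2(\bY\bV)\ge(1-s)\sigma_r^2(\bX)+p_1-\tfrac1s\|\bM\|^2-\|\bV^*(\bZ^*\bZ-p_1\id)\bV\|$. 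Choosing $s\simeq t$, I would control $\|\bM\|$ by the first estimate of \cref{lem:help1} ($\bA=\bU^*$, $\bB=\bV$) at level $\lesssim t\sqrt{\sigma_r^2(\bX)+p_1}$, and $\|\bV^*(\bZ^*\bZ-p_1\id)\bV\|$ by the second estimate ($\bB=\bC=\bV$) at level $\lesssim t(\sigma_r^2(\bX)+p_1)$, noting that the Bernstein exponent works out to $\gtrsim(\sigma_r^2(\bX)+p_1)\min\{t,t^2\}$ because of the $\sigma_r^4(\bX)/p_1$ contribution; a union bound then yields the claim. Bound \eqref{eq:CZvar2} is immediate: the min–max characterisation gives $\sigma_{r+1}(\bY)\le\|\bY\bV_\perp\|=\|\bZ\bV_\perp\|$, and $\|\bZ\bV_\perp\|^2=\|\bV_\perp^*\bZ^*\bZ\bV_\perp\|\le p_1+\|\bV_\perp^*(\bZ^*\bZ-p_1\id)\bV_\perp\|$, so the second estimate of \cref{lem:help1} with $\bB=\bC=\bV_\perp$ (whence $r_1=r_2=p_2-r$, producing the $e^{Cp_2}$) finishes it.

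For \eqref{eq:CZvar3} I would write $P_{\bY\bV}\bY\bV_\perp=P_{\bY\bV}\bZ\bV_\perp=\bY\bV\big[(\bY\bV)^*(\bY\bV)\big]^{-1}(\bY\bV)^*\bZ\bV_\perp$ and split $(\bY\bV)^*\bZ\bV_\perp=\bR_1+\bR_2$ with $\bR_1:=\bS(\bU^*\bZ\bV_\perp)$ and $\bR_2:=\bV^*(\bZ^*\bZ-p_1\id)\bV_\perp$. Since $\|\bY\bV[(\bY\bV)^*(\bY\bV)]^{-1}\|=1/\sigma_r(\bY\bV)$, the $\bR_2$-contribution is $\le\|\bR_2\|/\sigma_r(\bY\bV)$; using $\sigma_r(\bY\bV)\gtrsim\sqrt{\sigma_r^2(\bX)+p_1}$ from \eqref{eq:CZvar1} and the second estimate of \cref{lem:help1} ($\bC=\bV$, $\bB=\bV_\perp$) it yields the $e^{Cp_2-c\min\{t^2,\,t\sqrt{\sigma_r^2(\bX)+p_1}\}}$ term, and since $\sigma_r^2(\bX)+p_1\ge p_1$ nothing is lost. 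The delicate term is $\bR_1$: the naive $\|\bR_1\|\le\sigma_1(\bX)\|\bU^*\bZ\bV_\perp\|$ would bring in $\kappa_r(\bX)$, so instead I would use $\bS\big[(\bY\bV)^*(\bY\bV)\big]^{-1}\bS\preceq C\,\id$, which via the completion-of-squares inequality and $\bV^*\bZ^*\bZ\bV\succeq\0$ holds on a high-probability event as soon as either $\|\bM\|\lesssim\sigma_r(\bX)$ (used when $\sigma_r^2(\bX)\ge p_1$) or both $\|\bM\|^2\lesssim p_1$ and $\|\bV^*(\bZ^*\bZ-p_1\id)\bV\|\lesssim p_1$ (used when $\sigma_r^2(\bX)\le p_1$), a deterministic case split. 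On that event $\big\|\bY\bV[(\bY\bV)^*(\bY\bV)]^{-1}\bR_1\big\|\lesssim\|\bU^*\bZ\bV_\perp\|$, again controlled by \cref{lem:help1}.

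The step I expect to be the main obstacle is precisely this probabilistic bookkeeping for \eqref{eq:CZvar3}: one must arrange the estimates so that $\sigma_1(\bX)$ cancels against the inverse Gram matrix rather than surviving as a condition number, and then check that each failure event (Gram-domination, the $\sigma_r(\bY\bV)$ lower bound, the bounds on $\bU^*\bZ\bV_\perp$ and on $\bR_2$) fits inside $C\exp(Cp_2-c\min\{t^2,t\sqrt{\sigma_r^2(\bX)+p_1}\})+C\exp(-c(\sigma_r^2(\bX)+p_1))$. This is where the extra hypothesis $\sigma_r(\bX)\ge C'p_2$ is genuinely used — to absorb the net prefactors $e^{Cr}\le e^{Cp_2}$ into the $e^{-c(\sigma_r^2(\bX)+p_1)}$ decay and to keep the Gram matrix dominated by its signal part — and it is also the point where Cai–Zhang's assumption of i.i.d.\ entries of $\bZ$ must be dropped: all randomness is routed through the row-wise $\epsilon$-net bounds of \cref{lem:help1}, which only use independence of the rows, rather than through entrywise concentration.
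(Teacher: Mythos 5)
Your proposal is correct and reaches the same bounds, but the deterministic (matrix-analytic) reductions differ from the paper's in an interesting way, even though the probabilistic core is identical: both arguments route all randomness through \cref{lem:help1} (row-wise Hoeffding/Bernstein plus $\epsilon$-nets), which is exactly how the i.i.d.-entries hypothesis of Cai--Zhang is dropped. The paper introduces a single diagonal normalizer $\bM=\diag((\sigma_1^2+p_1)^{-1/2},\dots,(\sigma_r^2+p_1)^{-1/2})$ and works with $\bY\bV\bM$ throughout: for \eqref{eq:CZvar1} it bounds $\|\bM^*\bV^*\bY^*\bY\bV\bM-\id_r\|$ directly (no case split in $t$ and no separate treatment of the cross term), and for \eqref{eq:CZvar3} it writes $P_{\bY\bV}=(\bY\bV\bM)(\bY\bV\bM)^{\dagger}$ so that the dangerous factor $\bS$ appears only through $\|\bX\bV\bM\|\le 1$ -- the normalizer kills the condition number in one stroke. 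You instead handle the cross term in \eqref{eq:CZvar1} by completing the square with a $t$-dependent weight $s\simeq t$ (a Young-type inequality $\bS\bM+\bM^*\bS\succeq -s\bS^2-\tfrac1s\bM^*\bM$), and in \eqref{eq:CZvar3} you neutralize $\bS$ via the Gram-domination $\bS[(\bY\bV)^*(\bY\bV)]^{-1}\bS\preceq C\id$, established by the same completion of squares on a high-probability event. Both routes are sound: your weights must be retuned for each target ($s\simeq t$ in \eqref{eq:CZvar1}, $s\simeq 1$ for the Gram bound, plus the two-case split according to whether $\sigma_r^2(\bX)\gtrless p_1$), whereas the paper's normalizer does all of this implicitly and uniformly in $t$; on the other hand your version makes completely explicit why no factor $\kappa_r(\bX)$ survives. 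For \eqref{eq:CZvar2} the two proofs differ only cosmetically (min--max with test space $\mathrm{range}(\bV_\perp)$ versus Eckart--Young giving $\sigma_{r+1}(\bY)\le\|\bZ\|$), and your accounting of where $\sigma_r(\bX)\ge C'p_2$ is used -- to absorb the net prefactors into $e^{-c(\sigma_r^2(\bX)+p_1)}$ -- matches the paper's. The only loose end is the trivial edge case $t\ge 1$ in \eqref{eq:CZvar1}, where the choice $s\simeq t$ leaves the weighted inequality degenerate; there the event is empty (or forces $\sigma_r(\bY\bV)=0$) and a fixed $s=\tfrac12$ suffices, so nothing is lost.
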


\begin{proof}
	We use the shorthand notation $\sigma_k:=\sigma_k(\bX)$. Throughout, $C,c$ are absolute constants while $C_\tau,c_\tau$ are constants that only depend on $\tau$. Their values may change from one line to another. Define the $r\times r$ diagonal matrix, 
	$$
	\bM = \diag\left( (\sigma_1^2+p_1)^{-1/2}, \dots, (\sigma_r^2+p_1)^{-1/2}\right). 
	$$
	Setting $\bY_1:=\bY \bV$, we see that 
	$$
	\E (\bY^*_1 \bY_1) = \bS^2 + p_1 \id_r \andspace \E (\bM^* \bY^*_1 \bY_1 \bM) = \id_r.
	$$
	
	\noindent \underline{Proof of \eqref{eq:CZvar1}}. Observe that 
	\begin{align*}
	\label{eq:help1}
	\sigma_r^2(\bY_1)
	&\geq \sigma_r^2(\bM^{-1})\sigma_r^2(\bY_1 \bM) \\
	&\geq (\sigma_r^2 + p_1)\big( 1-\|\bM^* \bY_1^* \bY_1 \bM-\id_r\|\big). 
	\end{align*}
	Some algebraic manipulations show that
	\begin{align*}
	&\bM^* \bY_1^* \bY_1 \bM-\id_r \\
	&=\bM^* \bV^*\bX^* \bZ \bV \bM +\bM^* \bV^*\bZ^* \bX \bV \bM \\
    &\quad +\bM^* \bV^* (\bZ^*\bZ- p_1 \id_{p_2})\bV \bM.
	\end{align*}
    Note that $\|\bV \bM\| = \|\bM\| \leq (\sigma_r^2+p_1)^{-1/2}$ and $\|\bX \bV \bM\| =  \|\bU \bS \bM\| \leq 1.$ We use \cref{lem:help1} for each term to see that 
    \begin{align*}
        \P(\|\bM^* \bV^*\bX^* \bZ \bV \bM\| \geq t/3) 
        &\leq C^r \exp\left( - \frac{ct^2}{\tau^2} (\sigma_r^2+p_1) \right), \\
        \P(\|\bM^* \bV^*\bZ^* \bX \bV \bM\| \geq t/3)
        &\leq C^r \exp\left( - \frac{ct^2}{\tau^2} (\sigma_r^2+p_1) \right), \\
        \P(\|\bM^* \bV^* (\bZ^*\bZ- p_1 \id_{p_2})\bV \bM\|\geq t/3)
        &\leq C^r \exp\left( -c\min\left\{ \frac{t}{3\tau^2}(\sigma_r^2+p_1), \, \frac{t^2}{9p_1 \tau^4} (\sigma_r^2+p_1)^2 \right\} \right) \\
        &\leq C^r \exp\left( - c_\tau (\sigma_r^2+p_1) \min\left\{t,t^2\right\} \right).
    \end{align*}
    This now implies
	\begin{equation}\label{eq:help3}
    \begin{split} 
	&\P(\|\bM^* \bY_1^* \bY_1 \bM-\id_r\|\geq t) \\
	&\leq \exp \left(Cr - c_\tau (\sigma_r^2 + p_1) \min\{t,t^2\} \right).
    \end{split} 	
	\end{equation}
	Combining these inequalities yields \eqref{eq:CZvar1}. \\
	
	\noindent \underline{Proof of \eqref{eq:CZvar2}}. 
    By the Eckart–Young–Mirsky theorem, 
	\begin{align*}
	\sigma_{r+1}(\bY)&=\min_{\operatorname{rank}(\bA)\leq r}\|\bY-\bA\| \\
    &= \min_{\operatorname{rank}(\bA)\leq r}\|\bX+\bZ-\bA\|\leq \|\bZ\|,    
	\end{align*}
	where the final inequality follows by taking $\bA=\bX$. Hence, using that $\E (\bZ^* \bZ) = p_1 \id_{p_2}$, we find 
	$$\sigma_{r+1}^2(\bY)\leq \|\bZ\|^2 =  \sigma_1(\bZ^* \bZ) \le p_1 + \|\bZ^* \bZ -\E (\bZ^* \bZ)\|.$$  	
	We now use \cref{lem:help1} to get  
	\begin{align*}
	    \P(p_1^{-1}\|\bZ^* \bZ -\E (\bZ^* \bZ)\|>t)
	    &\leq C^{p_2}\exp(-c_\tau p_1 \min\{t,t^2\}). 
	\end{align*}
	Hence, with probability at least $1-\exp(Cp_2-c_\tau p_1 \min\{t,t^2\})$, we have $\|\bZ^* \bZ -\E (\bZ^* \bZ)\|\leq p_1 t$ and in particular
	$$
	\sigma_{r+1}^2(\bY) \leq p_1(1+t). 
	$$
	
	\noindent \underline{Proof of \eqref{eq:CZvar3}}. We start with the calculation
	\begin{align*}
	\|P_{\bY \bV} (\bY \bV_\perp)\|
	&=\| (\bY_1 \bM) (\bY_1 \bM)^{\dagger} (\bY \bV_\perp)\| \\
	&\leq \sigma_r^{-1} (\bY_1 \bM) \|\bM^* \bV^* \bY^* \bY \bV_\perp\|. 
	\end{align*}
	We treat these two terms separately. For the first, note that $\sigma_r^2 (\bY_1 \bM)=\sigma_r(\bM^* \bY_1^* \bY_1 \bM)\geq 1 -\|\bM^* \bY_1^* \bY_1 \bM-\id_r\|$. Using \eqref{eq:help3} with $t=\frac 12$ and the assumption that $\sigma_r(\bX) \geq C_\tau p_2$ hence $\sigma_r(\bX) \geq C_\tau r$, we have 
	\begin{align*}
	&\P\left( \sigma_r^2(\bY_1\bM)< \frac 12 \right) \\
	& \leq  \P\left( \|\bM^* \bY_1^* \bY_1 \bM-\id_r\| > \frac 12 \right) \\
    &\leq \exp(Cr - c_\tau (\sigma_r^2 + p_1))\\
	&\leq \exp(-c_\tau (\sigma_r^2 + p_1)).     
	\end{align*}
	For the second term, since $\bX \bV_\perp = \boldsymbol{0}$ and $\bV^*\bV_\perp=\boldsymbol{0}$, we see that 
	\begin{align*}
	&\bM^* \bV^* \bY^* \bY \bV_\perp \\
	&=\bM^* \bV^* \bX^* \bZ \bV_\perp + \bM^* \bV^* (\bZ^* \bZ-p_1 \id_{p_2}) \bV_\perp.    
	\end{align*}
    Recall that $\|\bX\bV \bM\|\leq 1$ and $\|\bV \bM\|\leq (\sigma_r^2+p_1)^{-1/2}$. Note that $\|\bV_\perp\|\leq 1$. By \cref{lem:help1},
    \begin{align*}
        \P(\|\bM^* \bV^* \bX^* \bZ \bV_\perp\|>t/2)
        &\leq C^{p_2} \exp\left(-\frac{c t^2}{\tau^2 }(\sigma_r^2+p_1)\right), \\
        \P(\|\bM^* \bV^* (\bZ^* \bZ-p_1 \id_{p_2}) \bV_\perp\|>t/2)
        &\leq C^{p_2} \exp\left(-c \min\left\{ \frac t{\tau^2} \sqrt{\sigma_r^2+p_1}, \frac{t^2}{p_1\tau^4} (\sigma_r^2+p_1)\right\}\right) \\
        &\leq C^{p_2} \exp\left(-c \min\left\{ \frac t{\tau^2} \sqrt{\sigma_r^2+p_1}, \frac{t^2}{\tau^4}\right\}\right). 
    \end{align*}
    Thus, we get
	\begin{align*}
	&\P(\|\bM^* \bV^* \bY^* \bY \bV_\perp\|> t) \\
	&\leq \exp\left(C p_2-c_\tau \min\left\{t^2, t \sqrt{\sigma_r^2+p_1}\right\}\right).    
	\end{align*}
	To finish the proof, we note that 
    \begin{align*}
        &\P(\|P_{\bY \bV} (\bY \bV_\perp)\|>t) \\
        &\leq \P(\sigma_r^{-1} (\bY_1 \bM) \|\bM^* \bV^* \bY^* \bY \bV_\perp\|>t) \\
        &\leq \P\left(\sigma_r^{-1} (\bY_1 \bM) \|\bM^* \bV^* \bY^* \bY \bV_\perp\|>t, \, \sigma_r^{-2}(\bY_1 \bM)\leq 2\right) \\
        &\qquad + \P\left(\sigma_r^{-2}(\bY_1 \bM)> 2\right) \\
        &\leq \P\left(\|\bM^* \bV^* \bY^* \bY \bV_\perp\|> \frac t {\sqrt 2} \right)+\P\left(\sigma_r^{-2}(\bY_1 \bM)> 2\right) \\
        &\leq \exp\left(C p_2-c_\tau \min\left\{t^2, t \sqrt{\sigma_r^2+p_1}\right\}\right)+\exp(-c_\tau (\sigma_r^2 + p_1)).
    \end{align*}
\end{proof}

We are now ready to prove the main result of this section.

\begin{proof}[Proof of Theorem~\ref{thm:CZvarHP}]
    Thanks to the trivial estimate $\|\sin(\bV,\hat \bV)\|\leq 1$, it suffices to prove the result if 
    \begin{equation}
        \label{eqn:sigma4LB}
        \sigma_r^4(\bX)\geq {C} \alpha \, p_2 \, (\sigma_r^2(\bX)+p_1).
    \end{equation}
    Observe that this holds if and only if 
    $$(\sigma_r^2(\bX)-\sigma_+)(\sigma_r^2(\bX)-\sigma_-)\geq 0,$$
    where 
    $$\sigma_{\pm} = \frac{1}{2}\left({C} \alpha p_2 \pm \sqrt{{C^2} \alpha^2 p_2^2 + 4 {C} \alpha p_1p_2}\right).$$
    Since $\sigma_r^2(\bX)\geq 0\geq \sigma_-$, we see that 
    \begin{equation}
        \label{eqn:sigma2LB}
        \sigma_r^2(\bX)\geq \sigma_+\geq {C} \alpha p_2.
    \end{equation}
    By \cite[Proposition 1]{cai2018rate}, if $\bV_{\perp}\in \mathbb{F}^{p_2\times(p_2-r)}$ is such that $[\bV \ \bV_{\perp}]$ is an orthogonal matrix  
    and $\sigma_{r}(\bY\bV)>\sigma_{r+1}(\bY)$, then 
    $$\|\sin(\bV,\hat \bV)\|^2 \leq \min \left\{ 1,\frac {\sigma_r^2(\bY\bV)\|P_{\bY\bV}(\bY\bV_{\perp})\|^2}{(\sigma_r^2(\bY\bV)-\sigma_{r+1}^2(\bY))^2}\right\}.$$ 
    Throughout this proof, $C',c'$, etc. are absolute constants, while $C_\tau,c_\tau$ are constants that only depend on $\tau$. The value of these constants may change from one line to another.
    For $t=\sigma_r^2(\bX)/ {4} (\sigma_r^2(\bX)+p_1)$ consider the event
    $$A_t=\{\sigma_r^2(\bY\bV)\geq (\sigma_r^2(\bX)+p_1)(1-t)\}.$$
    Using \eqref{eq:CZvar1} and $\min\{t,t^2\}=t^2$ as $0<t\leq 1$, we find
    \begin{align*}
        \P(A_t^c) 
        &\leq \exp({C'} r-(\sigma_r^2(\bX)+p_1)\min\{t,t^2\}) \\
        & \leq \exp\left({C'} r- c_\tau \frac{\sigma_r^4(\bX)}{\sigma_r^2(\bX)+p_1}\right) \\
        & \leq \exp(-c_\tau\alpha p_2)
    \end{align*}
    by \eqref{eqn:sigma4LB}, provided that $\alpha_\tau$ is large enough.
    For $s=\sigma_r^2(\bX)/({4} p_1)$ consider
    $$B_s=\{\sigma_{r+1}^2(\bY)\leq p_1(1+s)\}.$$
    By \eqref{eq:CZvar2}, 
    \begin{align*}
        \P(B_s^c) 
        &\leq \exp({C'} p_2- c_\tau p_1\min\{s,s^2\}) \\
        & = \exp\left({C'} p_2-c_\tau p_1\min\left\{\frac{\sigma_r^2(\bX)}{p_1},\frac{\sigma_r^4(\bX)}{p_1^2}\right\}\right) \\
        & \leq \exp\left({C'} p_2- c_\tau
        \frac{\sigma_r^4(\bX)}{\sigma_r^2(\bX)+ p_1}\right) \\
        &\leq \exp(-c_\tau\alpha p_2)
    \end{align*}
    by \eqref{eqn:sigma4LB}, provided that $\alpha_\tau$ is large enough.
    Finally, for $u=\sqrt{\alpha p_2}$ consider
    $$C_u = \{\|P_{\bY\bV}(\bY\bV_{\perp})\|\leq u\}.$$
    By \eqref{eq:CZvar3},
    \begin{align*}
        \P(C_u^c)
        &\leq \exp\left({C'} p_2 - c_\tau\min\left\{\alpha p_2,\sqrt{\alpha p_2} \sqrt{\sigma_r^2(\bX)+p_1}\right\} \right) \\
        &\qquad + \exp\left(-c_\tau( \sigma_r^2(\bX)+p_1)\right) \\
        &\leq \exp(-c_\tau\alpha p_2)
    \end{align*}
    if $\alpha_\tau$ is large enough, as $\sigma_r^2(\bX)\geq {C} \alpha p_2$ by \eqref{eqn:sigma2LB}. 
    {Upon the events $A_t,B_s,C_u$ we now find that $\sigma_{r}(\bY\bV)>\sigma_{r+1}(\bY) \ge 0$ and,}
    using that the map 
    $$x\mapsto {\frac{x}{(x-y)^2}}$$
    is decreasing on $({y},\infty)$ for any fixed {$y \ge 0$} and 
    $$y\mapsto {\frac{x}{(x-y)^2}}$$
    increasing on $(-\infty,{x})$ for any fixed {$x \ge 0$, we obtain} 
    \begin{align*}
        & \frac {\sigma_r^2(\bY\bV)\|P_{\bY\bV}(\bY\bV_{\perp})\|^2}{(\sigma_r^2(\bY\bV)-\sigma_{r+1}^2(\bY))^2} \\
        & \qquad \leq \frac{(\sigma_r^2(\bX)+p_1)(1-t)\alpha p_2}{((\sigma_r^2(\bX)+p_1)(1-t)-p_1(1+s))^2} \\
        & \qquad \lesssim \frac{\alpha p_2(\sigma_r^2(\bX)+p_1)}{\sigma_r^4(\bX)} 
    \end{align*}
    as 
    \begin{align*}
        &(\sigma_r^2(\bX)+p_1)(1-t)-p_1(1+s) \\
        &= \sigma_r^2(\bX)(1-t) {-} p_1(s+t) 
        = \frac{1}{2} \sigma_r^2(\bX).
    \end{align*}
\end{proof}

\subsection{Proof of \cref{thm:subspacerectangletriangle}(ii) and heavy-tailed generalization}
\label{sec:CZvariationII}

\cref{thm:subspacerectangletriangle}(ii) follows by combining Theorem~\ref{thm:subspacerectangletriangle}(i) with \cref{lem:lambdarLB} below for $r = s$. Just note that $\mathrm{range}(\bX_n) = \mathrm{range}(\SigmaX)$ a.s.\ since $n \geq s$ and $\bx \sim \calC\calN(\0,\SigmaX)$.

\begin{lemma}
    \label{lem:lambdarLB}
    There exist absolute constants $c_1,c_2,c_3 > 0$ such the following holds. Suppose that $\bx_1,\dots,\bx_n \overset{i.i.d.}{\sim} \calC\calN(\0,\SigmaX)$. If 
    {
    $$n \ge c_1 \frac{\sum_{j=1}^r \lambda_j(\SigmaX)}{\lambda_r(\SigmaX)},$$
    then with probability at least $1-2e^{-c_2 n}$}
    $$\lambda_r(\tfrac{1}{n} \bX_n\bX_n^*)\geq c_3 \lambda_r(\SigmaX).$$
\end{lemma}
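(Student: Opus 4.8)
The plan is to diagonalize the covariance, which turns $\tfrac1n\bX_n\bX_n^*$ (restricted to the relevant subspace) into a rescaling of the sample covariance of standard complex Gaussian vectors, and then to invoke the classical lower bound on the smallest singular value of a Gaussian matrix.

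Concretely, let $\bU\in\O^{p\times r}$ contain the $r$ leading eigenvectors of $\SigmaX$ and let $\boldsymbol{\Lambda}=\diag(\lambda_1(\SigmaX),\dots,\lambda_r(\SigmaX))$. Since $\bx\sim\calC\calN(\0,\SigmaX)$ we may write, jointly and independently over $k$, $\bU^*\bx_k=\boldsymbol{\Lambda}^{1/2}\bg_k$ with $\bg_1,\dots,\bg_n\overset{\text{i.i.d.}}{\sim}\calC\calN(\0,\id_r)$ (the eigen-directions of $\SigmaX$ outside $\bU$, if any, are orthogonal to $\bU$ and therefore invisible to $\bU^*\bx_k$). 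Writing $\bG=[\bg_1\ \cdots\ \bg_n]\in\C^{r\times n}$, the min--max principle and the elementary estimate $\lambda_{\min}(\bA^{1/2}\bB\bA^{1/2})\ge\lambda_{\min}(\bA)\lambda_{\min}(\bB)$ for positive semidefinite $\bA,\bB$ give
\begin{align*}
\lambda_r\!\Big(\tfrac1n\bX_n\bX_n^*\Big)
&\ge \lambda_{\min}\!\Big(\bU^*\tfrac1n\bX_n\bX_n^*\,\bU\Big)
= \tfrac1n\,\lambda_{\min}\!\big(\boldsymbol{\Lambda}^{1/2}\bG\bG^*\boldsymbol{\Lambda}^{1/2}\big)\\
&\ge \tfrac1n\,\lambda_{\min}(\boldsymbol{\Lambda})\,\lambda_{\min}(\bG\bG^*)
= \lambda_r(\SigmaX)\cdot\tfrac1n\,\sigma_{\min}^2(\bG).
\end{align*}

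It therefore remains to show that $\sigma_{\min}^2(\bG)\ge \tfrac14 n$ with probability at least $1-2e^{-c_2 n}$ whenever $n\ge c_1 r$. This is enough: since $\lambda_1(\SigmaX)\ge\dots\ge\lambda_r(\SigmaX)$ we have $\sum_{j=1}^r\lambda_j(\SigmaX)\ge r\,\lambda_r(\SigmaX)$, so the hypothesis $n\ge c_1\sum_{j=1}^r\lambda_j(\SigmaX)/\lambda_r(\SigmaX)$ already forces $n\ge c_1 r$, and the display then yields $\lambda_r(\tfrac1n\bX_n\bX_n^*)\ge \tfrac14\lambda_r(\SigmaX)$, i.e.\ the claim with $c_3=\tfrac14$. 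For the singular-value bound, recall the standard non-asymptotic estimate for Gaussian matrices (see, e.g., \cite[\S4.6]{vershynin2018high} for the real case; the complex case follows by applying it to the real representation of $\bG$): there is an absolute constant $C$ such that for $n\ge r$ and any $t\ge 0$, $\sigma_{\min}(\bG)\ge \sqrt n-C(\sqrt r+t)$ with probability at least $1-2e^{-ct^2}$. Taking $t=\sqrt n/(4C)$ and $c_1=16C^2$, the assumption $n\ge c_1 r$ gives $C\sqrt r\le \sqrt n/4$, hence $\sigma_{\min}(\bG)\ge \sqrt n/2$ with probability at least $1-2e^{-cn/(16C^2)}$, which is the desired bound with $c_2=c/(16C^2)$.

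The argument has no genuine obstacle; the only points requiring care are the clean passage between the $\C^r$-valued Gaussian vectors $\bg_k$ and the real Gaussian-matrix deviation bounds (tracking the factor two from the real/imaginary split of $\calC\calN$, which only affects constants), and arranging a failure probability of the form $e^{-c_2 n}$ rather than $e^{-c_2 t^2}$ --- legitimate precisely because the hypothesis guarantees $n\gtrsim r$, so one may take $t\simeq\sqrt n$.
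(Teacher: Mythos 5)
Your proof is correct, and it takes a genuinely different route from the paper's. The paper also starts from the Courant--Fischer reduction $\lambda_r(\tfrac1n\bX_n\bX_n^*)\ge\min_{\|\bv\|_2=1}\tfrac1n\sum_k|\langle\bg_k,\SigmaX^{1/2}\bU\bv\rangle|^2$, but then lower-bounds this quadratic process via Mendelson's small-ball method: a small-ball probability $Q_{2\xi}$ controlled by Paley--Zygmund plus a Rademacher/Gaussian complexity term $W_n\le(\sum_{j=1}^r\lambda_j(\SigmaX))^{1/2}$, which is exactly where the hypothesis $n\gtrsim\sum_{j=1}^r\lambda_j(\SigmaX)/\lambda_r(\SigmaX)$ enters. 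You instead exploit rotation invariance to write $\bU^*\bX_n=\boldsymbol{\Lambda}^{1/2}\bG$ with $\bG$ standard complex Gaussian and reduce to $\sigma_{\min}(\bG)\gtrsim\sqrt n$. This buys you something: your argument only needs $n\gtrsim r$, which is weaker than the stated hypothesis (since $\sum_{j=1}^r\lambda_j\ge r\lambda_r$), so for Gaussian data you prove a slightly stronger lemma. What the paper's route buys in exchange is robustness: the small-ball argument uses only independence of the samples and an $L^1$--$L^2$ equivalence, which is why it extends verbatim to the heavy-tailed generalization in Theorem \ref{thm:subspacerectangletriangle_heavytailed}, whereas your factorization $\bU^*\bx_k=\boldsymbol{\Lambda}^{1/2}\bg_k$ with $\bg_k$ standard and independent of the discarded directions is genuinely Gaussian. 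One small point to tighten: the "real representation" of $\bG$ is a structured $2n\times 2r$ matrix whose paired rows are correlated, so Vershynin's two-sided bound for matrices with independent rows does not literally apply to it; but the fact you need ($\sigma_{\min}^2(\bG)\ge n/2$ with probability $1-e^{-cn}$ once $n\gtrsim r$) is standard for complex Gaussians and follows, e.g., by a Bernstein-plus-net argument over the complex sphere --- indeed it is an immediate consequence of the paper's own Lemma \ref{lem:help1} applied with $\bZ=\bG^*$ and $\bB=\bC=\id_r$ at $t=n/2$. So this is a presentational caveat, not a gap.
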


We extract the following technical observation from the proof of Lemma \ref{lem:lambdarLB}.

\begin{lemma}
\label{lem:Mendelson}
    There exist absolute constants $c_1,c_2,c_3 > 0$ such the following holds. Suppose that $\bg_1,\dots,\bg_n \overset{i.i.d.}{\sim} \calC\calN(\0,\id_p)$, let $\SigmaX \in \C^{p\times p}$ be a covariance matrix, and let $\bU \in \C^{p\times r}$ have orthonormal columns that span the leading eigenspace of $\SigmaX$. If
    {
    $$n \ge c_1 \frac{\sum_{j=1}^r \lambda_j(\SigmaX)}{\lambda_r(\SigmaX)},$$
    then with probability at least $1-2e^{-c_2 n}$}, 
    $$\inf_{\|\bv\|_2=1}\frac{1}{n}\sum_{k=1}^n \left| \langle \bg_k,  \SigmaX^{1/2} {\bU} \bv\rangle \right|^2 \geq c_3 \lambda_r(\SigmaX).$$
\end{lemma}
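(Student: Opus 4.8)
The plan is to reduce the infimum to the smallest singular value of an explicit $n\times r$ Gaussian matrix whose dimensions no longer involve $p$. Write $\boldsymbol{D}:=\bU^*\SigmaX\bU=\diag(\lambda_1(\SigmaX),\dots,\lambda_r(\SigmaX))$. Since the columns of $\bU$ span a union of eigenspaces of $\SigmaX$, the subspace $\mathrm{range}(\bU)$ is invariant under $\SigmaX^{1/2}$ and one has the identity $\SigmaX^{1/2}\bU=\bU\boldsymbol{D}^{1/2}$. Consequently $\langle\bg_k,\SigmaX^{1/2}\bU\bv\rangle=\langle\bU^*\bg_k,\boldsymbol{D}^{1/2}\bv\rangle=\langle\boldsymbol{h}_k,\boldsymbol{D}^{1/2}\bv\rangle$, where $\boldsymbol{h}_k:=\bU^*\bg_k$ are i.i.d.\ $\calC\calN(\0,\id_r)$ (hence have i.i.d.\ $\calC\calN(0,1)$ entries) because $\bU^*\bU=\id_r$. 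Stacking $\boldsymbol{h}_1^*,\dots,\boldsymbol{h}_n^*$ as rows of $\boldsymbol{H}\in\C^{n\times r}$ gives
$$
\inf_{\|\bv\|_2=1}\frac1n\sum_{k=1}^n\bigl|\langle\bg_k,\SigmaX^{1/2}\bU\bv\rangle\bigr|^2
=\frac1n\,\sigma_{\min}\!\bigl(\boldsymbol{H}\boldsymbol{D}^{1/2}\bigr)^2
\ge\frac1n\,\sigma_{\min}(\boldsymbol{H})^2\,\sigma_{\min}\!\bigl(\boldsymbol{D}^{1/2}\bigr)^2
=\frac{\lambda_r(\SigmaX)}{n}\,\sigma_{\min}(\boldsymbol{H})^2,
$$
using $\sigma_{\min}(\bA\bB)\ge\sigma_{\min}(\bA)\sigma_{\min}(\bB)$ and $\sigma_{\min}(\boldsymbol{D}^{1/2})^2=\lambda_r(\SigmaX)$.

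It then remains to bound $\sigma_{\min}(\boldsymbol{H})$ below. Since $\boldsymbol{H}$ is an $n\times r$ matrix with i.i.d.\ standard complex Gaussian entries, a standard non-asymptotic estimate for the singular values of Gaussian matrices (e.g.\ \cite[Theorem~4.6.1]{vershynin2018high} applied to $\Re\boldsymbol{H},\Im\boldsymbol{H}$) provides absolute constants $C_0\ge1$ and $c>0$ such that $\sigma_{\min}(\boldsymbol{H})\ge\sqrt n-C_0\sqrt r-u$ with probability at least $1-2e^{-cu^2}$, for every $u>0$. Taking $u=\sqrt n/4$ and assuming $n\ge16C_0^2r$ forces $\sigma_{\min}(\boldsymbol{H})\ge\sqrt n/2$, hence $\tfrac1n\sigma_{\min}(\boldsymbol{H})^2\ge\tfrac14$, with probability at least $1-2e^{-cn/16}$. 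Combined with the display above, this yields the claim with $c_3=\tfrac14$. Finally, $\lambda_j(\SigmaX)\ge\lambda_r(\SigmaX)$ for $j\le r$, so the hypothesis $n\ge c_1\sum_{j=1}^r\lambda_j(\SigmaX)/\lambda_r(\SigmaX)$ already implies $n\ge c_1r$; choosing $c_1:=16C_0^2$ makes $n\ge16C_0^2r$ automatic and fixes $c_2:=c/16$.

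The only structural fact doing any work is the eigenspace identity $\SigmaX^{1/2}\bU=\bU\boldsymbol{D}^{1/2}$; it collapses the ambient dimension and reduces the problem to a well-conditioned $r$-dimensional Gaussian covariance, so in the Gaussian case treated here I do not anticipate a genuine obstacle — the only points needing a little care are verifying $\bU^*\bg_k\sim\calC\calN(\0,\id_r)$ and quoting the smallest-singular-value bound with the correct complex constants. If one did not exploit this identity — which is the situation for the heavy-tailed generalization, cf.\ Theorem~\ref{thm:subspacerectangletriangle_heavytailed} — the natural route is Mendelson's small-ball method for the quadratic process $\bv\mapsto\frac1n\sum_k|\langle\bg_k,\SigmaX^{1/2}\bU\bv\rangle|^2$: the small-ball probability is uniformly bounded below at scale $\xi\simeq\sqrt{\lambda_r(\SigmaX)}$, while the Gaussian complexity of the ellipsoid $\{\SigmaX^{1/2}\bU\bv:\|\bv\|_2=1\}$ is of order $\bigl(\sum_{j=1}^r\lambda_j(\SigmaX)\bigr)^{1/2}$, which is precisely where the sample-size requirement $n\gtrsim\sum_{j=1}^r\lambda_j(\SigmaX)/\lambda_r(\SigmaX)$ stated in the lemma comes from.
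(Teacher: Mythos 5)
Your proof is correct, but it takes a genuinely different route from the paper. The paper proves this lemma via Mendelson's small-ball method: it lower-bounds $|\langle \bg_k,\SigmaX^{1/2}\bU\bv\rangle|^2$ by the square of its real part, controls the small-ball probability $Q_{2\xi}$ at scale $\xi\simeq\sqrt{\lambda_r(\SigmaX)}$ via Paley--Zygmund, and bounds the multiplier complexity by $W_n\leq(\sum_{j=1}^r\lambda_j(\SigmaX))^{1/2}$ --- which is exactly where the effective-rank condition $n\gtrsim\sum_{j}\lambda_j(\SigmaX)/\lambda_r(\SigmaX)$ originates, and which is why the same argument carries over verbatim to the heavy-tailed setting of Theorem~\ref{thm:subspacerectangletriangle_heavytailed} under the $L^1$-$L^2$ equivalence \eqref{eq:L1-L2equivalence}. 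You instead exploit Gaussianity fully: the invariance identity $\SigmaX^{1/2}\bU=\bU\boldsymbol{D}^{1/2}$ (valid since, per the paper's convention, the columns of $\bU$ are eigenvectors; for a general orthonormal basis of the eigenspace one would replace $\boldsymbol{D}$ by the Hermitian matrix $\bU^*\SigmaX\bU$, whose smallest eigenvalue is still $\lambda_r(\SigmaX)$, so nothing changes) together with rotation invariance reduces everything to $\sigma_{\min}(\boldsymbol{H})$ for an $n\times r$ standard complex Gaussian matrix, and a standard singular-value estimate finishes the job. This is more elementary and in fact yields the conclusion under the weaker requirement $n\gtrsim r$; since $\sum_{j=1}^r\lambda_j(\SigmaX)/\lambda_r(\SigmaX)\geq r$, the lemma's hypothesis is more than sufficient, so you prove a slightly stronger statement in the Gaussian case. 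The only point worth tightening is the complex smallest-singular-value bound: rather than "applying the real theorem to $\Re\boldsymbol{H},\Im\boldsymbol{H}$" separately, the clean reduction is $\|\boldsymbol{H}\bw\|_2\geq\|\Re(\boldsymbol{H}\bw)\|_2$ with $\Re(\boldsymbol{H}\bw)=[\Re\boldsymbol{H}\ \ -\Im\boldsymbol{H}]\,(\Re\bw;\Im\bw)$, an $n\times 2r$ real Gaussian matrix, after which the real result applies directly; the conclusion you quote is correct. The trade-off between the two approaches is exactly as you diagnose in your closing remarks: your argument is sharper and simpler for Gaussians but does not generalize, whereas the paper's small-ball argument buys the heavy-tailed extension at the cost of the effective-rank sample condition.
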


\begin{proof}
   For any {$\bv \in \C^p$}, we note that 
   \begin{align*}
       \frac{1}{n}\sum_{k=1}^n \left| \langle \bg_k,  \SigmaX^{1/2} {\bU} \bv\rangle \right|^2
       \ge {\frac{1}{n}\sum_{k=1}^n \Re ( \langle \bg_k,  \SigmaX^{1/2} {\bU} \bv\rangle )^2}
       = {\frac{1}{n}\sum_{k=1}^n \langle \tilde\bg_k, \tilde\bv \rangle^2},
   \end{align*}
   where
   \begin{align*}
       {\tilde\bv = \begin{pmatrix}
           \Re(\SigmaX^{1/2} \bU \bv) \\
           \Im(\SigmaX^{1/2} \bU \bv)
       \end{pmatrix}
       \in \R^{2p}}
       \quad \text{and} \quad
       \tilde\bg_k = \begin{pmatrix}
           \Re(\bg_k) \\ \Im(\bg_k) 
       \end{pmatrix}
       \in \R^{2p}.
   \end{align*}
   By Mendelson's small ball method, see (the proof of) \cite[Theorem 1.5]{KoM15}, with probability at least $1-2e^{-2t^2}$, 
   \begin{align}
   \label{eq:Mendelson}
       \inf_{\|\bv\|_2=1}\frac{1}{n}\sum_{k=1}^n {\langle \tilde\bg_k, \tilde\bv \rangle^2} \geq \xi^2\left(Q_{2\xi}-\frac{4W_n}{\xi \sqrt{n}}-\frac{t}{\sqrt{n}}\right),
   \end{align}
    where 
    $$Q_{2\xi}=\inf_{\|\bv\|_2=1} \P({|\langle \tilde\bg_k, \tilde\bv \rangle |} \geq \xi),$$
    $\xi > 0$ is sufficiently small such that $Q_{2\xi} > 0$, and 
    \begin{align}
    \label{eq:W}
        W_n &= \E \sup_{\|\bv\|_2=1} \left| \frac{1}{\sqrt{n}} \sum_{k=1}^n \varepsilon_k {\langle \tilde\bg_k, \tilde\bv \rangle} \right|
        = {\E \sup_{\|\bv\|_2=1} \left| \Re \left( \left\langle \frac{1}{\sqrt{n}} \sum_{k=1}^n \varepsilon_k \bg_k, \SigmaX^{1/2} \bU \bv \right\rangle \right) \right|} \notag \\
        &= {\E \sup_{\|\bv\|_2=1} \left| \Re \left( \left\langle \bg, \SigmaX^{1/2} \bU \bv \right\rangle \right) \right|}
        \le {\E \| \bU^* \SigmaX^{1/2} \bg \|_2 },
    \end{align}
    where $\varepsilon_k$ are i.i.d.\ Rademacher variables {and $\bg \sim \calC\calN(\0,\id_p)$}.

    {Since $\langle \tilde\bg_k, \tilde\bv \rangle \sim \calN(0,\| \tilde\bv \|_2^2)$ and $\| \tilde\bv \|_2 = \| \SigmaX^{1/2} \bU \bv \|_2$, we know that $(\E |\langle \tilde\bg_k, \tilde\bv \rangle|)^2 \le \E |\langle \tilde\bg_k, \tilde\bv \rangle|^2 \le (c \E |\langle \tilde\bg_k, \tilde\bv \rangle|)^2$, for an absolute constant $c > 0$, and that $(\E |\langle \tilde\bg_k, \tilde\bv \rangle|^2)^\frac{1}{2} = \| \tilde\bv \|_2 \geq \sqrt{ \lambda_r(\SigmaX)}$ for all $\|\bv\|_2=1$.} Using these observations and the Paley-Zygmund inequality, and choosing $\xi = c' \sqrt{\lambda_r(\SigmaX)}$ for sufficiently small $c' > 0$, yield
    \begin{align*}
        Q_{2\xi} & \geq \inf_{\|\bv\|_2=1} \frac{(\E {|\langle \tilde\bg_k, \tilde\bv \rangle|} -\xi)^2}{\E {|\langle \tilde\bg_k, \tilde\bv \rangle|}^2} 
        \ge \left( c^{-1} - \frac{\xi}{c \cdot \inf_{\|\bv\|_2=1} (\E |\langle \tilde\bg_k, \tilde\bv \rangle|^2)^\frac{1}{2}} \right)^2 
        \ge (2c)^{-2} > 0.
    \end{align*}
    {Moreover, by \eqref{eq:W}
    \begin{align*}
        W_n & \leq (\E\|\bU^*\SigmaX^{1/2}\bg\|_2^2)^{1/2}
        = \big( \E\, \tr(\bU^*\SigmaX^{1/2}\bg \bg^*\SigmaX^{1/2}\bU) \big)^{1/2} \\
        &= \big( \tr(\bU^*\SigmaX \bU) \big)^{1/2} 
        = \left(\sum_{j=1}^r \lambda_j(\SigmaX)\right)^{1/2} 
    \end{align*}
    By setting $\xi = \sqrt{\lambda_r(\SigmaX)}$ and $t = \sqrt{c_2 n}$, the claim now follows from \eqref{eq:Mendelson} by our assumption on $n$ if we choose $c_1$ to be sufficiently large and $c_2$ to be sufficiently small with respect to $c$.
    }
\end{proof}

\begin{proof}[Proof of Lemma \ref{lem:lambdarLB}]
    We prove the lemma for $\F = \C$. The case $\F = \R$ follows by simplification of the argument. Let $\bg_1,\ldots,\bg_n$ be i.i.d.\ complex standard Gaussian, so that $\bx_k\sim \SigmaX^{1/2}\bg_k$ for all $k\in[n]$. By the Courant-Fisher minimax principle, 
    $$\lambda_r(\tfrac{1}{n} \bX_n\bX_n^*) = \max_{\substack{M\subset \C^p: \\ \textnormal{dim}(M)=r}} \; \min_{\substack{\bu\in M: \\ \|\bu\|_2=1}} \left \langle \tfrac{1}{n} \bX_n\bX_n^*\bu,\bu\right\rangle.$$
    Thus, if we fix any $r$-dimensional subspace $M$ and let $\bU_M \in \C^{p\times r}$ contain an orthonormal basis of $M$, we find
    $$\lambda_r(\tfrac{1}{n} \bX_n\bX_n^*)\geq \min_{\|\bv\|_2=1} \left \langle \tfrac{1}{n} \bX_n\bX_n^*\bU_M \bv,\bU_M \bv\right\rangle = \min_{\|\bv\|_2=1} \frac{1}{n}\sum_{k=1}^n \left| \langle \bg_k,  \SigmaX^{1/2}\bU_M \bv\rangle \right|^2.$$ 
    The result now follows from Lemma \ref{lem:Mendelson}.
\end{proof}

By inspecting the proof of Lemma \ref{lem:lambdarLB}, it is straightforward to extend the result to heavy-tailed distributions which only satisfy a mild $L^1$-$L^2$-equivalence, i.e., there exists $c > 0$ such that
\begin{align}
\label{eq:L1-L2equivalence}
    \E\, \Re(\langle \bx, \bv \rangle)^2 \le c (\E\, \Re(\langle \bx, \bv \rangle) )^2,
\end{align}
for any $\bv \in \C^p$. From this the following more general version of Theorem \ref{thm:subspacerectangletriangle}(ii) can be deduced.
\begin{theorem}
\label{thm:subspacerectangletriangle_heavytailed}
    Suppose Assumptions \ref{assump:main} and \ref{assump:noise2} hold and that $\rank(\SigmaX) = s$. There are absolute constants {$C_1,C_2 > 0$}, and constants $c,\alpha_\tau>0$ depending only on $K$ such that the following holds. Let $\bU,\bU_n^\triangle \in \O^{p\times s}$ denote orthonormal bases for the leading left singular spaces of $\SigmaX$ and $\SIGMADITHtriangular_n$. Assume that $\bx_1,\dots,\bx_n \overset{i.i.d.}{\sim} \bx$ where $\bx$ has mean zero and satisfies \eqref{eq:L1-L2equivalence}, and denote the Rademacher complexity
    \begin{align*}
        W_n = \E \sup_{\| \bv \|_2 = 1} \left| \frac{1}{\sqrt{n}} \sum_{k=1}^n \varepsilon_k \Re(\langle \bx_k, \bU\bv \rangle) \right|,
    \end{align*}
    where $\varepsilon_k$ are i.i.d.\ Rademacher variables.
    For any $\alpha \ge \alpha_\tau$, 
    we have that if {$n \ge C_2 \frac{W_n^2}{\lambda_s(\SigmaX)}$, then with probability at least $1-3e^{-c \min\{\alpha p, n \}}$}
    \begin{equation}
        \dist(\bU_n^\triangle,\bU)
        \leq \min \left\{ 1, \, C_1 \sqrt{\frac {\nu^2+\mu^2}{\lambda_s(\SigmaX)} } \left( 1 + \sqrt{ \frac{\nu^2+\mu^2}{\lambda_s(\SigmaX)} } \right) \sqrt{\frac{\alpha p}{n}} \right\}.
    \end{equation}
\end{theorem}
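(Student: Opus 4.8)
The plan is to reduce Theorem~\ref{thm:subspacerectangletriangle_heavytailed} to two ingredients that are already available: the deterministic-signal estimate Theorem~\ref{thm:subspacerectangletriangle}(i), applied \emph{conditionally} on the realization of $\bX_n$, and a high-probability lower bound $\lambda_s(\tfrac1n\bX_n\bX_n^*)\gtrsim\lambda_s(\SigmaX)$ replacing Lemma~\ref{lem:lambdarLB} in the heavy-tailed regime. Since $\bx_1,\dots,\bx_n$ are independent of $(\be_k,\tauv_k^\triangle)_k$, conditioning on $\bX_n$ leaves the noise and dither distributions intact, so Theorem~\ref{thm:subspacerectangletriangle}(i) applies verbatim with the role of $\lambda_s(\SigmaX)$ there played by $\lambda_s(\tfrac1n\bX_n\bX_n^*)$, on an event of conditional probability at least $1-e^{-c\alpha p}$. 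Moreover the $\bx_k$ lie in $\mathrm{range}(\SigmaX)$ almost surely, so on the event $\{\lambda_s(\tfrac1n\bX_n\bX_n^*)>0\}$ one has $\mathrm{range}(\bX_n)=\mathrm{range}(\SigmaX)$ and the leading left singular space of $\tfrac1n\bX_n\bX_n^*$ coincides with $\bU$; hence on the relevant events $\dist(\bU_n^\triangle,\bU)$ is exactly the quantity bounded by part~(i), and substituting the eigenvalue lower bound and renaming constants yields the claim.

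The new work is the eigenvalue lower bound, obtained by adapting the proofs of Lemmas~\ref{lem:lambdarLB} and~\ref{lem:Mendelson}. By the Courant--Fischer principle, $\lambda_s(\tfrac1n\bX_n\bX_n^*)\ge\inf_{\|\bv\|_2=1}\tfrac1n\sum_k|\langle\bx_k,\bU\bv\rangle|^2\ge\inf_{\|\bv\|_2=1}\tfrac1n\sum_k\Re(\langle\bx_k,\bU\bv\rangle)^2$, and Mendelson's small-ball method (via \cite[Theorem 1.5]{KoM15}, exactly as in the proof of Lemma~\ref{lem:Mendelson}) gives, for any $\xi>0$ and $t>0$, with probability at least $1-2e^{-2t^2}$,
$$\inf_{\|\bv\|_2=1}\frac1n\sum_{k=1}^n\Re(\langle\bx_k,\bU\bv\rangle)^2\ \ge\ \xi^2\Big(Q_{2\xi}-\frac{4W_n}{\xi\sqrt n}-\frac{t}{\sqrt n}\Big),$$
where $Q_{2\xi}=\inf_{\|\bv\|_2=1}\P\big(|\Re(\langle\bx,\bU\bv\rangle)|\ge\xi\big)$ and $W_n$ is the stated Rademacher complexity. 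Gaussianity was used in Lemma~\ref{lem:Mendelson} only (a) to lower-bound $Q_{2\xi}$ and (b) to bound $W_n$; here (b) is unnecessary because $W_n$ is kept as a hypothesis parameter, and (a) is supplied by the $L^1$--$L^2$ equivalence \eqref{eq:L1-L2equivalence}: Paley--Zygmund gives $\P(|Z|\ge\tfrac12\E|Z|)\ge\tfrac14(\E|Z|)^2/\E Z^2\ge\tfrac1{4c}$ for $Z=\Re(\langle\bx,\bU\bv\rangle)$, so choosing $\xi\simeq\inf_{\|\bv\|_2=1}(\E\Re(\langle\bx,\bU\bv\rangle)^2)^{1/2}$ makes $Q_{2\xi}$ a positive absolute constant. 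Since $\E\Re(\langle\bx,\bU\bv\rangle)^2\gtrsim\bv^*\bU^*\SigmaX\bU\bv\ge\lambda_s(\SigmaX)$ uniformly over the unit sphere (immediate for real $\bx$, and valid up to the factor $\tfrac12$ for circularly symmetric complex $\bx$), this fixes $\xi\simeq\sqrt{\lambda_s(\SigmaX)}$.

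Plugging $\xi\simeq\sqrt{\lambda_s(\SigmaX)}$ and $t=\sqrt{c_2 n}$ into the displayed inequality, the term $4W_n/(\xi\sqrt n)\simeq W_n/\sqrt{\lambda_s(\SigmaX)\,n}$ and the term $t/\sqrt n=\sqrt{c_2}$ are each at most a small constant once $n\ge C_2 W_n^2/\lambda_s(\SigmaX)$ and $c_2$ is chosen small relative to $c$, so the bracket stays bounded below by an absolute constant and $\lambda_s(\tfrac1n\bX_n\bX_n^*)\ge c_3\lambda_s(\SigmaX)$ with probability at least $1-2e^{-c_2 n}$; in particular $\tfrac1n\bX_n\bX_n^*$ then has rank exactly $s$ with range equal to $\mathrm{range}(\SigmaX)$. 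Intersecting with the conditional event from Theorem~\ref{thm:subspacerectangletriangle}(i) via the tower property (the eigenvalue event is $\bX_n$-measurable, so $\P(A\cap B)\ge(1-e^{-c\alpha p})\,\P(B)$) yields probability at least $1-3e^{-c\min\{\alpha p,\,n\}}$, and on this event $\dist(\bU_n^\triangle,\bU)\le\min\{1,\,C_1\sqrt{(\nu^2+\mu^2)/(c_3\lambda_s(\SigmaX))}\,(1+\sqrt{(\nu^2+\mu^2)/(c_3\lambda_s(\SigmaX))})\,\sqrt{\alpha p/n}\}$, which is the assertion after absorbing $c_3$ into $C_1$.

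The main obstacle is the small-ball step: one must lower-bound $Q_{2\xi}$ \emph{uniformly} over the unit sphere at a scale $\xi$ correctly calibrated to $\lambda_s(\SigmaX)$, which is precisely where the $L^1$--$L^2$ equivalence enters and where one must verify $\E\Re(\langle\bx,\bU\bv\rangle)^2\simeq\lambda_s(\SigmaX)$ for every $\bv$ on the sphere; the rest — the conditioning that reuses Theorem~\ref{thm:subspacerectangletriangle}(i), the range-equality observation, and the probability bookkeeping — is routine.
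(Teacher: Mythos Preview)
Your proposal is correct and follows the same route as the paper, which derives Theorem~\ref{thm:subspacerectangletriangle_heavytailed} by noting that the proof of Lemma~\ref{lem:lambdarLB} (via Lemma~\ref{lem:Mendelson} and Mendelson's small-ball method) extends once the Gaussian small-ball step is replaced by Paley--Zygmund together with the $L^1$--$L^2$ equivalence~\eqref{eq:L1-L2equivalence}, with $W_n$ retained as a hypothesis parameter rather than bounded explicitly. The resulting eigenvalue lower bound is then combined with Theorem~\ref{thm:subspacerectangletriangle}(i) applied conditionally on $\bX_n$, exactly as you describe.
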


\subsection{Proof of \cref{thm:espritrectangular}}
\label{proof:espritrectangular}

We first recall the following lemma that summarizes \cite[Lemma X.1]{li2022stability}, \cite[Lemma 2]{li2020super}, and \cite[Lemma X.2]{li2022stability}. It shows that any (deterministic) estimator $\hat\bU$ of $\bU$ produces an estimate $\hat\btheta$ of the true frequencies $\btheta$ via the ESPRIT algorithm.

\begin{lemma}
	\label{lem:mdsintheta}
	Suppose $p > s$. Then  
	$$
	\md(\hat\btheta,\btheta) 
	\leq \min\left\{ \frac 12, \, \frac{ 4^{s+2} s^{3/2} \sqrt p  }{\sigma_s(\bPhi)} \dist(\hat\bU,\bU) \right\}. 
	$$
	If additionally,
	$
	\dist(\hat\bU,\bU)
	\leq \displaystyle \frac{\sigma_s^2(\bPhi) \Delta}{4^{s+2} s^2 p}, 
	$
	then
	$$
	\md(\hat\btheta,\btheta) 
	\leq \min \left\{ \frac 12, \, 4^{s+2} \dist(\hat\bU,\bU) \right\}.
	$$
\end{lemma}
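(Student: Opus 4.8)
Since the statement is explicitly a repackaging of known ESPRIT perturbation bounds, the plan is to assemble the three cited ingredients rather than reprove anything from scratch. Let $\bU_0,\bU_1$ be the top and bottom $p-1$ rows of an exact orthonormal basis $\bU$ of $\mathrm{range}(\bPhi)$, and set $\bPsi := \bU_0^\dagger \bU_1$. By the shift-invariance (Vandermonde) structure of $\bPhi(\btheta)$, the matrix $\bPsi$ is diagonalizable with eigenvalues exactly $\{e^{-2\pi\i\theta_k}\}_{k\in[s]}$, and $\hat\bPsi = \hat\bU_0^\dagger\hat\bU_1$ from Algorithm \ref{alg:espritquan} is its perturbed counterpart built from $\hat\bU$. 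The first step is to invoke \cite[Lemma X.1]{li2022stability}, which controls the perturbation of the ESPRIT pencil and its eigenvalues by $\dist(\hat\bU,\bU)$; this is where the amplification factors $\sqrt p/\sigma_s(\bPhi)$ (from $\|\bU_0^\dagger\|$ and the conditioning of the truncated Vandermonde) and the $4^{s}$-type factors (from the conditioning of the eigenvector matrix of $\bPsi$, itself a Vandermonde matrix on $s$ nodes) enter.

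The second step is to apply \cite[Lemma 2]{li2020super}, a Bauer--Fike / Hoffman--Wielandt-type statement, to convert a bound on $\max_k|\hat\lambda_{\pi(k)} - e^{-2\pi\i\theta_k}|$ for the optimal matching permutation $\pi$ into a bound on $\md(\hat\btheta,\btheta)$. Here one uses that $\hat\lambda_k = r_k e^{-2\pi\i\hat\theta_k}$, that the Euclidean (chordal) distance between two unit-modulus complex numbers controls the torus distance of their arguments up to an absolute constant, and that the true eigenvalues lie on the unit circle and are $\Delta$-separated. Combining with the trivial diameter bound $\md(\hat\btheta,\btheta)\le \tfrac12$ on $\R/\Z$ yields the first displayed inequality with its $\min$.

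The third step handles the refined bound: once $\dist(\hat\bU,\bU)$ is below the threshold $\sigma_s^2(\bPhi)\Delta/(4^{s+2}s^2 p)$, the perturbed eigenvalues $\hat\lambda_k$ are close enough to $\{e^{-2\pi\i\theta_k}\}$ that the matching permutation is unambiguous and a localized first-order perturbation analysis applies; this is precisely \cite[Lemma X.2]{li2022stability}, and in that regime the $\sqrt p/\sigma_s(\bPhi)$ and $s^{3/2}$ amplification drops out, leaving the clean estimate $4^{s+2}\dist(\hat\bU,\bU)$, again intersected with $\tfrac12$.

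\textbf{Main obstacle.} There is essentially no new mathematical content to produce; the only thing that requires genuine care is bookkeeping — verifying that the exponential-in-$s$ constants and the exact condition-number dependencies quoted from \cite{li2022stability,li2020super} match the constants in the statement (in particular the Vandermonde eigenvector conditioning that produces the $4^{s+2}$ and $s^{3/2}$ factors), and the passage from eigenvalue (chordal) distance to argument (torus) distance near the unit circle, which is where the specific branch of $\arg$ chosen in Algorithm \ref{alg:espritquan} matters.
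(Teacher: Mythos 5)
Your proposal matches the paper's treatment exactly: the paper gives no independent proof of this lemma but states that it ``summarizes'' \cite[Lemma X.1]{li2022stability}, \cite[Lemma 2]{li2020super}, and \cite[Lemma X.2]{li2022stability}, which are precisely the three ingredients you assemble in the same roles (pencil perturbation, eigenvalue-to-matching-distance conversion, and the refined bound under the smallness condition on $\dist(\hat\bU,\bU)$). Your identification of the remaining work as constant bookkeeping against those references is also consistent with how the paper uses the result.
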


The first inequality of this lemma holds unconditionally. The second one requires an assumption that the perturbation of $\bU$ is sufficiently small, in which case, we obtain an improved estimate for the perturbation of $\btheta$. 

We will use \cref{lem:mdsintheta} with $\bU_n^\square$ acting as an estimator for $\bU$. To bound $\dist(\bU_n^\square,\bU)$ via \cref{thm:subspacerectangle}, we need to show that \cref{assump:specestimationproblem} can be put into the framework of \cref{assump:main}. 

\begin{lemma}
	\label{lem:model}
	If \cref{assump:specestimationproblem} holds, then \cref{assump:main} is satisfied with $\bx_k:= \bPhi\ba_k$ and $\by_k=\bPhi\ba_k+\be_k$ for each $k\in [n]$, and furthermore, $\SigmaX$ has rank $s$. We also have the inequality 
	\begin{equation}
		\label{eq:Sighelp1}
		\lambda_s(\SigmaX)\geq \sigma_s^2(\bPhi) \lambda_s(\SigmaA). 
	\end{equation} 
\end{lemma}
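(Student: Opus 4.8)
The plan is to verify directly that setting $\bx_k := \bPhi\ba_k$ puts us in the situation of Assumption~\ref{assump:main}, and then to read off the rank of $\SigmaX$ and the bound \eqref{eq:Sighelp1} from the explicit formula $\SigmaX = \bPhi\SigmaA\bPhi^*$. For Assumption~\ref{assump:main}(ii) there is nothing to prove: the noise hypotheses in Assumption~\ref{assump:specestimationproblem}(ii) are exactly those of Assumption~\ref{assump:main}(ii). Part~(iii) is immediate because $\bPhi\ba_1,\dots,\bPhi\ba_n$ are deterministic functions of $\ba_1,\dots,\ba_n$, which are independent of $\be_1,\dots,\be_n$. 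For part~(i), in the deterministic case $\frac1n\sum_{k=1}^n\bx_k\bx_k^* = \bPhi\big(\frac1n\sum_{k=1}^n\ba_k\ba_k^*\big)\bPhi^* = \bPhi\SigmaA\bPhi^*$, and in the stochastic case $\E(\bx\bx^*) = \bPhi\,\E(\ba\ba^*)\,\bPhi^* = \bPhi\SigmaA\bPhi^*$; in the stochastic case one additionally checks that $\bx = \bPhi\ba$ is $K$-subgaussian, which follows from $\langle\bPhi\ba,\bw\rangle = \langle\ba,\bPhi^*\bw\rangle$ together with the fact that a (complex-)linear image of a subgaussian vector is again subgaussian, the factor-$2$ slack between the hypothesis ``$K/2$-subgaussian'' on $\ba$ and the conclusion ``$K$-subgaussian'' on $\bx$ accounting for the passage between a complex vector and its real and imaginary parts. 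Finally $\by_k = \bx_k + \be_k = \bPhi\ba_k + \be_k$ is exactly the model of Assumption~\ref{assump:specestimationproblem}(iv), so Assumption~\ref{assump:main} holds with $\SigmaX = \bPhi\SigmaA\bPhi^*$.

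Next I would extract the structural facts about $\SigmaX = \bPhi\SigmaA\bPhi^*$. Since $\theta_1,\dots,\theta_s$ are distinct and $p > s$, the $p\times s$ matrix $\bPhi$ is a rectangular Vandermonde matrix whose nodes $e^{-2\pi \i \theta_1},\dots,e^{-2\pi \i \theta_s}$ are distinct points on the unit circle; hence $\bPhi$ has full column rank $s$, so $\sigma_s(\bPhi) > 0$ and $\bPhi^*\bPhi$ is invertible. Together with $\rank(\SigmaA) = s$ this gives $\rank(\SigmaX) = s$ and $\mathrm{range}(\SigmaX) = \mathrm{range}(\bPhi)$. For \eqref{eq:Sighelp1}, I would use the variational characterization of the smallest nonzero eigenvalue of the rank-$s$ Hermitian positive semidefinite matrix $\SigmaX$:
\begin{align*}
\lambda_s(\SigmaX) &= \min_{\bu\in\mathrm{range}(\bPhi),\,\|\bu\|_2=1}\bu^*\bPhi\SigmaA\bPhi^*\bu = \min_{\bu\in\mathrm{range}(\bPhi),\,\|\bu\|_2=1}(\bPhi^*\bu)^*\SigmaA(\bPhi^*\bu) \\
&\ge \lambda_s(\SigmaA)\min_{\bu\in\mathrm{range}(\bPhi),\,\|\bu\|_2=1}\|\bPhi^*\bu\|_2^2,
\end{align*}
where the inequality uses that $\SigmaA$ is an invertible $s\times s$ positive semidefinite matrix with smallest eigenvalue $\lambda_s(\SigmaA)$. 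Writing a thin SVD $\bPhi = \bU_\Phi\bS_\Phi\bV_\Phi^*$ with $\bU_\Phi\in\O^{p\times s}$, every unit $\bu\in\mathrm{range}(\bPhi)$ is $\bu = \bU_\Phi\bw$ with $\|\bw\|_2 = 1$, so $\bPhi^*\bu = \bV_\Phi\bS_\Phi\bw$ and $\|\bPhi^*\bu\|_2^2 = \|\bS_\Phi\bw\|_2^2 \ge \sigma_s^2(\bPhi)$; substituting back gives $\lambda_s(\SigmaX)\ge\sigma_s^2(\bPhi)\lambda_s(\SigmaA)$, which is \eqref{eq:Sighelp1}.

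Since the statement is mostly bookkeeping, I do not expect a genuine obstacle. The two steps that deserve a moment of care are the verification, in the complex and stochastic case, that composing with $\bPhi$ keeps the subgaussian constant within the allotted budget (this is exactly what the $K/2$-versus-$K$ gap is for), and the clean identification of $\lambda_s(\SigmaX)$ for the rank-deficient $\SigmaX$ with the minimum of the Rayleigh quotient restricted to $\mathrm{range}(\bPhi)$, after which the matrix manipulations and the Vandermonde full-rank fact are routine.
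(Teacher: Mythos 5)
Your proposal is correct and follows essentially the same route as the paper: identify $\SigmaX=\bPhi\SigmaA\bPhi^*$ in both the deterministic and stochastic cases, use that $\bPhi$ is a Vandermonde matrix with distinct nodes (hence full column rank $s$) together with $\rank(\SigmaA)=s$ to get $\rank(\SigmaX)=s$, and deduce \eqref{eq:Sighelp1}. The only difference is that the paper simply asserts \eqref{eq:Sighelp1} from this factorization, whereas you spell out the variational argument on $\mathrm{range}(\bPhi)$ and the SVD step $\|\bPhi^*\bu\|_2\ge\sigma_s(\bPhi)$ — a correct and welcome filling-in of a detail the paper leaves implicit.
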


\begin{proof}
	In the case of stochastic amplitudes, $\ba_1,\dots,\ba_n$ are independent and $K/2$-subgaussian. Then $\bx_1,\dots,\bx_n$ are independent and $K$-subgaussian as well and we have    
	$$
	\SigmaX:=\E \bx\bx^*=\bPhi (\E \ba\ba^*) \bPhi^* =  \bPhi \SigmaA \bPhi^*.
	$$
	Since $\bPhi$ is Vandermonde, since $\btheta$ consists of distinct elements in $\R/\Z$, and since $s\leq p$, we see that $\bPhi$ has rank $s$. Due to the assumption that $\SigmaA$ is of full rank, this implies that $\SigmaX$ has rank $s$. This yields \eqref{eq:Sighelp1} for the stochastic case. 
	
	In the case of deterministic amplitudes $\ba_1,\dots,\ba_n$, we see that $\bx_1,\dots,\bx_n$ are deterministic vectors since $\bPhi$ is fixed. We have
	$$
	\SigmaX
	:= \frac 1 n \sum_{k=1}^n \bx_k\bx_k^*
	=\bPhi \Big( \frac 1 n \sum_{k=1}^n \ba_k\ba_k^* \Big) \bPhi^*
	=\bPhi \SigmaA \bPhi^*.
	$$
	Since both $\bPhi$ and $\SigmaA$ have rank $s$, this calculation establishes that $\SigmaX$ has rank $s$ as well. This yields \eqref{eq:Sighelp1} for the deterministic case. 
\end{proof}

\begin{proof}[Proof of \cref{thm:espritrectangular}]
    We start with the proof by verifying the conditions of \cref{thm:subspacerectangle} are fulfilled. To estimate the two quantities $C_\infty$ and $\Cop$ defined in \eqref{eq:Cy_defs}, we use their simplifications in \cref{rem:Cop}. Regardless of $\btheta$, the entries of $\bPhi$ have unit modulus. For deterministic amplitudes, we note that
    \begin{align*}
        C_\infty
        &= \max_{k\in [n]} \|\bPhi \ba_k \ba_k^* \bPhi^*\|_\infty^2 + \nu^2 \\
        &\leq \max_{k\in [n]} \|\ba_k\ba_k^*\|_1 + \nu^2= \gamma_1+\nu^2, \\
        \Cop 
        &= \max_{k\in [n]} \|\bPhi \ba_k\|_2^2 + \nu^2 \\
        &\leq \sigma_1^2(\bPhi) \max_{k\in [n]} \|\ba_k\|_2^2 + \nu^2= \gamma_2^2\sigma_1^2(\bPhi)+\nu^2. 
    \end{align*}
    For stochastic amplitudes, we have the inequalities 
    \begin{align*}
        C_\infty
        &= \|\SigmaX\|_\infty^2 + \nu^2
        = \|\bPhi \SigmaA \bPhi^* \|_\infty^2 + \nu^2 \\
        &\leq \|\SigmaA\|_1 + \nu^2= \gamma_1+\nu^2, \\
        \Cop
        &=\|\SigmaX\|+\nu^2 
        =\| \bPhi \SigmaA \bPhi^*\|+\nu^2 \\
        &\leq \sigma_1^2(\bPhi) \lambda_1(\SigmaA) + \nu^2= \gamma_2^2\sigma_1^2(\bPhi)+\nu^2. 
    \end{align*}
    Using the assumed lower bound for $\lambda$, we see that the conditions of \cref{thm:subspacerectangle} are fulfilled. We use this theorem for $t \log(p)$ in place of $t$ and \eqref{eq:Sighelp1}, to see that whenever $n\geq (t+1)\lambda^2 p \log(p)$, with probability at least $1-p^{-t}$,
    \begin{equation}
    	\label{eq:probboundhelp1}
    	\dist(\bU_n^\square,\bU)
    	\lesssim_K \frac{\gamma_2\sigma_1(\bPhi) + \nu+\lambda}{\sigma_s^2(\bPhi) \lambda_s(\SigmaA)} \sqrt{\frac{(1+t)p \log (p)}{n}}.
    \end{equation}
    Let $\calE$ be the event that inequality \eqref{eq:probboundhelp1} holds, which occurs with probability at least $1-p^{-t}$. Due to the second condition on $n$ in \eqref{eq:ncondition}, { we have under the event $\calE$ that}
    $$
    \dist(\bU_n^\square,\bU)
    \leq \frac{\sigma_s^2(\bPhi) \Delta}{4^{s+2} s^2 p}. 
    $$
    This enables us to use the second statement of \cref{lem:mdsintheta} to get 
    $$
    \md(\btheta_n^\square,\btheta)
    \leq \min \left\{ \frac 12, \, 4^{s+2} \dist(\bU_n^\square,\bU) \right\}. 
    $$
    Combining these observations completes the proof.
\end{proof}

\subsection{Proof of \cref{thm:esprittriangular}}
\label{proof:esprittriangular}
By \cref{lem:model}, we see that \cref{assump:main} holds and $\rank(\SigmaX)=s$. This enables us to use \cref{thm:subspacerectangletriangle}. Let $\bU_n^\triangle\in \O^{p\times s}$ be an orthonormal basis for  the leading left singular space of $\bQ_n^\triangle$. Consider the event $\calE_1$, where 
    \begin{equation}
    \label{eq:trievent1}
	\dist(\bU_n^\triangle,\bU)
	\leq \min \left\{ 1, \, C \sqrt{\frac {\nu^2+\mu^2}{\sigma_s^2(\bPhi)\lambda_s(\SigmaA)}}\sqrt{\frac{\alpha p}{n}} \right\}. 
	\end{equation}
    Also consider the event $\calE_2$,
    \begin{equation}
        \label{eq:trievent2}
        \dist(\bU_n^\triangle,\bU)
	\leq \frac{\sigma_s^2(\bPhi) \Delta}{4^{s+2} s^2 p}. 
    \end{equation}
	
	For part (i), we use \cref{thm:subspacerectangletriangle}  part (i), assumption \eqref{eq:noisecond1}, and inequality \eqref{eq:Sighelp1} to see that event $\calE_1$ holds with probability at least $1-e^{-c\alpha p}$. Under event $\calE_1$ and together with assumption \eqref{eq:ncond1}, we see that event $\calE_2$ holds. This enables us apply the second statement in \cref{lem:mdsintheta} to complete the proof of part (i).

    {
	For part (ii), we note that $\bx:=\bPhi \ba \sim \calC\calN(\0,\SigmaX)$. Letting $C_2$ be the constant in \cref{thm:subspacerectangletriangle}, we observe that 
    \begin{align*}
        C_2\beta \frac{\sum_{j=1}^s \lambda_j(\SigmaX)}{\lambda_s(\SigmaX)} 
        &= C_2 \beta \frac{\tr(\bPhi \SigmaA \bPhi^*)}{\sigma_s^2(\bPhi)\lambda_s(\SigmaA)}
        =C_2 \beta \frac{\tr(\bPhi^*\bPhi\SigmaA)}{\sigma_s^2(\bPhi)\lambda_s(\SigmaA)} \\
        &\leq C_2 \beta \frac{\sqrt{\tr((\bPhi^*\bPhi)^2) \tr(\SigmaA^2)}}{\sigma_s^2(\bPhi)\lambda_s(\SigmaA)} 
        =C_2 \beta \Bigg(\sum_{j=1}^s \kappa_j^4(\bPhi)\Bigg)^{1/2} \Bigg(\sum_{j=1}^s \kappa_j^2(\SigmaA)\Bigg)^{1/2}.
    \end{align*}
    In view of assumption \eqref{eq:ncond2} (where we set $c_1$ to be $C_2$), we can use \cref{thm:subspacerectangletriangle} part (ii).} We also use  \eqref{eq:noisecond1} and inequality \eqref{eq:Sighelp1} in the resulting expression. So with probability at least $1-3e^{-c\min\{\alpha p,\beta r\}}$ event $\calE_1$ holds. Employing condition \eqref{eq:ncond2} again, we see that $\calE_2$ holds whenever $\calE_1$ occurs.

\section*{Acknowledgments}

WL is supported by NSF-DMS Award \#2309602 and a Cycle 55 PSC-CUNY grant.

\bibliographystyle{plain}
\bibliography{MSSreference,SRlimitFourierbib}

\end{document}